\theoremstyle{plain}
\newtheorem{theorem}{Theorem}
\newtheorem{lemma}[theorem]{Lemma}
\newtheorem{corollary}[theorem]{Corollary}
\newtheorem{proposition}[theorem]{Proposition}
\theoremstyle{definition}
\theoremstyle{remark}
\def\leqslant{\le}
\def\bq{\begin{eqnarray}}
\def\eq{\end{eqnarray}}
\def\bqq{\begin{align*}}
\def\eqq{\end{align*}}
\def\eps{\varepsilon}
\renewcommand{\epsilon}{\varepsilon}
\newcommand\1{{\ensuremath {\mathds 1} }}
\def\cF {\mathcal{F}}
\def\cH{\mathcal{H}}
\def\cN{\mathcal{N}}
\def\cV {\mathcal{V}}
\def\cL {\mathcal{L}}
\def\R {\mathbb{R}}
\def\N {\mathcal{N}}
\def\cG {\mathcal{G}}
\def\cE {\mathcal{E}}
\def\cA{\mathcal{A}}
\def\cK{\mathcal{K}}
\def\cV {\mathcal{V}}
\def\R {\mathbb{R}}
\def\N {\mathcal{N}}
\def\cA{\mathcal{A}}
\def\d{\,{\rm d}}
\def\w{\widetilde}
\def\cosh{{\rm cosh}}
\def\sinh{{\rm sinh}}
\newcommand{\dGamma}{{\ensuremath{\rm d}\Gamma}}
\def\bR{\mathbb{R}}
\def\bN{\mathbb{N}}
\def\cU{\mathcal{U}}
\def\cV{\mathcal{V}}
\def\cF{\mathcal{F}}
\def\cG{\mathcal{G}}
\def\cL{\mathcal{L}}
\def\cN{\mathcal{N}}
\def\cR{\mathcal{R}}
\def\cE{\mathcal{E}}
\def\cK{\mathcal{K}}
\def\cH{\mathcal{H}}
\def\eps{\varepsilon}
\def\ph{\varphi}
\def\wt{\widetilde}
\def\indic{\hbox{\raise-2pt \hbox{\indbf 1}}}
\def\*{{\hfill\break\null\hfill\break}}
\def\tende#1{\,\vtop{\ialign{##\crcr\rightarrowfill\crcr
             \noalign{\kern-1pt\nointerlineskip}
             \hskip3.pt${\scriptstyle #1}$\hskip3.pt\crcr}}\,}
\def\otto{\,{\kern-1.truept\leftarrow\kern-5.truept\to\kern-1.truept}\,}
\def\tr{{\rm tr}}
\def\Re{{\rm Re}\,}\def\Im{{\rm Im}\,}
\newcommand{\product}[2]{\ensuremath{\left\langle #1, #2 \right \rangle}}
\newcommand{\im}[0]{\ensuremath{\operatorname{Im}}}
\title[Quantum fluctuations]{Fluctuations of $N$-particle quantum dynamics around the nonlinear Schr\"odinger equation}
\author[C. Brennecke]{Christian Brennecke}
\address{Institute of Mathematics, University of Zurich, 
Winterthurerstrasse 190, 8057 Zurich, Switzerland} 
\email{christian.brennecke@math.uzh.ch}
\author[P.T. Nam]{Phan Th\`anh Nam}
\address{Department of Mathematics, LMU Munich, Theresienstrasse 39, 80333 Munich, Germany} 
\email{nam@math.lmu.de}
\author[M. Napi\'orkowski]{Marcin Napi\'orkowski}
\address{Department of Mathematical Methods in Physics, Faculty of Physics, University of Warsaw,  Pasteura 5, 02-093 Warszawa, Poland}
\email{marcin.napiorkowski@fuw.edu.pl} 
\author[B. Schlein]{Benjamin Schlein}
\address{Institute of Mathematics, University of Zurich, 
Winterthurerstrasse 190, 8057 Zurich, Switzerland} 
\email{benjamin.schlein@math.uzh.ch}
\begin{document}
\date{\today}

\begin{abstract} 
We consider a system of $N$ bosons interacting through a singular  two-body potential scaling with $N$ and having the  form 
$N^{3\beta-1} V (N^\beta x)$, for an arbitrary parameter $\beta \in (0,1)$. We provide 
a norm-approximation for the many-body evolution of initial data exhibiting Bose-Einstein condensation in terms of a cubic 
nonlinear Schr\"odinger equation for the condensate wave function and of a unitary Fock space evolution with a generator quadratic in creation and annihilation operators for the fluctuations. 
\end{abstract}

\maketitle


\section{Introduction}

From first principles of quantum mechanics, the evolution of a system of $N$ identical (spinless) 
bosons in $\R^3$ is governed by the many-body Schr\"odinger equation 
\begin{equation}\label{eq:schr} i\partial_t \Psi_{N,t} = H_N \Psi_{N,t} \end{equation}
where 
$$\Psi_{N,t} \in L_{s}^2(\R^{3N}) = L^2(\R^3)^{\otimes_s N}$$
is the wave function and $H_N$ is the Hamilton operator of the system. We will restrict our attention to Hamilton operators of the form   
\begin{equation} \label{eq:HN}
H_N= \sum\limits_{j = 1}^N -\Delta_{x_j} + \frac{1}{N} \sum\limits_{1 \leqslant j < k \leqslant N} {V_N(x_j-x_k)}
\end{equation}
with $N$-dependent two-body interaction potential 
\begin{equation}  \label{eq:ass-wN}
V_N(x)= N^{3\beta} V(N^\beta x).
\end{equation}
Here $\beta \geq 0$ is a fixed parameter and $V \ge 0$ is a smooth, radially symmetric and compactly supported function on $\bR^3$. 

For $\beta = 0$, (\ref{eq:HN}) is a mean-field Hamiltonian, describing a system of particles experiencing a large number of weak collisions. For $\beta = 1$, on the other hand, (\ref{eq:HN}) corresponds to the Gross-Pitaevskii regime, where collisions are rare but strong. Physically, the Gross-Pitaevskii regime is more relevant for the description of trapped Bose-Einstein condensates. The mean-field regime, on the other hand, is more accessible to mathematical analysis. In this paper, we will study the solution of the Schr\"odinger equation (\ref{eq:schr}) for intermediate regimes with $0 < \beta  < 1$. 

{F}rom the point of view of physics, it is interesting to study the solution of (\ref{eq:schr}) for initial data approximating ground states of trapped systems; this corresponds to experimental settings where the evolution of an initially trapped Bose gas at very low temperature is observed after switching off the external fields. 

It is known since \cite{LS,NRS} that the ground state of a system of trapped bosons interacting through a two-body potential like the one appearing on the r.h.s. of (\ref{eq:HN}) exhibits complete Bose-Einstein condensation (BEC); the one-particle reduced density associated with the ground state wave function $\psi_N \in L^2_s (\bR^{3N})$ converges, as $N \to \infty$, towards the orthogonal projection onto a one-particle orbital $\ph_0 \in L^2 (\bR^3)$. 

Hence, we will be interested in the solution of (\ref{eq:schr}) for initial data 
exhibiting BEC. Despite its linearity, for large $N$ ($N \simeq 10^5 - 10^7$ in typical experiments) it 
is impossible to solve the many-body Schr\"odinger equation (\ref{eq:schr}), neither analytically nor  numerically. It is important, therefore, to find good approximations of 
the solution of (\ref{eq:schr}) that are valid in the limit $N \to \infty$. A first step in this direction was achieved in \cite{ESY1} for $\beta < 1/2$ and in \cite{ESY2,ESY3} for the Gross-Pitaevskii regime with $\beta = 1$ (the same ideas can also be extended to all $\beta \in (0,1)$), where it was proven that, for every fixed time $t \in \bR$, the solution $\psi_{N,t}$ of (\ref{eq:schr}) still exhibits BEC and that its one-particle reduced density converges to the orthogonal projection onto $\ph_t$, given by the solution of the cubic nonlinear Schr\"odinger equation 
\begin{equation}\label{eq:NLS0} i\partial_t \ph_t = - \Delta \ph_t + \sigma |\ph_t|^2 \ph_t \end{equation}
with the initial data $\ph_{t=0} = \ph$ and with coupling constant $\sigma = \int V(x) dx$ for $\beta < 1$ and $\sigma = 8\pi a_0$ for $\beta =1$ (where $a_0$ denotes the scattering length of the unscaled potential $V$).  
The results of \cite{ESY1,ESY2,ESY3} have been revisited and improved further in \cite{P,BDS,CH2,BS}. 
In the simpler case $\beta = 0$, i.e. in the mean-field regime, the convergence of the one-particle reduced density towards the orthogonal projection onto the solution of the nonlinear Hartree equation
\begin{equation}\label{eq:hartree}
i\partial_t \ph_t = -\Delta \ph_t + (V * |\ph_t|^2) \ph_t 
\end{equation}
has been proved in several situations; see, e.g., \cite{Spohn,BGM,EY,AGT,ES,AN1,FKP,FKS,KP,KSS,AH,CH,AFP}.  


In the present paper, we are interested in the norm approximation to the many-body evolution, which is more precise than the convergence of the one-particle reduced density. It requires not only to follow the dynamics of the condensate, but also to take into account the evolution of its excitations.

To describe excitations and their dynamics, it is convenient to switch to a Fock space representation (because the number of excitations, in contrast with the total number of particles, is not preserved). We define the bosonic Fock space 
\[ \cF = \bigoplus_{n \geq 0} L^2_s (\bR^{3n}). \]
For $f \in L^2 (\bR^3)$ and for $\Psi \in \cF$, we define the creation operator $a^* (f)$ and its adjoint, the annihilation operator $a(f)$, through 
\begin{align*}
(a^* (f) \Psi )^{(n)} (x_1,\dots,x_{n})&= \frac{1}{\sqrt{n}} \sum_{j=1}^{n} f(x_j) \Psi^{(n-1)} (x_1,\dots,x_{j-1},x_{j+1},\dots, x_{n+1}), \\
(a(f) \Psi)^{(n)} (x_1,\dots,x_{n}) &= \sqrt{n+1} \int \overline{f(x_n)} \Psi^{(n+1)} (x_1,\dots,x_n, x_{n+1}) \d x_{n+1}
\end{align*} 
Creation and annihilation operators satisfy canonical commutation relations (CCR)
\begin{equation}\label{eq:CCR}
[a(f),a(g)]=[a^*(f),a^*(g)]=0,\quad [a(f), a^* (g)]= \langle f, g \rangle, \quad \forall f,g \in L^2(\R^3).
\end{equation}
It is also convenient to introduce operator-valued distributions $a_x^*$ and $a_x$ so that 
\begin{equation}\label{eq:opval}
a^*(f)=\int_{\R^3}  f(x) a_x^* \d x, \quad a(f)=\int_{\R^3} \overline{f(x)} a_x \d x, \quad \forall f\in  L^2(\R^3).
\end{equation}
Expressed through these operator-valued distributions, the CCR take the form 
$$[a^*_x,a^*_y]=[a_x,a_y]=0, \quad [a_x,a^*_y]=\delta(x-y), \quad \forall x,y\in \R^3.$$
A self-adjoint operator $A$ on the one-particle space $L^2 (\bR^3)$ can be lifted to a Fock space operator by second quantization, defining 
\[ d\Gamma (A) = \bigoplus_{n=0}^\infty \sum_{j=0}^n A_{j} \]
with $A_j$ acting as $A$ on the $j$-th particle and as the identity on the other $(N-1)$ particles. 
If $A$ has the integral kernel $A(x;y)$, $d\Gamma (A)$ can be expressed  as 
\[ d\Gamma (A) = \int A(x;y) a_x^* a_y \, dx dy \]
For example, the number of particles operator is given by 
\[ \cN = d\Gamma ( 1) = \int dx a_x^* a_x \]

On the Fock space $\cF$, it is instructive to study the time-evolution of coherent initial data, having the form 
\begin{equation}\label{eq:coh} W(\sqrt{N} \ph) \Omega = e^{-N/2} \left\{ 1, \ph , \frac{\ph^{\otimes 2}}{\sqrt{2!}} , \dots \right\} \end{equation} 
for $\ph \in L^2 (\bR^3)$ with $\| \ph \| = 1$. Here $\Omega = \{ 1, 0 ,0, \dots \}$ is the Fock space vacuum and, for any $f \in L^2 (\bR^3)$, $W(f) = \exp (a^* (f) - a(f))$ is a Weyl operator. The normalization of $\ph$ guarantees that
\[ \langle W(\sqrt{N} \ph) \Omega, \cN  W(\sqrt{N} \ph) \Omega \rangle = N. \]

The time-evolution of initial coherent states of the form (\ref{eq:coh}), generated by the natural extension of the Hamiltonian (\ref{eq:HN}) to the Fock space $\cF$
\begin{equation}\label{eq:Hbeta} \cH_N = \int dx a_x^* (-\Delta_x) a_x + \frac{1}{2N} \int \d x \d y \, V_N (x-y) a_x^* a_y^* a_y a_x =: \cK + \cV_N \end{equation}
has been studied for $\beta = 0$ in  \cite{Hepp,GV}, where it was proven that 
\begin{equation}\label{eq:Hepp} \left\| e^{-i\cH_N t} W(\sqrt{N} \ph) \Omega - W(\sqrt{N} \ph_t)  \, \cU_{2,\text{mf}}^\text{f} \, (t;0) \Omega \right\|  \to 0 \end{equation}
as $N \to \infty$. Here $\ph_t$ denotes the solution of the Hartree equation (\ref{eq:hartree}) and $\cU_{2,\text{mf}}^f (t;s)$ is a unitary dynamics on $\cF$ with a time-dependent generator that is quadratic in creation and annihilation operators \footnote{In the notation for $\cU_{2,\text{mf}}^\text{f}$, the subscript $\text{mf}$ and the superscript $\text{f}$ refer to the fact that (\ref{eq:Hepp}) holds in the mean-field regime with $\beta = 0$ for Fock space initial data}. This implies that $\cU^\text{f}_{2,\text{mf}} (t;s)$ acts on creation and annihilation operators as a time-dependent Bogoliubov transformation $\Theta_\text{mf} (t;s) : L^2 (\bR^3) \oplus L^2 (\bR^3) \to L^2 (\bR^3) \oplus L^2 (\bR^3)$ having the form
\begin{equation}\label{eq:Theta} \Theta_\text{mf} (t;s) = \left( \begin{array}{ll} U_\text{mf} (t;s) & \overline{V_\text{mf} (t;s)} \\ V_\text{mf} (t;s) & \overline{U_\text{mf} (t;s)} \end{array} \right). \end{equation}
In other words, for any $f \in L^2 (\bR^3)$ and all $t,s \in \bR$, we find 
\begin{equation}\label{eq:U2-action} \cU_{2,\text{mf}}^\text{f} (t;s)^* a(f) \, \cU_{2,\text{mf}}^\text{f} (t;s) = a (U_\text{mf} (t;s) f) + a^* (V_\text{mf} (t;s) \bar{f}). \end{equation}
The time-dependent Bogoliubov transformation $\Theta_\text{mf}$ can be determined solving the partial differential equation 
\begin{equation}\label{eq:dttheta} i \partial_t \Theta_\text{mf} (t;s) = \cA_\text{mf} (t) \Theta_\text{mf} (t;s) \end{equation}
with initial condition $\Theta_\text{mf} (s;s) = 1$ and with generator   
\[ \cA_\text{mf} (t) = \left( \begin{array}{ll} D (t) & - \overline{B (t)} \\ B (t) & - \overline{D (t)} \end{array} \right) \]
where
\[ \begin{split} D(t) f &= -\Delta f + (V*|\ph_t|^2) f + (V*\overline{\ph}_t f) \ph_t  \\
B(t) f &= (V*\overline{\ph}_t f) \overline{\ph}_t. \end{split} \]
Thus, (\ref{eq:Hepp}) allows us to describe the very complex many-body dynamics generated on $\cF$ by the Hamiltonian (\ref{eq:Hbeta}) by solving the equation (\ref{eq:hartree}) for the condensate wave function and the equation (\ref{eq:dttheta}) for the Bogoliubov transformation $\Theta_\text{mf} (t;s)$ describing the evolution of fluctuations around the condensate. 

The ideas of \cite{Hepp,GV} have been further developed in \cite{RS} and they have been used to prove a central limit theorem in \cite{BKS,BSS}. In \cite{GMM1,GMM2}, norm approximations for the many-body dynamics in Fock space has been derived using different approaches. 

To obtain a norm approximation for the mean-field time-evolution of $N$-particle initial data exhibiting BEC in a state with wave function $\ph \in L^2 (\bR^3)$, it is very convenient to use a unitary map introduced in \cite{LNSS}, mapping $L^2_s (\bR^{3N})$ into the 
truncated Fock space 
\begin{equation}\label{eq:trunc-F} \cF_{\perp \ph}^{\leq N} = \bigoplus_{j=0}^N L^2_{\perp \ph} (\bR^3)^{\otimes_s N} \end{equation}
constructed over the orthogonal complement $L^2_{\perp \ph} (\bR^3)$ of the one-dimensional space spanned by the condensate wave function $\ph$. The space (\ref{eq:trunc-F}) provides the natural setting to describe orthogonal excitations of the condensate (whose number can fluctuate). The idea here is that every $\psi_N \in L^2_s (\bR^{3N})$ can be written uniquely as 
\[ \psi_N = \alpha_0 \ph^{\otimes N} + \alpha_1 \otimes_s \ph^{\otimes (N-1)} + \dots + \alpha_N \]
where $\alpha_j \in L^2_{\perp \ph} (\bR^3)^{\otimes_s j}$ for all $j=0,\dots , N$ (for $j=0$, $\alpha_0 \in \mathbb{C}$). Therefore, we can define $U_\ph : L^2_s (\bR^{3N}) \to \cF_{\perp \ph}^{\leq N}$ by setting $U_\ph \psi_N = \{ \alpha_0, \dots , \alpha_N \}$. By orthogonality, it is easy to check that $U_\ph$ is a unitary map. In terms of creation and annihilation operators, it is given by 
\begin{align}\label{eq:U-def}
 U_{\varphi} = \bigoplus_{n=0}^N (1-|\varphi\rangle \langle \varphi|)^{\otimes n} \frac{a (\varphi)^{N-n}}{\sqrt{(N-n)!}}, \quad  U_{\varphi}^* = \sum_{n=0}^N \frac{a^* (\varphi)^{N-n}}{\sqrt{(N-n)!}}.
\end{align}
The actions of $U_\varphi$ on creation and annihilation operators follow the simple rules:
 \begin{align}\label{eq:U-rules}
    U_{\varphi} a^* (\varphi) a (\varphi) U_{\varphi}^*  & = N-\cN, \\
    U_{\varphi} a^*(f)a (\varphi) U_{\varphi}^* &= a^*(f) \sqrt{N-\cN},\\
    U_{\varphi} a^* (\varphi) a(g) U_{\varphi}^* &= \sqrt{N-\cN} a(g),\\
    U_{\varphi} a^* (f) a(g)U_{\varphi}^* &= a^*(f)a(g)
    \end{align}
for all $ f,g\in L^2_{\bot \varphi} (\bR^3)$. Heuristically, $U_{\ph}$ factors out the condensate described by the wave function $\ph$ and it allows us to focus on its orthogonal excitations. 

The unitary map $U_\ph$ was used in \cite{LNS} to obtain a norm approximation for the many-body evolution in the mean-field regime with $\beta =0$ (see \cite{MPP} for a similar result). For $N$-particle initial data of the form $\psi_N = U^*_\ph \xi_N$ with $\xi_N \in \cF_{\perp \ph}^{\leq N}$ having a finite expectation for the number of particles and for the kinetic energy operator, it was proven there that the solution of the many-body Schr\"odinger equation (\ref{eq:schr}) is such that  
\begin{equation}\label{eq:mf-LNS} \left\|  U_{\ph_t} \psi_{N,t} - \cU_{2,\text{mf}} (t;0) \xi_N  
\right\|  \to 0 \end{equation}
as $N \to \infty$, where, similarly to (\ref{eq:Hepp}), $\ph_t$ is the solution of (\ref{eq:hartree}) and $\cU_{2,\text{mf}} (t;s)$ is a unitary evolution on the Fock space, with a time-dependent generator quadratic in creation and annihilation operators (in fact $\cU_{2,\text{mf}}$ is very similar to the unitary evolution $\cU_{2,\text{mf}}^\text{f}$ in (\ref{eq:Hepp}), emerging in the mean field limit for coherent initial data on the Fock space). Eq. (\ref{eq:mf-LNS}) is the analogous of (\ref{eq:Hepp}) for $N$-particle initial data exhibiting BEC; it provides a norm-approximation of the many-body evolution in the mean-field regime in terms of the Hartree equation (\ref{eq:hartree}) and of a time-dependent Bogoliubov transformation very similar to (\ref{eq:Theta}). 

The convergence (\ref{eq:mf-LNS}) has been extended to intermediate regimes with $\beta<1/3$ in \cite{NN1} and with $\beta < 1/2$ in \cite{NN2}. Before that, a norm approximation similar to (\ref{eq:Hepp}) for the evolution of coherent initial data on the Fock space  has been obtained with $\beta<1/3$ in \cite{GM1} and with $\beta  <1/2$ in  \cite{K}. A heuristic argument from \cite{K} also shows that (\ref{eq:Hepp}) or (\ref{eq:mf-LNS}) cannot hold true for $\beta > 1/2$.

 In regimes with $\beta > 1/2$ the short scale correlation structure developed by the solution of the many-body Schr\"odinger equation cannot be appropriately described by a time-dependent Bogoliubov transformation satisfying an equation of the form (\ref{eq:dttheta}). To take into account correlations, it is useful to consider the ground state of the Neumann problem  
\begin{equation}\label{eq:Neum} \left[ -\Delta + \frac{1}{2N} V_N \right] f_{N}  = \lambda_{N} f_{N} \end{equation}
on the ball $|x| \leq \ell$, for a fixed $\ell > 0$. We fix $f_{N} (x) = 1$, for $|x| = \ell$, and we extend $f_{N}$ to $\bR^3$ requiring that $f_{N} (x) = 1$ for all $|x| \geq \ell$. Because of the scaling of the potential $V_N$, the scattering process takes place in the region $|x| \ll 1$; for this reason, the precise choice of $\ell$ is not very important, as long as $\ell$ is of order one (nevertheless, $\lambda_N$ and $f_N$ depend on $\ell$, a dependence that is kept implicit in our notation). It is also useful to define $\omega_{N} = 1 - f_N$. For $N$ sufficiently large, we have 
(see \cite[Lemma 2.1]{BCS})
\[ \lambda_N = \frac{3b_0}{8\pi N\ell^3} + O(N^{\beta-2}) \]
where $b_0=\int V(x)\,dx$, and, for all $x\in \R^3$,
\begin{equation} \label{eq:fN-bound}
0\le \omega_N(x) \le \frac{C}{N(|x|+N^{-\beta})},\;\hspace{0.5cm} |\nabla \omega_{N}(x)| \leq \frac{C}{N(|x|+N^{-\beta})^2} 
\end{equation}
for a constant $C$, independent of $N$. 

The solution of (\ref{eq:Neum}) can be used, first of all, to give a better approximation of the evolution of the condensate wave function, replacing the solution of the limiting nonlinear Schr\"odinger equation (\ref{eq:NLS0}) with the solution of the modified, $N$-dependent, Hartree equation 
\begin{equation}\label{eq:NLSN} 
i\partial \varphi_{N,t} = -\Delta \varphi_{N,t} + (V_N f_{N} *|\varphi_{N,t}|^2 ) \varphi_{N,t}
\end{equation}
with initial data $\varphi_{N,0} = \ph_0$ describing the condensate at time $t=0$. Standard 
arguments in the analysis of dispersive partial differential equations imply that (\ref{eq:NLSN}) is globally well-posed and that it propagates regularity; in particular, if $\varphi_0 \in H^4(\R^3)$, 
then \cite[Appendix B]{BCS}
\begin{equation}\label{eq:phi-bds} 
\| \varphi_{N,t} \|_{H^1} \le C, \quad \| \varphi_{N,t} \|_{H^4} \leq C e^{Ct}, \quad \|\partial_t \varphi_{N,t}\|_{H^2} \le C e^{Ct},  \quad \forall t>0.
\end{equation}

Furthermore, (\ref{eq:Neum}) can be used to describe correlations among particles. To this end, let 
\begin{equation}\label{eq:Bog-trans} T_{N,t} = \exp \left( \frac{1}{2} \int dx dy \, \left[  k_{N,t}(x,y) a_x a_y - \text{h.c.}  \right]  \right) \end{equation}
with the integral kernel
\begin{equation}\label{eq:kNt} 
k_{N,t} (x;y) = (Q_{N,t} \otimes Q_{N,t}) \left[- N \omega_N (x-y) \ph^2_{N,t} ((x+y)/2) \right] \end{equation}
where $Q_{N,t} = 1 - |\ph_{N,t} \rangle \langle \ph_{N,t}|$ is the orthogonal projection onto the orthogonal complement of the solution of the modified Hartree equation (\ref{eq:NLSN}). 

Let us briefly explain the choice (\ref{eq:Bog-trans}), (\ref{eq:kNt}). Since $T_{N,t}$ aims at generating correlations, it is natural to define its kernel $k_{N,t}$ through the solution of (\ref{eq:Neum}). In particular, the choice (\ref{eq:Bog-trans}) guarantees a crucial cancellation in the generator of the fluctuation dynamics, defined in (\ref{def:wcG}), which allows us to show the bounds (\ref{eq:G2Nt-est}) in Prop. \ref{lem:wG}. The cancellation is hidden in Prop. \ref{lem:cLj} and leads to the estimates (\ref{eq:bndcE}). It combines the quadratic term on the fourth line on the r.h.s. of (\ref{cLj}) with contributions arising from conjugation of the kinetic energy $d\Gamma (-\Delta)$ and of the quartic interaction on the last line of (\ref{cLj}) with $T_{N,t}$, reconstructing (\ref{eq:Neum}). 

It is important to observe that (\ref{eq:kNt}) is the integral kernel of a Hilbert-Schmidt operator. Abusing notation and denoting by $k_{N,t}$ both the Hilbert-Schmidt operator and its integral kernel, we easily find (using (\ref{eq:fN-bound}) and (\ref{eq:phi-bds})) 
\begin{equation}\label{eq:kNt-bd} \begin{split} \| k_{N,t} \|_\text{HS} &= \| k_{N,t} \|_2 \leq C  \\
\| \nabla k_{N,t} \|_\text{HS} &= \| k_{N,t} \nabla \|_{\text{HS}} = \| \nabla_1 k_{N,t} \|_2 = \| \nabla_2 k_{N,t} \|_2 \leq C N^{\beta/2}.
\end{split} \end{equation}
These bounds reflect the idea that, through $T_{N,t}$, we only produce a bounded number of excitations, causing however a large change in the energy. 

Notice that the action of the Bogoliubov transformation (\ref{eq:Bog-trans}) on creation and annihilation operators is explicit. For any $f \in L^2_{\perp \ph_{N,t}} (\bR^3)$, we find 
\[ \begin{split}  T_{N,t} a (f) T_{N,t}^* &= a (\cosh_{k_{N,t}} (f)) + a^* (\sinh_{k_{N,t}} (\bar{f})) 
\\  T_{N,t} a^* (f) T_{N,t}^* &= a^* (\cosh_{k_{N,t}} (f)) + a (\sinh_{k_{N,t}} (\bar{f})) \end{split} \]
where $\cosh_{k_{N,t}}$ and $\sinh_{k_{N,t}}$ are the linear operators defined by the absolutely convergent series 
\begin{equation}\label{eq:chsh} \begin{split} \cosh_{k_{N,t}} &= \sum_{n \geq 0} \frac{1}{(2n)!} \,(k_{N,t}  \overline{k}_{N,t})^n \, , \; \qquad    \sinh_{k_{N,t}} = \sum_{n \geq 0} \frac{1}{(2n+1)!} (k_{N,t} \overline{k}_{N,t})^n k_{N,t}\, . \end{split} \end{equation}

Using the Bogoliubov transformation $T_{N,t}$ to implement correlations, one can construct norm approximations for the many-body evolution that are valid also in regimes 
with $\beta > 1/2$. For Fock space initial data, it was recently proven in \cite{BCS} that, for every $0 < \beta < 1$ and for every $N$ large enough, there exists a unitary evolution $\cU_{2,N}^\beta$ with a time-dependent generator quadratic in creation and annihilation operators, such that  
\[ \left\| e^{-i \cH_N t} W(\sqrt{N} \ph)  T^*_{N,0} \Omega - W (\sqrt{N} \ph_{N,t}) T^*_{N,t} \, \cU_{2,N}^\text{f} (t;0) \Omega \right\| \to 0 \]
as $N \to \infty$ (to be more precise, in \cite{BCS}, the kernel $k_{N,t}$ was chosen slightly different, without the orthogonal projection $(Q_{N,t} \otimes Q_{N,t})$). In other words, for initial data of the form $W(\sqrt{N} \ph) T_{N,0} \Omega$, describing an approximate coherent state, modified by the Bogoliubov transformation $T_{N,0}$ to take into account correlations, the full many-body time-evolution can be approximated in terms of the modified $N$-dependent Hartree equation (\ref{eq:NLSN}) (describing the dynamics of the condensate), of the Bogoliubov transformation (\ref{eq:Bog-trans}) (generating the correlation structure) and of the quadratic evolution $\cU_{2,N}^\text{f}$ (which, similarly to (\ref{eq:U2-action}), also acts as a time-dependent Bogoliubov transformation). Using a related approach, a similar result has been established in \cite{GM2} 
for $\beta < 2/3$. 

Our aim in the present paper is to obtain a norm-approximation for the many-body evolution of $N$-particle initial data exhibiting BEC for the whole range of parameters $0<\beta<1$. To reach this goal, we will combine ideas from \cite{LNS} and \cite{NN1,NN2} with ideas from \cite{BCS}, in particular, with the idea of using Bogoliubov transformations of the form (\ref{eq:Bog-trans}) to implement correlations. To state our main result, we 
define the unitary dynamics $\cU_{2,N} (t;s)$ as the two-parameter unitary group on the Fock space $\cF$ satisfying 
\begin{equation}\label{eq:U2Nts-def} i\partial_t \cU_{2,N} (t;s) = \cG_{2,N,t} \,  \cU_{2,N} (t;s), \qquad \cU_{2,N} (s;s) = \1_{\cF} \end{equation}
with the time-dependent quadratic generator $\cG_{2,N,t}$ given by 
\begin{equation} \label{eq:cG2N}
\begin{split}
\cG_{2,N,t} & = \eta_N(t) + (i\partial_t T_{N,t})T_{N,t}^*  +\cG^{\cV}_{2,N,t}+\cG^{\cK}_{2,N,t}+\cG^{\lambda_N}_{2,N,t}
\end{split}
\end{equation}
with the phase $\eta_N(t)$ defined by 
    \begin{equation} \label{eq:etaN}
    \begin{split}
    \eta_N(t) & = \frac{N+1}{2} \product{\varphi_{N,t}}{ [V_N (1-2f_N) \ast|\varphi_{N,t}|^2 ]\varphi_{N,t}}-\mu_{N}(t) \\
    &\hspace{0.5cm}+ \int dx\; \big(V_N\ast |\varphi_{N,t}|^2\big)(x)\|\text{sh}_x\|^2  + \int dx \; \langle\nabla_x \text{sh}_x,\nabla_x \text{sh}_x\rangle \\
    &\hspace{0.5cm}+ \int dx dy\; K_{1,N,t}(x;y)\langle \text{sh}_x, \text{sh}_y\rangle  + \Re \int dx dy\; K_{2,N,t}(x;y)\; \langle \text{sh}_x, \text{ch}_y\rangle\\
    &\hspace{0.5cm} +\frac{1}{2N}\int dx dy\; V_N(x-y)\big|\big\langle \text{sh}_x - \varphi_{N,t}(x)\text{sh}(\varphi_{N,t}), \text{ch}_y - \varphi_{N,t}(y)\text{ch}(\varphi_{N,t})\big\rangle \big|^2\end{split} \end{equation}
with $\mu_{N}(t) = \product{\varphi_{N,t}}{ [(V_N \omega_{N}) \ast|\varphi_{N,t}|^2 ]\varphi_{N,t}}$  
and where the operators $\cG^{\cV}_{2,N,t}$, $\cG^{\lambda_N}_{2,N,t}$ and $\cG^{\cK}_{2,N,t}$ are given by 
		\begin{equation}\label{eq:defG2NtVlambda}\begin{split}
        \cG^{\cV}_{2,N,t}=&\; \int dx\; \big(V_N\ast |\varphi_{N,t}|^2\big)(x) \big[ a^*(\text{ch}_x)a(\text{ch}_x) + a^*(\text{ch}_x)a^*(\text{sh}_x) \\
        &\hspace{2cm} + a(\text{ch}_x)a(\text{sh}_x)+a^*(\text{sh}_x)a(\text{sh}_x)\big]  \\
        &\;+ \int dxdy\; K_{1,N,t}(x;y) \big[ a^*(\text{ch}_x)a(\text{ch}_y) + a^*(\text{ch}_x)a^*(\text{sh}_y) \\
        &\hspace{2cm} + a(\text{ch}_y)a(\text{sh}_x)+a^*(\text{sh}_y)a(\text{sh}_x)\big]  \\
        &\;+  \frac12\int dxdy\; K_{2,N,t}(x;y)\big[ a^*_x a^*(\text{p}_y) + a^*_x a(\text{sh}_y)
        +a^*(\text{p}_x) a^*(\text{p}_y) + a^*(\text{p}_x) a(\text{sh}_y)\\
        &\hspace{2cm}+a^*_y a^*(\text{p}_x)+a^*_y a(\text{sh}_x) +a^*(\text{p}_y) a(\text{sh}_x)+ a(\text{sh}_x)a(\text{sh}_y)  +\text{h.c.} \big] \\
        \;&+\frac12 \Big[ \langle \varphi_{N,t}, V_N\ast|\varphi_{N,t}|^2\varphi_{N,t}  \rangle a^*(\varphi_{N,t})a^*(\varphi_{N,t}) \\
        \;&\hspace{2cm}- 2a^*(\varphi_{N,t})a^*\big([V_N\ast|\varphi_{N,t}|^2]\varphi_{N,t}\big) +\text{h.c.}\Big], \\
        \cG^{\lambda_N}_{2,N,t}=&\; N\lambda_N \int dx dy\; f_N(x-y)\chi(|x-y|\leq \ell)\varphi_{N,t}^2((x+y)/2) a_x^*a_y^* +\text{h.c.} \\
        \end{split}\end{equation}
and
		\begin{equation}\label{eq:defG2NtK}\begin{split}
        \cG^{\cK}_{2,N,t}=&\; \int dx\; \big[a_x^*(-\Delta_x)a_x+a^*_x a(-\Delta_x\text{p}_x)+a^*_x a^*(-\Delta_x\text{v}_x)+ a^*_x a^*(-\Delta_x\text{r}_x)   \\
        &\; \hspace{1cm}+ a^*(-\Delta_x \text{p}_x) a(\text{ch}_x) + a^*(-\Delta_x \text{p}_x) a^*(\text{sh}_x)+ a(-\Delta_x \text{r}_x) a_x \\
        &\; \hspace{1cm}+a(-\Delta_x\text{v}_x)a_x +  a( \text{sh}_x) a(-\Delta_x\text{p}_x)  + a^*(-\Delta_x \text{r}_x)a(\text{k}_x) \\
        &\;\hspace{1cm}+ a^*(-\Delta_x \text{r}_x)a(\text{r}_x)+ a^*(\text{k}_x) a(-\Delta_x \text{r}_x)+ a^*\big(\nabla_x\text{k}_x\big) a\big(\nabla_x \text{k}_x\big) \big] \\
        & +\frac12 \int dxdy\; N\omega_N(x-y)  \big[\varphi_{N,t}((x+y)/2) \Delta\varphi_{N,t}((x+y)/2) \\ &\; \hspace{3cm} +\nabla\varphi_{N,t}((x+y)/2)\cdot\nabla\varphi_{N,t}((x+y)/2) \big] a_x^*a_y^*  + \text{h.c.} 
        \end{split}\end{equation}
Here we have introduced the notation
\begin{align*}
K_{1,N,t}&=Q_{N,t}\w K_{1,N,t}Q_{N,t}\\
K_{2,N,t}&=Q_{N,t}\otimes Q_{N,t} \w K_{2,N,t}
\end{align*}
where $\w K_{1,N,t}$ is the operator on $L^2(\R^3)$ with integral kernel 
$$\w K_{1,N,t}(x,y)=\varphi_{N,t}(x) V_N(x-y)\overline{\varphi_{N,t}(y)}$$
and $\w K_{2,N,t}$ is a function in $L^2(\R^3\times \R^3)$:
$$\w K_{2,N,t}(x,y)=\varphi_{N,t}(x)V_N(x-y)\varphi_{N,t}(y).$$

Finally, we also use the notation $j_x(\cdot):= j(\cdot;x)$ for any $j\in L^2(\mathbb{R}^3\times \mathbb{R}^3)$. Moreover, with (\ref{eq:chsh}), we set \[ \text{sh} = \sinh_{k_{N,t}}, \quad \quad \text{ch} = \cosh_{k_{N,t}} \]  and we decompose $\text{sh}=\text{k} + \text{r}$ and $\text{ch}=\1 + \text{p}$ as well as
		\[\begin{split}
        k_{N,t}(x;y)&= -N\omega_N(x-y)\varphi_{N,t}^2((x+y)/2)+\text{v}(x;y);\hspace{0.5cm}\forall x,y\in\mathbb{R}^3.
         \end{split}\]

We are now ready to state our first main result, providing a norm-approximation for the many-body evolution of $N$-particle initial data exhibiting BEC. To this end, let us first collect some conditions that will be required throughout the paper.
 
\medskip

{\it Hypothesis A:} We assume that $0 < \beta  <1$. We suppose, moreover, the interaction potential $V$ to be smooth, radially symmetric, compactly supported and pointwise non-negative. Furthermore, we choose $f_N$ to be the solution of the Neumann problem (\ref{eq:Neum}) on the ball $|x| \leq \ell$, for a sufficiently small\footnote{The smallness of $\ell$ is used because it implies that the kernel $k_{N,t}$ introduced in (\ref{eq:kNt}) has a small Hilbert-Schmidt norm; this in turn implies that conjugation with the Bogoliubov transformation  $T_{N,t}$ produces only small changes in the number of particles operator; see Proposition \ref{lem:Bog-N}.} (but fixed, independent of $N$) parameter $\ell > 0$. Finally, we let $\ph_{N,t}$ be the solution of the $N$-dependent nonlinear Hartree equation (\ref{eq:NLSN}) with initial data $\ph_0 \in H^4 (\bR^3)$. 

\medskip

{\it Remark:} We need $V$ to have compact support to study the solution of (\ref{eq:Neum}) and to establish the bounds (\ref{eq:fN-bound}), following \cite{BCS}. For the same reason (but also for the many-body analysis), we need some smoothness of $V$. The assumption $V \in L^3 (\bR^3)$ is sufficient for our purposes; we do not aim at optimal conditions, here. The assumption $\ph_0 \in H^4 (\bR^3)$ allows us to show the bounds (\ref{eq:phi-bds}) for the solution of (\ref{eq:NLSN}); these estimates play an important role in the analysis of the many-body dynamics (in particular, in the proof of Prop. \ref{lem:wG}, where we need control of $\| \partial_t \ph_{N,t} \|_\infty$). One may be able to partially relax this assumption by using space-time norms; also here, we do not aim at optimal conditions. 

\begin{theorem}\label{thm:main1}
Assume that Hypothesis A holds true. Let $\xi_N \in \cF_{\perp \ph_0}$ with $\| \xi_N \| = 1$ and 
\begin{equation}\label{eq:ass-xi} \langle \xi_{N} , (\cK+ \cN) \xi_{N} \rangle \leq C. \end{equation}
Let $\Psi_{N,t}$ be the solution of the Schr\"odinger equation (\ref{eq:schr}) with initial data
\begin{equation}\label{eq:psiN0-thms}
\Psi_{N,0} = U_{\varphi_{0}}^* \1^{\le N} T_{N,0}^* \xi_{N} 
\end{equation}
and let $\cU_{2,N} (t;s)$ be the unitary dynamics on $\cF$ defined in  (\ref{eq:U2Nts-def}). Then, for all $\alpha < \min (\beta/2, (1-\beta)/2)$, there exists a constant $C > 0$ such that
\begin{align} \label{eq:mainN-intro}
\left\| U_{\ph_{N,t}} \Psi_{N,t} - T^*_{N,t} \, \cU_{2,N} (t;0) \, \xi_{N} \right\|^2 \le C N^{-\alpha} \exp(C\exp(C|t|)) 
\end{align}
for all $N$ sufficiently large and all $t \in \bR$. 
\end{theorem}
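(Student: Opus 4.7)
The plan is to reformulate the bound (\ref{eq:mainN-intro}) as a comparison of two Fock-space evolutions, and then close it by a Gronwall argument. First, I would introduce the \emph{full fluctuation dynamics}
\[ \cU_N(t;s) := U_{\ph_{N,t}} \, e^{-i\cH_N (t-s)} \, U^*_{\ph_{N,s}} \]
acting on $\cF_{\perp \ph_{N,s}}^{\leq N}$; by the action rules (\ref{eq:U-rules}) and the modified Hartree equation (\ref{eq:NLSN}), it satisfies $i\partial_t \cU_N(t;s) = \cL_{N,t}\, \cU_N(t;s)$ with an explicit (but complicated) generator $\cL_{N,t}$ containing terms of degree zero through four in creation/annihilation operators on $L^2_{\perp \ph_{N,t}}$. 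Conjugation with the Bogoliubov transformation $T_{N,t}$ produces the \emph{renormalized} dynamics
\[ \widetilde{\cU}_N(t;s) := T_{N,t}\,\cU_N(t;s)\, T^*_{N,s}, \qquad \widetilde{\cL}_{N,t} := (i\partial_t T_{N,t})T^*_{N,t} + T_{N,t} \cL_{N,t} T^*_{N,t}. \]
The choice of $k_{N,t}$ in (\ref{eq:kNt}) and the Neumann identity (\ref{eq:Neum}) are designed so that, after conjugation, the singular parts of the kinetic term and of the quartic interaction combine into precisely the quadratic object $\cG_{2,N,t}$ of (\ref{eq:cG2N})--(\ref{eq:defG2NtK}), leaving a remainder $\cE_{N,t} := \widetilde{\cL}_{N,t} - \cG_{2,N,t}$ that is subleading in $N$.

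Next, a direct computation using (\ref{eq:U-rules}) shows
\[ T_{N,t} U_{\ph_{N,t}} \Psi_{N,t} = \widetilde{\cU}_N(t;0)\, \xi_{N,0}^\sharp,\qquad \xi_{N,0}^\sharp := T_{N,0} \1^{\leq N} T^*_{N,0} \xi_N, \]
so by unitarity of $T_{N,t}$ the claim (\ref{eq:mainN-intro}) is equivalent to controlling $\|\widetilde{\cU}_N(t;0)\xi_{N,0}^\sharp - \cU_{2,N}(t;0)\xi_N\|$. Since $T^*_{N,0}\xi_N$ has bounded $\cN$-moments (by (\ref{eq:ass-xi}), (\ref{eq:kNt-bd}), and standard Bogoliubov estimates), the initial discrepancy $\|\xi_{N,0}^\sharp - \xi_N\|$ is $O(N^{-M})$ for arbitrary $M$ via Markov's inequality. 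Setting $\Xi_t := \widetilde{\cU}_N(t;0)\xi_{N,0}^\sharp - \cU_{2,N}(t;0)\xi_N$ and using the two defining equations,
\[ \partial_t \|\Xi_t\|^2 = 2\,\Im\, \langle \Xi_t,\, \cE_{N,t}\, \cU_{2,N}(t;0)\xi_N \rangle, \]
which reduces the task to controlling matrix elements of $\cE_{N,t}$ on the quadratic evolution.

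The remainder $\cE_{N,t}$ collects cubic terms of $\widetilde{\cL}_{N,t}$ of the schematic form $N^{-1/2}\int V_N(x-y)\,a^*(\text{ch}_x)a^*(\text{ch}_y)a(\text{ch}_x)\ph_{N,t}(y)\,dx dy$ (and analogues with $\text{sh}$'s), quartic pieces bounded by $N^{-1}\cV_N$, and commutator remainders from $(i\partial_t T_{N,t})T^*_{N,t}$. Each admits an operator bound of the schematic form $\pm\cE_{N,t} \leq C N^{-\alpha}(\cK + \cN + 1)^{q}$ for some finite $q$, obtained via Cauchy-Schwarz combined with (\ref{eq:kNt-bd}) and (\ref{eq:fN-bound}); the cubic terms split between a factor absorbed by $\cK$ (weight $\sim N^{\beta/2}$ coming from $\|\nabla k_{N,t}\|$) and a factor absorbed by $\cN$ (weight $\sim N^{-(1-\beta)/2}$ coming from $N^{-1/2}\|V_N\|^{1/2}$-type estimates), which drives the constraint $\alpha < \min(\beta/2, (1-\beta)/2)$. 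Closing Gronwall further requires a priori moment bounds
\[ \langle \cU_{2,N}(t;0)\xi_N,\, (\cK+\cN+1)^q\, \cU_{2,N}(t;0)\xi_N\rangle \leq C\exp(C|t|), \]
and analogous bounds on $\widetilde{\cU}_N(t;0)\xi_{N,0}^\sharp$, obtained by differentiating in time and using the cancellation already announced in Prop. \ref{lem:cLj} so that $\widetilde{\cL}_{N,t}$ is controlled by $\cK+\cN+1$ modulo $N^{-\alpha}$ errors. Double-exponentiation of the resulting differential inequality yields (\ref{eq:mainN-intro}).

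The main obstacle is precisely the propagation of higher moments of $\cK+\cN$ through the renormalized dynamics $\widetilde{\cU}_N$: a naive commutator bound $[\widetilde{\cL}_{N,t},\,\cK+\cN]$ loses a factor $N^\beta$ from the quartic term, which is fatal unless the Neumann cancellation is reactivated at each moment level. The careful combinatorics — pairing the conjugated quartic interaction with the conjugated kinetic term to extract the $\lambda_N$-piece of $\cG_{2,N,t}$, controlling residual commutators with $T_{N,t}$ via (\ref{eq:kNt-bd}), and collecting all subleading contributions into a clean $(\cK+\cN+1)^q$-bound — is by far the most demanding step and underlies the key a priori estimates of Prop. \ref{lem:wG}.
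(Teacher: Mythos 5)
Your overall architecture (factor out the condensate with $U_{\ph_{N,t}}$, renormalize with $T_{N,t}$, compare with $\cU_{2,N}$ via Gr\"onwall, with the Neumann cancellation producing $\cG_{2,N,t}$) matches the paper's, but two essential devices are missing, and without them the argument as written does not close. First, the conjugated generator $\widetilde{\cL}_{N,t} = (i\partial_t T_{N,t})T_{N,t}^* + T_{N,t}\cL_{N,t}T_{N,t}^*$ is not well defined: $\cL_{N,t}$ contains the factors $\sqrt{N-\cN}$ and $\sqrt{(N-\cN)(N-\cN-1)}$ and therefore only makes sense on $\cF^{\leq N}_{\perp\ph_{N,t}}$, while $T_{N,t}^*$ does not preserve the truncation. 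The paper resolves this by first replacing $\cL_{N,t}$ with a modified generator on the full Fock space --- substituting the Taylor polynomial $G_b(\cN/N)$ for $\sqrt{1-\cN/N}$, replacing $\sqrt{(N-\cN)(N-\cN-1)}$ by $N-\cN$, and adding the stabilizing term $C_b e^{C_b|t|}\cN(\cN/N)^{2b}$ --- and then proving separately (Lemma \ref{lem:Phi-wPhi}) that the original and modified fluctuation dynamics stay $N^{-\alpha}$-close. Your proposal conjugates the ill-defined object directly and never compares the true dynamics with the modified one.

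Second, the global operator bound $\pm\cE_{N,t}\leq CN^{-\alpha}(\cK+\cN+1)^q$ that your Gr\"onwall step relies on is not available. The actual error bound (\ref{eq:wcEbounds}) has the form $\delta\cV_N + N^{-\beta/2}\cK + \delta^{-1}(\cN+1) + \dots$, and the $\cV_N$ term is only made small after projecting onto sectors with $\cN\leq m\leq N^{\alpha}$, where $\cV_N\1^{\leq m}\leq CN^{\beta-1}m\,\cK$. This is exactly why the paper does not run a direct Gr\"onwall on $\|\Xi_t\|^2$ but instead decomposes $\langle\xi_{N,t},\xi_{2,N,t}\rangle$ into few- and many-particle sectors, averages over the cutoff location $m\in[M/2+1,M]$, and treats the two pieces by different means. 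Converting the error into $N^{-\alpha}(\cK+\cN+1)^q$ with $q\geq 2$ would force you to propagate $\cK^2$-type moments along the renormalized dynamics --- precisely the obstacle you flag at the end, and precisely what the paper's localization argument is built to avoid (the authors note explicitly that, unlike \cite{BCS}, they never use $\cK^2$ or $\cN^2$ to control $\cE_{N,t}$). ``Reactivating the Neumann cancellation at each moment level'' is not what the paper does and there is no indication it can be made to work; the only a priori bounds actually proved are $\langle\xi_{N,t},(\cH_N+\cN+\cN(\cN/N)^{2b})\xi_{N,t}\rangle\leq C\exp(C\exp(C|t|))$ (Corollary \ref{lem:xiN}) and the $N^\beta$-growth of the energy of $\Phi_{N,t}$ (Lemma \ref{lem:Phi}), both at the level of first moments. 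Finally, note that relaxing the hypothesis from $\langle\xi_N,(\cH_N+\cN+\cN(\cN/N)^{2b})\xi_N\rangle\leq C$ down to the stated $\langle\xi_N,(\cK+\cN)\xi_N\rangle\leq C$ requires an additional initial-time truncation $\1^{\leq M}\xi_N$ with $M=N^{2\alpha}$, which your proposal does not address.
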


Since the quadratic evolution $\cU_{2,N} (t;s)$ depends on $N$, it is natural to ask what 
happens as $N \to \infty$. Proceeding similarly to \cite{BCS}, we observe that $k_{N,t}$ can be approximated, for large $N$, by the limiting kernel  
    \begin{equation} \label{eq:kinfty}
    k_{t}(x;y)= (Q_{t}\otimes Q_{t}) \left[ - \omega_\infty(x-y) \varphi_{t}^2((x+y)/2)  \right] 
    \end{equation}
where $\varphi_t$ is the solution of the nonlinear Schr\"odinger equation (\ref{eq:NLS0}), $ Q_t = \1-|\varphi_t\rangle\langle \varphi_t|$ is the projection onto the orthogonal complement of $\ph_t$ and where $\omega_\infty$ is given by
		\begin{equation}\label{eq:winfty} \omega_{\infty}(x) := \left \{  \begin{array}{ll} \frac{b_0}{8\pi}\bigg[\frac1{|x|}-\frac{3}{2\ell}+\frac{|x|^2}{2\ell^3}\bigg] &\hspace{0.5cm}  \text{for }|x|\leq \ell,  \\  0 &\hspace{0.5cm} \text{for } |x|> \ell      \end{array}  \right.\end{equation}
where $b_0=\int V(x)\,dx$. With $ k_{t}$, we can define a new Bogoliubov transformation 
    \begin{equation} \label{eq:Bog-transinfty}
    T_{t} = \exp \left[ \frac 1 2\int dx dy \,k_{t} (x;y) a_x a_y - \text{h.c.} \right] 
    \end{equation}
Replacing $\cosh_{k_{N,t}}$, $\sinh_{k_{N,t}}$, $\text{p}_{k_{N,t}}$ and $\text{r}_{k_{N,t}}$ by their counterparts $\cosh_{k_{t}}$, $\sinh_{k_{t}}$, $\text{p}_{k_{t}}$ and $\text{r}_{k_{t}}$, replacing $\varphi_{N,t}$ by $\varphi_t$, the convolution $ V_N\ast(\cdot)$ by $b_0 \delta\ast(\cdot)$, the eigenvalue $N\lambda_N$ by its first order approximation $3b_0/(8\pi\ell^3)$, $ N\omega_N $ by $\omega_\infty$ and, finally, replacing $f_N=1-\omega_N $ by $f_\infty = 1$ in the 
operators $\cG_{2,N,t}^\cV, \cG_{2,N,t}^{\lambda_N}, \cG_{2,N,t}^\cK$ in (\ref{eq:defG2NtVlambda}) and (\ref{eq:defG2NtK}), we can define limiting operators $\cG_{2,t}^\cV, \cG_{2,t}^\lambda, 
\cG_{2,t}^{\cK}$ and we can use them to define the limiting generator 
\begin{equation}\label{eq:G2t} \cG_{2,t} = (i\partial_t T_t) T_t^* + \cG_{2,t}^\cV + \cG_{2,t}^{\cK} + \cG_{2,t}^\lambda \end{equation}
and the corresponding limiting fluctuation dynamics $\cU_{2}$ by  
\begin{equation}\label{eq:U2infty} i\partial_t \cU_{2} (t;s) = \cG_{2,t} \, \cU_{2} (t;s) \hspace{0.5cm} \cU_{2} (s;s) = \1_{\cF} \end{equation}
We are now ready to state our second main result.
\begin{theorem}\label{thm:main2}
Assume that Hypothesis A holds true. Let $\xi_N \in \cF_{\perp \ph_0}$ with $\| \xi_N \| = 1$ and (\ref{eq:ass-xi}). Let $\Psi_{N,t}$ be the solution of the Schr\"odinger equation (\ref{eq:schr}) with initial data (\ref{eq:psiN0-thms}) and let $\cU_{2} (t;0)$ be the unitary dynamics on $\cF$ defined in  (\ref{eq:U2infty}). Then, for all $\alpha < \min (\beta/2, (1-\beta)/2)$, there exists a constant $C > 0$ such that
    \begin{equation} \label{eq:maininfty-intro}
    \begin{split}
    \big\| U_{\ph_{N,t}} \Psi_{N,t} - e^{-i\int_0^t d\tau\; \eta_N (\tau)} \, T^*_{N,t} \, \cU_{2} (t;0) \, \xi_N \big\|^2 \leq  C N^{-\alpha} \exp(C\exp(C|t|)) 
    \end{split}
    \end{equation}
for all $N$ sufficiently large and all $t \in \bR$. 
\end{theorem}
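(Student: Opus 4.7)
The plan is to deduce Theorem \ref{thm:main2} from Theorem \ref{thm:main1} by a triangle inequality and then comparing the two quadratic Fock-space evolutions $\cU_{2,N}(t;0)$ and $e^{-i\int_0^t \eta_N(\tau) d\tau} \cU_{2}(t;0)$. Since $T^*_{N,t}$ is unitary, the only quantity left to control after invoking Theorem \ref{thm:main1} is
\[
\Delta_N(t)\xi_N := \cU_{2,N}(t;0)\xi_N - e^{-i\int_0^t \eta_N(\tau)\,d\tau}\,\cU_2(t;0)\xi_N.
\]
Observe that the phase $e^{-i\int_0^t \eta_N(\tau)d\tau}$ together with $\cU_2(t;0)$ solves a Schr\"odinger equation with generator $\eta_N(t) + \cG_{2,t}$, so a standard Duhamel argument gives
\[
\Delta_N(t)\xi_N = -i\int_0^t \cU_{2,N}(t;s)\,\bigl[\cG_{2,N,s} - \eta_N(s) - \cG_{2,s}\bigr]\, e^{-i\int_0^s \eta_N(\tau)d\tau}\,\cU_2(s;0)\xi_N\,ds.
\]
Using unitarity of $\cU_{2,N}(t;s)$, it suffices to bound, for each $s\in[0,t]$, the vector norm of $[\cG_{2,N,s}-\eta_N(s)-\cG_{2,s}]\,\cU_2(s;0)\xi_N$ by $N^{-\alpha/2}$ times a controllable quantity.

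This requires two ingredients. First, a propagation estimate for the limiting fluctuation dynamics, showing that
\[
\bigl\langle \cU_2(s;0)\xi_N,\,(\cK+\cN+1)\,\cU_2(s;0)\xi_N\bigr\rangle \le C\exp(C\exp(C|s|)),
\]
obtained by a Gr\"onwall argument on $d_s\langle \cU_2 \xi_N,(\cK+\cN)\cU_2\xi_N\rangle$: the commutator $[\cG_{2,s},\cK+\cN]$ can be bounded by $C(s)(\cK+\cN+1)$ using estimates on $k_t$, $\sinh_{k_t}$, $\cosh_{k_t}$ and on $\varphi_t$ that are \textit{uniform in }$N$, structurally analogous to the ones used in Proposition \ref{lem:wG} for $\cG_{2,N,t}$. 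Second, quantitative comparison estimates between the $N$-dependent and limiting ingredients:
\[
\|\varphi_{N,t}-\varphi_t\|_{H^2}\le C N^{-\alpha}e^{C|t|},\qquad \|N\omega_N-\omega_\infty\|_{L^p}\le C N^{-\alpha},
\]
\[
|N\lambda_N - 3b_0/(8\pi\ell^3)|\le C N^{\beta-1}, \qquad \|k_{N,t}-k_t\|_{\mathrm{HS}} + \|\sinh_{k_{N,t}}-\sinh_{k_t}\| \le C N^{-\alpha}e^{C|t|},
\]
as well as the convolution replacement $V_N\ast f \to b_0 f$ in $L^2$ norm at rate $N^{-\alpha\beta}\|\nabla f\|_2$. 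Combining these gives an operator-form inequality of the schematic shape
\[
\pm \bigl(\cG_{2,N,s}-\eta_N(s)-\cG_{2,s}\bigr) \le C N^{-\alpha} e^{C|s|}\,(\cK+\cN+1),
\]
which when paired with the propagation bound above yields $\|[\cG_{2,N,s}-\eta_N(s)-\cG_{2,s}]\cU_2(s;0)\xi_N\|\le C N^{-\alpha/2}\exp(C\exp(C|s|))$.

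The proof then integrates this in $s\in[0,t]$, adds the bound from Theorem \ref{thm:main1}, and concludes with the double-exponential estimate stated in \eqref{eq:maininfty-intro}. The main obstacle I anticipate is the term-by-term matching of the generators $\cG_{2,N,s}^{\cV}, \cG_{2,N,s}^{\cK}, \cG_{2,N,s}^{\lambda_N}$ with $\cG_{2,s}^{\cV}, \cG_{2,s}^{\cK}, \cG_{2,s}^{\lambda}$ (and similarly for $(i\partial_t T_{N,t}) T^*_{N,t}$ versus $(i\partial_t T_t)T_t^*$): the delta-sequence nature of $V_N$ interacts with the $N^{\beta/2}$-size of $\nabla k_{N,t}$ from \eqref{eq:kNt-bd}, so several cancellations (analogous to those behind Proposition \ref{lem:cLj}) must be preserved at the level of differences, and one must arrange that the remaining errors are controlled by $\cK+\cN$ with a prefactor of order $N^{-\alpha}$ for some $\alpha < \min(\beta/2,(1-\beta)/2)$. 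A second, more technical obstacle is proving the $H^2$-rate of convergence $\varphi_{N,t}\to\varphi_t$ with this explicit rate, which can be done by subtracting \eqref{eq:NLSN} and \eqref{eq:NLS0}, applying the bounds \eqref{eq:phi-bds}, and using that $V_N f_N \ast |\varphi|^2 \to b_0|\varphi|^2$ with rate $N^{-\min(\beta,1-\beta)}$ in a Sobolev norm strong enough to close a Gr\"onwall argument.
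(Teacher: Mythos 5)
Your overall strategy --- reduce to Theorem \ref{thm:main1} by the triangle inequality and unitarity of $T_{N,t}^*$, then compare $\cU_{2,N}(t;0)\xi_N$ with $e^{-i\int_0^t\eta_N(\tau)\,d\tau}\,\cU_2(t;0)\xi_N$ via the difference of generators together with propagation bounds along the quadratic evolutions --- is exactly the paper's route (the generator comparison is Proposition \ref{lm:compare-xiinfty}, imported from \cite{BCS}). However, the way you close the argument has a genuine gap. From the Duhamel formula you reduce the problem to bounding the \emph{vector} norm $\|[\cG_{2,N,s}-\eta_N(s)-\cG_{2,s}]\,\cU_2(s;0)\xi_N\|$, and you claim this follows from the operator-form inequality $\pm(\cG_{2,N,s}-\eta_N(s)-\cG_{2,s})\le CN^{-\alpha}e^{C|s|}(\cK+\cN+1)$ together with $\langle\cU_2(s;0)\xi_N,(\cK+\cN+1)\,\cU_2(s;0)\xi_N\rangle\le C$. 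That inference is invalid: a form bound $\pm D\le N^{-\alpha}F$ only yields $|\langle\phi,D\psi\rangle|\le CN^{-\alpha}\langle\phi,F\phi\rangle^{1/2}\langle\psi,F\psi\rangle^{1/2}$, and taking the supremum over normalized $\phi$ produces $\|F\|^{1/2}=\infty$ for $F=\cK+\cN+1$. Concretely, the generator difference contains second-quantized differential operators and pair-creation terms with singular kernels, so $D\psi$ need not even lie in $\cF$ when $\psi$ only has finite $(\cK+\cN)$-expectation; a genuine vector-norm bound would require higher moments of $\cK$ and $\cN$ along $\cU_2$ that you do not establish. The paper sidesteps this by differentiating $\|\xi_{2,N,t}-e^{i\theta_N(t)}\xi_{2,t}\|^2$ directly, so that the generator difference is always sandwiched between the two evolved states and only the sesquilinear-form estimate of Proposition \ref{lm:compare-xiinfty} (with $(\cK+\cN+1)^{1/2}$ on one factor and $(\cN+1)^{1/2}$ on the other) is needed. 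Your argument is repairable by switching to that Gr\"onwall-on-the-norm-squared formulation.

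A second, smaller omission: once you pass to the form version, the bound $\langle\xi_{2,N,t},(\cK+\cN+1)\xi_{2,N,t}\rangle\le C\exp(C\exp(C|t|))$ from Corollary \ref{lem:xiN} is needed, and it requires the stronger hypothesis $\langle\xi_N,(\cH_N+\cN+\cN(\cN/N)^{2b})\xi_N\rangle\le C$ rather than just (\ref{eq:ass-xi}). The paper therefore first proves the estimate under this stronger hypothesis and then removes it with the cutoff $\1^{\le M}\xi_N$, $M=N^{2\alpha}$, exactly as at the end of the proof of Theorem \ref{thm:main1}; this localization step should appear explicitly in your write-up as well.
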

        
Theorem \ref{thm:main1} and Theorem \ref{thm:main2} apply to the study of the time-evolution of initial data of the form \begin{equation}\label{eq:psiN0} \psi_{N,0} =  U_{\varphi_{0}}^* \1^{\le N} T_{N,0}^* \xi_{N}\end{equation} 
for a $\xi_N \in \cF_{\perp \ph_0}$ satisfying the bound
\begin{equation}\label{eq:xiN-bd2} \langle \xi_N , \left[ \cK + \cN \right] \xi_N \rangle \leq C \end{equation}
uniformly in $N$. It is natural to ask under which assumptions on $\psi_{N,0}$ is it possible to find $\xi_N \in \cF_{\perp \ph_0}$ such that (\ref{eq:psiN0}) and (\ref{eq:xiN-bd2}) hold true. The answer is given in our last main theorem.
\begin{theorem}\label{thm:main3}
Assume Hypothesis A holds true. Let $\Psi_{N,0} \in L^2_s (\bR^{3N})$ with reduced one-particle density matrix $\gamma_{N,0}$ such that 
\begin{equation}\label{eq:ass-gammaN} \tr \, \left| \gamma_{N,0} - |\ph_0 \rangle \langle \ph_0| \right| \leq C N^{-1}  
\end{equation}
and  
\begin{equation}\label{eq:ass-ener} \left| \frac{1}{N} \langle \Psi_{N,0} , H_N \Psi_{N,0} \rangle -  \left[ \| \nabla \ph_0 \|^2 + \frac{1}{2} \langle \ph_0, (V_N f_n * |\ph_0|^2) \ph_0 \rangle \right]  \right| \leq C N^{-1} \end{equation}
Let $\Psi_{N,t}$ be the solution of the Schr\"odinger equation (\ref{eq:schr}) with initial data $\psi_{N,0}$ and let $\cU_{2} (t;0)$ be the unitary dynamics on $\cF$ defined in (\ref{eq:U2infty}). Then, for all $\alpha < \min (\beta/2, (1-\beta)/2)$, there exists a constant $C > 0$ such that
\begin{equation} \label{eq:maininfty2}
\begin{split}
\big\| T_{N,t} U_{\ph_{N,t}} \Psi_{N,t} - e^{-i\int_0^t d\tau\; \eta_N (\tau)} \, \cU_{2} (t;0) \, T_{N,0} \, &U_{\ph_{N,0}} \Psi_{N,0}  \big\|^2 \\ &\leq  C N^{-\alpha} \exp(C\exp(C|t|)) 
\end{split}
\end{equation}
for all $N$ sufficiently large and all $t \in \bR$. 
\end{theorem}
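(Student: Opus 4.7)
The natural strategy is to reduce Theorem \ref{thm:main3} to Theorem \ref{thm:main2} by producing a vector $\xi_N \in \cF_{\perp \ph_0}$, with $\|\xi_N\|=1$ and satisfying the a priori bound (\ref{eq:ass-xi}), such that the initial data $\Psi_{N,0}$ of Theorem \ref{thm:main3} is exactly of the form (\ref{eq:psiN0-thms}). Once such a $\xi_N$ is found, Theorem \ref{thm:main2} immediately provides (\ref{eq:maininfty-intro}); applying the unitary $T_{N,t}$ to both arguments of the norm and recognizing $T_{N,0} U_{\ph_0} \Psi_{N,0}$ on the right-hand side yields precisely (\ref{eq:maininfty2}).

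The canonical choice is $\xi_N := T_{N,0} U_{\ph_0} \Psi_{N,0}$. Since $\ph_{N,0}=\ph_0$ and the kernel $k_{N,0}$ in (\ref{eq:kNt}) carries the projection $Q_{N,0} \otimes Q_{N,0}$, the Bogoliubov transformation $T_{N,0}$ preserves $\cF_{\perp \ph_0}$, so $\xi_N \in \cF_{\perp \ph_0}$ with $\|\xi_N\|=1$. To verify (\ref{eq:psiN0-thms}), one observes that $T_{N,0}^* \xi_N = U_{\ph_0} \Psi_{N,0}$ lies in the truncated Fock space $\cF_{\perp \ph_0}^{\leq N}$ by the very definition of $U_{\ph_0}$; hence $\1^{\leq N}$ acts on it as the identity and $U_{\ph_0}^* \1^{\leq N} T_{N,0}^* \xi_N = U_{\ph_0}^* U_{\ph_0} \Psi_{N,0} = \Psi_{N,0}$.

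The main task, and the only nontrivial input beyond Theorem \ref{thm:main2}, is to establish $\langle \xi_N, (\cK + \cN)\xi_N \rangle \leq C$ uniformly in $N$ from the hypotheses (\ref{eq:ass-gammaN}) and (\ref{eq:ass-ener}). For the number operator, the identity $U_{\ph_0} a^*(\ph_0) a(\ph_0) U_{\ph_0}^* = N - \cN$ gives
\[
\langle U_{\ph_0} \Psi_{N,0}, \cN\, U_{\ph_0} \Psi_{N,0}\rangle = N - N\langle \ph_0, \gamma_{N,0}\ph_0\rangle \leq C
\]
from (\ref{eq:ass-gammaN}); the bound (\ref{eq:kNt-bd}) on $\|k_{N,0}\|_{\text{HS}}$ then yields the standard operator inequality $T_{N,0}^* \cN\, T_{N,0} \leq C(\cN+1)$, hence $\langle \xi_N, \cN \xi_N\rangle \leq C$. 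The kinetic bound is the main obstacle: a direct term-by-term estimate of $T_{N,0}^* \cK\, T_{N,0}$ would produce errors of order $\|\nabla k_{N,0}\|_{\text{HS}}^2 = O(N^\beta)$, so one must argue through the energy. Setting $\cG_{N,0}:=U_{\ph_0} H_N U_{\ph_0}^*$ and writing $\cE_N(\ph_0) := \|\nabla \ph_0\|^2 + \tfrac12 \langle \ph_0, (V_N f_N \ast |\ph_0|^2)\ph_0\rangle$, the hypothesis (\ref{eq:ass-ener}) becomes
\[
\langle \xi_N, T_{N,0}\, \cG_{N,0}\, T_{N,0}^* \xi_N\rangle \leq N\, \cE_N(\ph_0) + C.
\]
It then remains to establish the coercivity of the renormalized excitation Hamiltonian,
\[
T_{N,0}\, \cG_{N,0}\, T_{N,0}^* - N\,\cE_N(\ph_0) \;\geq\; c\,(\cK + \cN) - C,
\]
as an operator inequality on $\cF_{\perp \ph_0}$. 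This is the same algebraic mechanism explained in the discussion following (\ref{eq:kNt}) and used to prove the bounds (\ref{eq:G2Nt-est}) in Proposition \ref{lem:wG}: the $O(N^\beta)$ quadratic contributions arising from $T_{N,0}^* \cK\, T_{N,0}$, from the conjugated quartic interaction $T_{N,0}^* \cV_N T_{N,0}$, and from the $N\lambda_N$ piece combine so as to reconstruct the Neumann equation (\ref{eq:Neum}), leaving only bounded remainders which are controlled via (\ref{eq:kNt-bd}) and (\ref{eq:fN-bound}). Thus this coercivity is an initial-time analogue of the estimates already performed during the proof of Theorem \ref{thm:main1}, and can be obtained by the same computation, whereupon Theorem \ref{thm:main2} applied to $\xi_N$ completes the proof.
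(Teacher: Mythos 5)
Your proposal is correct and follows essentially the same route as the paper: the reduction to Theorem \ref{thm:main2} via the choice $\xi_N = T_{N,0} U_{\ph_0} \Psi_{N,0}$, the bound on $\langle \xi_N,\cN\,\xi_N\rangle$ through $U_{\ph_0}a^*(\ph_0)a(\ph_0)U_{\ph_0}^* = N-\cN$ and Proposition \ref{lem:Bog-N}, and the kinetic bound obtained from the energy hypothesis via coercivity of the renormalized excitation Hamiltonian are exactly the content of the paper's Proposition \ref{lm:init}. The coercivity you assert is what the paper establishes by combining the lower bound (\ref{eq:cGmm}) with the upper bound (\ref{eq:iniest}) (the latter resting on the estimates of \cite[Section 6]{BS} and on control of $\wt\cL_{N,0}-\cL_{N,0}$), and you correctly identify the underlying cancellation from the Neumann problem as the mechanism.
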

        
{\it Remarks:} 
\begin{itemize}
\item[1)] Recall that, although this is not reflected in our notation, the family of Bogoliubov transformations $T_{N,t}$ and the quadratic evolutions $\cU_{2,N} (t;0)$ in Theorem~\ref{thm:main1} and $\cU_2 (t;0)$ in Theorem \ref{thm:main2} and in Theorem \ref{thm:main3} depend on the choice of the length scale $\ell > 0$ in (\ref{eq:Neum}). This parameter is chosen small enough, but fixed. 
\item[2)] The bounds (\ref{eq:mainN-intro}), (\ref{eq:maininfty-intro}) and (\ref{eq:maininfty2}) give norm approximations of the full many-body dynamics of initial data exhibiting BEC in terms of Fock space dynamics $\cU_{2,N} (t;0)$ or $\cU_2 (t;0)$ with quadratic generators, of the family of time-dependent Bogoliubov transformation $T_{N,t}$ and of the solution $\ph_{N,t}$ of the modified Hartree equation. 
\item[3)] We assumed the bounds (\ref{eq:ass-gammaN}) and (\ref{eq:ass-ener}) to hold with best possible rates $N^{-1}$, corresponding to initial data with bounded (i.e. $N$-independent) number of excitations and with bounded excitation energy. One could relax a bit this requirement allowing for more excitations and for a larger excitation energy but then, of course, the rate on the r.h.s. of (\ref{eq:maininfty2}) would deteriorate. 
\item[4)] From the analysis of \cite[Section 6]{BS}, it is clear that one can also replace the condition (\ref{eq:ass-gammaN}) by the weaker bound
\begin{equation}\label{eq:ass-cond2} 1 - \langle \ph_0 , \gamma_{N,0} \ph_0 \rangle \leq C N^{-1} \end{equation}
if one additionally assumes that there exists a sufficiently regular external confining potential $V_\text{ext}$ such that $\ph_0$ minimizes the energy functional 
\begin{equation}\label{eq:ass-ener2}\begin{split} \cE (\ph) = \; &\int \left[ | \nabla \ph (x)|^2 + V_\text{ext} (x) |\ph (x)|^2 \right] dx \\ &+ \frac{1}{2} \int dx dy V_N (x-y) f_N (x-y) |\ph (x)|^2 |\ph (y)|^2  
\end{split}\end{equation}
with the constraint $\| \ph \| =1$ and if one replaces the condition (\ref{eq:ass-ener}) by the similar bound 
\[ \left| \frac{1}{N} \langle \psi_{N,0} , H^\text{trap}_N \psi_{N,0} \rangle - \cE (\ph_0) \right| \leq C N^{-1} \]
for the Hamilton operator with confining potential $H_N^\text{trap} = H_N + \sum_{j=1}^N V_\text{ext} (x_j)$. The assumptions (\ref{eq:ass-cond2}), (\ref{eq:ass-ener2}) are expected to hold true if $\psi_{N,0}$ is the ground state of the trapped Hamiltonian $H_N^\text{trap}$. They describe experiments where particles are initially trapped by external fields and they are cooled down at temperatures so low that they essentially relax to the ground state. 
\item[5)] The conditions (\ref{eq:ass-cond2})-(\ref{eq:ass-ener2}), and hence  (\ref{eq:psiN0})-(\ref{eq:xiN-bd2}), have been proved rigorously for the ground states (more generally, low-lying eigenstates) of trapped systems  when either $\beta=0$ (mean-field regime) \cite{Seiringer-11,GreSei-13,LNSS,DerNap-14,NamSei-15,Pizzo-15}, or $0<\beta<1$ and particles are trapped in a unit torus without an external potential \cite{BBCS0,BBCS}. 
\end{itemize}

{\it Acknowledgements.} We gratefully acknowledge support from the  Swiss National Foundation of Science through the NCCR SwissMAP and the SNF Grant ``Dynamical and energetic properties of Bose-Einstein condensates'' (B.S.)  and from the Polish National Science Center (NCN) grant No. 2016/21/D/ST1/02430 (M.N.).

\section{Outline of the proof}

In this section we explain the overall strategy of the proof. As in Theorem \ref{thm:main1}, we denote by $\Psi_{N,t}$ the solution of the $N$-particle Schr\"odinger equation (\ref{eq:schr}) with the initial data $\Psi_{N,0}= U_{\varphi_{N,0}}^* \1^{\le N} T_{N,0}^* \xi_{N}$, where 
$\xi_N \in \cF^{\leq N}_{\perp \ph}$ is such that 
\[ \langle \xi_N, (\cN + \cK) \xi_N \rangle \leq C \]
uniformly in $N$. Furthermore, we denote by $\ph_{N,t}$ the solution of the modified, $N$-dependent, nonlinear Hartree equation (\ref{eq:NLSN}), with initial data $\ph_0 \in H^4 (\bR^3)$. 

\subsection{Fluctuation evolution} First of all, we apply the map $U_{\ph_{N,t}}$, defined in (\ref{eq:U-def}), to $\Psi_{N,t}$. This allows us to remove the condensate described at time $t$ by $\ph_{N,t}$ and to focus on the orthogonal fluctuations. We set 
\begin{equation}\label{eq:phiNt}
\Phi_{N,t} = U_{\varphi_{N,t}}\Psi_{N,t},
\end{equation}
and we observe that $\Phi_{N,t} \in \cF^{\leq N}_{\perp \ph_{N,t}}$ satisfies the equation  
\begin{equation} \label{eq:Phi}
i\partial_t \Phi_{N,t} = \cL_{N,t} \Phi_{N,t}, 
\end{equation}
with the initial data $\Phi_{N,0}= \1^{\le N} T_{N,0}^* \xi_{N}$ and the generator 
\begin{equation}\label{eq:cLNt} \cL_{N,t} = (i\partial_t U_{\varphi_{N,t}} ) U^*_{N,t} + U_{\varphi_{N,t}} H_N U_{\varphi_{N,t}}^*. \end{equation}

Using (\ref{eq:U-rules}) and computing the first term on the r.h.s. of (\ref{eq:cLNt}) as in   \cite{LNS}, we obtain 
\begin{equation}\label{eq:cLNt-2}
\begin{split}  
&\cL_{N,t} =  \frac{N+1}{2} \product{\varphi_{N,t}}{ [V_N (1-2f_N) \ast|\varphi_{N,t}|^2 ]\varphi_{N,t}} - \mu_{N}(t) \\
&+ \frac12\product{\varphi_{N,t}}{ [V_N\ast|\varphi_{N,t}|^2]\varphi_{N,t} }  \frac{\cN(\cN+1)}{N}  \\
 &+\Big[ \sqrt{N}\Big[ a^* (Q_{N,t}[(V_N \omega_N) \ast|\varphi_{N,t}|^2] \varphi_{N,t}) - a^* (Q_{N,t}[V_N\ast|\varphi_{N,t}|^2] \varphi_{N,t}) \frac{\cN}{N}  \Big] \sqrt{\frac{N-\cN}{N}}\\ &\hspace{3cm} +{\rm h.c.} \Big] \\
 &+ \dGamma\Big(  -\Delta + (V_Nf_N)*|\varphi_{N,t}|^2+ K_{1,N,t} -\mu_{N,t} \Big)  \\
 &+ \dGamma\Big( Q_{N,t}(V_N \omega_N *|\varphi_{N,t}|^2)Q_{N,t} \Big) - \dGamma\Big( Q_{N,t}(V_N *|\varphi_{N,t}|^2)Q_{N,t} + K_{1,N,t}\Big)\frac{\cN}{N}\\
   & + \bigg[\frac{1}{2} \int  \d x \d y \, K_{2,N,t}(x,y)a^*_x a^*_y  \frac{\sqrt{(N-\cN)(N-\cN-1)}}{N} + {\rm h.c.} \bigg] \\
 & + \bigg[\frac{1}{\sqrt{N}}\int dx dy dx'dy'\ \, (Q_{N,t} \otimes Q_{N,t} V_N Q_{N,t} \otimes 1)(x,y;x',y')\varphi_{N,t} (y') a_x^* a_y^* a_{x'} \sqrt{\frac{N-\cN}{N}}\\ &\hspace{3cm} + \text{h.c.} \bigg]\\
 &+  \frac{1}{2N}\int dx dy dx'dy'\ \, (Q_{N,t}\otimes Q_{N,t}V_N Q_{N,t}\otimes  Q_{N,t}) (x,y;x',y')a^*_x a^*_y a_{x'} a_{y'}
\end{split}
\end{equation} 
with
\begin{align*} 
\mu_{N}(t) &:=  \product{\varphi_{N,t}}{ [(V_N \omega_{N}) \ast|\varphi_{N,t}|^2 ]\varphi_{N,t}} .
\end{align*}

\subsection{Modified fluctuation evolution}

Next, we have to remove the singular correlation structure from $\Phi_{N,t}$. Since $\Psi_{N,t} = U^*_{\ph_{N,t}} \Phi_{N,t}$ and since $U^*_{\ph_{N,t}}$ just adds products of solutions of the nonlinear equation (\ref{eq:NLSN}), it is clear that all 
correlations developed by $\Psi_{N,t}$ must be contained in $\Phi_{N,t}$. As a consequence, 
at least for $\beta > 1/2$, the time evolution of $\Phi_{N,t}$ cannot be generated by a 
quadratic Hamiltonian, not even approximately in the limit of large $N$. To remove 
correlations from $\Phi_{N,t}$ we would like to follow the idea of \cite{BCS} and apply the Bogoliubov transformation $T_{N,t}$ defined in (\ref{eq:Bog-trans}). Unfortunately, $T_{N,t}$ does not preserve the number of particles, and therefore it does not leave the truncated Fock space $\cF_{\perp \ph_{N,t}}^{\leq N}$ invariant. Since $T_{N,t}$ only creates few particles (the bound (\ref{eq:kNt-bd}) implies that $T_{N,t} \cN T_{N,t}^* \leq C \cN$), this should not be a serious obstacle. To circumvent it, it seems natural to give up the restriction on the number of particles and consider $\Phi_{N,t}$ as a vector in the untruncated Fock space $\cF_{\perp \ph_{N,t}}$. The drawback of this approach is the fact that the generator $\cL_{N,t}$ computed in (\ref{eq:cLNt-2}) is defined only on sectors with at most $N$ particles. So, we proceed as follows; first we approximate $\Phi_{N,t}$ by a new, modified, fluctuation vector $\wt{\Phi}_{N,t}$, whose dynamics is governed by a modified generator $\wt{\cL}_{N,t}$ which, on the one hand, is close to $\cL_{N,t}$ when acting on vectors with a small number of particles and, on the other hand, is well-defined on the full untruncated Fock space $\cF_{\perp \ph_{N,t}}$. To define $\wt{\cL}_{N,t}$ we proceed as follows. Starting from the expression on the r.h.s. of (\ref{eq:cLNt-2}), we replace first of all the factor $\sqrt{(N-\cN)(N-\cN-1)}$ by $N-\cN$; the error is small, since
\[ |\sqrt{(N-x)(N-x-1)} - (N-x)| \leq 1 \]
for all $x \in \bN$. 

Secondly, we replace $\sqrt{N-\cN}$ by $\sqrt{N} G_b(\cN/N)$ where
\bq \label{eq:Pb}
G_b(t):= \sum_{n=0}^{b} \frac{(2n)!}{(n!)^2 4^n (1-2n)} t^n.
\eq
Indeed, the polynomial $G_b(t)$ is the Taylor series for $\sqrt{1-t}$ around $t=0$; it 
satisfies
\bq \label{eq:Pb-1}
|\sqrt{1-t} - G_b(t)| \le C t^{b+1}, \quad \forall t\in [0,1].
\eq
for a constant $C>0$ depending on $b$. Here $b\in \mathbb{N}$ is a large, fixed number, that will be specified later. 

Finally, we add a term of the form $C_b e^{C_b |t|}  \cN (\cN/N)^{2b}$ with a sufficiently large constant $C_b$ that will also be specified later. Since the generators $\cL_N$ and $\wt{\cL}_N$ will act on states with small number of particles, we expect this term to have a negligible effect on the dynamics (on the other hand, it allows us to better control the energy). With these changes, we obtain the modified generator
\begin{equation} \label{eq:wLNt}
\begin{split}  
&\w\cL_{N,t} = \frac{N+1}{2} \product{\varphi_{N,t}}{ [V_N (1-2f_N) \ast|\varphi_{N,t}|^2 ]\varphi_{N,t}} - \mu_{N}(t) \\
&+ \frac12\product{\varphi_{N,t}}{ [V_N\ast|\varphi_{N,t}|^2]\varphi_{N,t} }  \frac{\cN(\cN+1)}{N}  \\
 &+\Big[ \sqrt{N}\Big[ a^* (Q_{N,t}[(V_N \omega_N) \ast|\varphi_{N,t}|^2] \varphi_{N,t}) - a^* (Q_{N,t}[V_N\ast|\varphi_{N,t}|^2] \varphi_{N,t}) \frac{\cN}{N}  \Big] G_b(\cN/N) \\ &\hspace{3cm} +{\rm h.c.} \Big] \\
 &+ \dGamma\Big(  -\Delta + (V_Nf_N)*|\varphi_{N,t}|^2 + K_{1,N,t} -\mu_{N,t} \Big)  \\
 &+ \dGamma\Big( Q_{N,t}(V_N \omega_N *|\varphi_{N,t}|^2)Q_{N,t} \Big) - \dGamma\Big( Q_{N,t}(V_N *|\varphi_{N,t}|^2)Q_{N,t} + K_{1,N,t}\Big)\frac{\cN}{N}\\
   & + \bigg[\frac{1}{2} \int  \d x \d y \, K_{2,N,t}(x,y)a^*_x a^*_y  \frac{N-\cN}{N} + {\rm h.c.} \bigg] \\
 & + \bigg[\frac{1}{\sqrt{N}}\int dx dy dx'dy'\ \, (Q_{N,t} \otimes Q_{N,t} V_N Q_{N,t} \otimes 1)(x,y;x',y')\varphi_{N,t} (y') a_x^* a_y^* a_{x'} G_b(\cN/N) \\ &\hspace{3cm} + \text{h.c.}\bigg] \\
 &+  \frac{1}{2N}\int dx dy dx'dy'\ \, (Q_{N,t}\otimes Q_{N,t}V_N Q_{N,t}\otimes  Q_{N,t})(x,y;x',y')a^*_x a^*_y a_{x'} a_{y'} \\ &+ C_b e^{C_b |t|} \, \cN (\cN/N)^{2b}.
\end{split} 
\end{equation} 
Using this modified generator, we define the modified fluctuation dynamics $\wt{\Phi}_{N,t}$ as the solution of the Schr\"odinger equation 
\bq \label{eq:wPhi}
i\partial_t \w\Phi_{N,t} = \w\cL_{N,t} \w\Phi_{N,t}, 
\eq 
with the initial data $\w\Phi_{N,0} = T_{N,0}^* \xi_{N}$. Observe that 
$\w\Phi_{N,t}\in \cF_{\bot \varphi_{N,t}}$. Indeed, arguing as in \cite[Lemma 9]{LNS}, we have
		\begin{equation} \label{eq:dt-wPhiN}
        \begin{split} \frac{d}{dt} \|a(\varphi_{N,t})\w\Phi_{N,t}\|^2 =& i\big\langle\w\Phi_{N,t},\big[\w\cL_{N,t}, a^*(\varphi_{N,t})a(\varphi_{N,t})\big] \w\Phi_{N,t}\big\rangle \\
        &+2\Im\big\langle\w\Phi_{N,t},a^*(i\partial_t \varphi_{N,t})a(\varphi_{N,t}) \w\Phi_{N,t}\big\rangle =0,
        \end{split}
        \end{equation}
because, using that $[a^*(\varphi_{N,t})a(\varphi_{N,t}),\cN]=0 $, we find
		\[\begin{split} \big[\w\cL_{N,t}, a^*(\varphi_{N,t})a(\varphi_{N,t})\big]&= \big[\dGamma\big(-\Delta+ (V_N f_N)*|\varphi_{N,t}|^2  \big), a^*(\varphi_{N,t})a(\varphi_{N,t})\big]\\
        & = a^*\big( [-\Delta + (V_N f_N)*|\varphi_{N,t}|^2] \varphi_{N,t}\big)a(\varphi_{N,t})-\text{h.c.}\\
        & = a^*(i\partial_t \varphi_{N,t})a(\varphi_{N,t})-\text{h.c.}
        \end{split}\]
Notice moreover that we find it more convenient to choose the initial data for the modified dynamics slightly different from the initial data for the original fluctuation dynamics (we do not include the cutoff to $\cN \leq N$ in the definition of $\w\Phi_{N,0}$). Nevertheless, it is possible to prove that the two dynamics remain close; this is the content of the next lemma, which is the first step in the proof of Theorem \ref{thm:main1}. 
\begin{lemma} \label{lem:Phi-wPhi}
Assume Hypothesis A holds true. Let $\Phi_{N,t}$ be as defined in \eqref{eq:Phi} and $\w\Phi_{N,t}$ as in \eqref{eq:wPhi}. Here, we assume that the parameters $b \in \bN$ and $C_b > 0$ in (\ref{eq:wLNt}) are large enough, and that $\xi_N \in \cF_{\perp \ph_0}$ is such that $\| \xi_N \| \leq 1$ and 
\begin{equation}\label{eq:ass-xiNb} \langle \xi_N, \big[ \cH_N + \cN + \cN (\cN/N)^{2b} \big] \xi_N \rangle \leq C \end{equation}
uniformly in $N$. Then, for all $\alpha < (1-\beta)/2$, 
there exists a constant $C > 0$ such that 
\[ 
 \| \Phi_{N,t}-\widetilde\Phi_{N,t} \|^2 \le C N^{-\alpha} \exp(C\exp(C|t|)) 
\]
for all $t \in \bR$. 
\end{lemma}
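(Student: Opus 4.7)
The plan is a Gronwall argument comparing the two Fock-space evolutions through a Duhamel formula. The preparatory step is to establish, uniformly in $N$, propagation estimates for the modified dynamics: bounds of the form
\[ \langle \wt\Phi_{N,t}, \big(\cH_N + \cN + \cN(\cN/N)^{2b}\big) \wt\Phi_{N,t}\rangle \leq C\exp(C\exp(C|t|)), \]
together with controlled growth of the moments $\langle \wt\Phi_{N,t}, (\cN+1)^k\wt\Phi_{N,t}\rangle$ for $1\leq k\leq 2b+1$. I would obtain these by computing $\partial_t\langle\wt\Phi_{N,t}, A\wt\Phi_{N,t}\rangle = i\langle\wt\Phi_{N,t},[\wt\cL_{N,t},A]\wt\Phi_{N,t}\rangle$ for $A=(\cN+1)^k$ and for $A$ equal to a regularised version of the energy. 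The particle-number changing pieces of $\wt\cL_{N,t}$ (the $a^*a^*$ and $a^*a^*a$ contributions) produce potentially large terms in these commutators, and the role of the added operator $C_b e^{C_b|t|}\cN(\cN/N)^{2b}$, with $b$ and $C_b$ chosen sufficiently large, is to dominate those bad contributions and to deliver a time-dependent Gronwall coefficient whose integrated form is exactly the double exponential $\exp(C\exp(C|t|))$. The initial values of these moments are bounded through the Bogoliubov estimate $T_{N,0}^*(\cN+1)^k T_{N,0}\lesssim (\cN+1)^k$ (a consequence of (\ref{eq:kNt-bd})) combined with hypothesis (\ref{eq:ass-xiNb}).

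Having these estimates, the second step compares the dynamics directly. Extending $\cL_{N,t}$ self-adjointly to all of $\cF_{\perp\ph_{N,t}}$ by replacing the $\sqrt{N-\cN}$ factors with $\sqrt{(N-\cN)_{+}}$, one finds
\[ \partial_t\|\Phi_{N,t}-\wt\Phi_{N,t}\|^2 = 2\,\mathrm{Im}\,\langle\Phi_{N,t}-\wt\Phi_{N,t},(\cL_{N,t}-\wt\cL_{N,t})\wt\Phi_{N,t}\rangle. \]
The difference $\cL_{N,t}-\wt\cL_{N,t}$ splits into four types of contributions: (i) the replacement of $\sqrt{(N-\cN)(N-\cN-1)}$ by $N-\cN$ in the $K_{2}$-term, producing an operator discrepancy of size $1/N$; (ii)--(iii) the replacement of $\sqrt{(N-\cN)/N}$ by $G_b(\cN/N)$ in the linear and cubic terms, yielding through (\ref{eq:Pb-1}) a factor of order $(\cN/N)^{b+1}$; and (iv) the extra $-C_b e^{C_b|t|}\cN(\cN/N)^{2b}$. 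Each is paired with $\Phi_{N,t}-\wt\Phi_{N,t}$ by Cauchy--Schwarz and controlled via the Step 1 estimates, using the scaling bounds $\|V_N\omega_N\|_1\lesssim N^{\beta-1}$ for the linear coefficient, $\|V_N\|_1\lesssim 1$ for the cubic coefficient, and the control of the dressed kinetic terms through $\|\nabla k_{N,t}\|_{\mathrm{HS}}\lesssim N^{\beta/2}$. The initial-data mismatch $\Phi_{N,0}-\wt\Phi_{N,0}=-\1^{>N}T_{N,0}^*\xi_N$ is bounded by a Markov-type inequality,
\[ \|\1^{>N}T_{N,0}^*\xi_N\|^2 \leq N^{-(2b+1)}\langle T_{N,0}^*\xi_N,\cN^{2b+1}T_{N,0}^*\xi_N\rangle \lesssim N^{-1}, \]
again by the Bogoliubov bound on $\cN$-moments and (\ref{eq:ass-xiNb}). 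Integrating the resulting Gronwall differential inequality then yields the claimed estimate; the rate $N^{-\alpha}$ for any $\alpha<(1-\beta)/2$ arises by balancing the $1-\beta$ decay of the effective one-body coupling against the $N^{\beta/2}$ growth of the Bogoliubov-dressed kinetic norm, with the factor $1/2$ coming from Cauchy--Schwarz.

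The main obstacle is Step 1: closing, uniformly in $N$, simultaneous bounds on $\langle\wt\Phi_{N,t},\cH_N\wt\Phi_{N,t}\rangle$ and on the high moments $\langle\wt\Phi_{N,t},\cN^k\wt\Phi_{N,t}\rangle$. Indeed, the commutator $[\wt\cL_{N,t},\cK+\cV_N]$ combines the singularly-scaled $V_N$ with the non-number-conserving, Bogoliubov-dressed pieces of $\wt\cL_{N,t}$, producing a proliferation of terms whose naive bounds diverge with $N$. The added damping $C_b e^{C_b|t|}\cN(\cN/N)^{2b}$ is precisely what makes these terms absorbable, but the choice of $b$ and $C_b$ must be made in conjunction with the desired rate $\alpha$ and with the quantitative bounds on $\mu_N(t)$ and on the kernel $k_{N,t}$; it is also the source of the double-exponential time dependence in the final estimate. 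The restriction $\beta<1$ and the precise exponent $(1-\beta)/2$ both enter at this point.
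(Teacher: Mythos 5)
Your overall architecture --- propagation estimates for $\wt\Phi_{N,t}$, then a comparison of the two generators exploiting the $(\cN/N)^{b+1}$ smallness of $\sqrt{1-\cN/N}-G_b(\cN/N)$ and the $O(1/N)$ square-root discrepancy --- is the same as the paper's in spirit, and several of your individual estimates (the $\|V_N\omega_N\|_1\lesssim N^{\beta-1}$ gain on the linear term, the $N^{-1}$ bound on the initial mismatch $\1^{>N}T_{N,0}^*\xi_N$) match. But there are two genuine gaps. First, your claimed uniform-in-$N$ bound $\langle\wt\Phi_{N,t},\cH_N\wt\Phi_{N,t}\rangle\le C\exp(C\exp(C|t|))$ is false already at $t=0$: the initial datum $T_{N,0}^*\xi_N$ carries correlation energy of order $\|\nabla k_{N,0}\|_{\rm HS}^2\sim N^\beta$, so the correct propagated bound (Lemma \ref{lem:wPhi}) is $CN^\beta\exp(C\exp(C|t|))$. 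This $N^\beta$ loss is precisely what must be balanced against the small prefactors in the generator difference, and it is the origin of the exponent $(1-\beta)/2$.

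Second, and more seriously, the direct Gr\"onwall identity $\partial_t\|\Phi_{N,t}-\wt\Phi_{N,t}\|^2 = 2\,\mathrm{Im}\,\langle\Phi_{N,t}-\wt\Phi_{N,t},(\cL_{N,t}-\wt\cL_{N,t})\wt\Phi_{N,t}\rangle$ does not close with the moments you propose to propagate. The factor $|\sqrt{1-\cN/N}-G_b(\cN/N)|\le C(\cN/N)^{b+1}$ is only a small \emph{number} on sectors with $\cN\ll N$; paired against the cubic and linear operators $A_1$ and against the damping term $C_be^{C_b|t|}\cN(\cN/N)^{2b}$, it forces you to control moments such as $\langle\wt\Phi_{N,t},(\cN+1)(\cN/N)^{2b+2}\wt\Phi_{N,t}\rangle$ and $\langle\wt\Phi_{N,t},\cN^2(\cN/N)^{4b}\wt\Phi_{N,t}\rangle$, which are strictly stronger than the quantity $\cN(\cN/N)^{2b}$ controlled by the Gr\"onwall argument on the generator; propagating them would require commuting $\cN^k$ for large $k$ with the pair-creation part of $\wt\cL_{N,t}$, which produces mixed number--energy moments that are not available. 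The paper's proof avoids this entirely by a localization-and-averaging device: it writes $\langle\Phi_{N,t},\wt\Phi_{N,t}\rangle = \frac{2}{M}\sum_{m=M/2+1}^{M}(\langle\Phi_{N,t},\1^{\le m}\wt\Phi_{N,t}\rangle+\langle\Phi_{N,t},\1^{>m}\wt\Phi_{N,t}\rangle)$ with $M=N^{1-\eps}$, bounds the many-particle sectors by Markov and the $N^\beta$ energy estimate (giving $\sqrt{N^\beta/M}$), and differentiates only the localized inner products. The cutoff turns $(\cN/N)^{b+1}\1^{\le m}$ and $\cN(\cN/N)^{2b}\1^{\le m}$ into the numbers $(M/N)^{b+1}$ and $M(M/N)^{2b}$, while the commutator $[\wt\cL_{N,t},\1^{\le m}]$, supported on single $\cN$-shells, is tamed by the average over $m$, which telescopes it into one projection with a $1/M$ prefactor. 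Without this device --- or a substitute for it --- your Step 2 does not go through.
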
  

\subsection{Bogoliubov transformation} Next, we apply the Bogoliubov transformation (\ref{eq:Bog-trans}) to the modified fluctuation evolution $\w\Phi_{N,t}$ defined in \eqref{eq:wPhi}. We set 
\begin{equation} \label{eq:defxiNt}
\xi_{N,t}= T_{N,t} \w\Phi_{N,t}
\end{equation}
Then $\xi_{N,t} \in \cF_{\perp \varphi_{N,t}}$ (with no restriction on the number of particles) and it solves the Schr\"odinger equation
\bq \label{eq:eq-xiN}
i\partial_t \xi_{N,t} = \cG_{N,t} \xi_{N,t},  
\eq
with the initial data $\xi_{N,0}= \xi_N$ and the generator 
\bq 
\cG_{N,t}= (i\partial_t T_{N,t})T_{N,t}^* + T_{N,t} \w\cL_{N,t} T_{N,t}^*. \label{def:wcG}
\eq
As explained above, the application of the Bogoliubov transformation $T_{N,t}$ takes care of correlations and makes it possible for us to approximate the evolution (\ref{eq:eq-xiN}) with the unitary 
evolution $\cU_{2,N}$, having the quadratic generator (\ref{eq:cG2N}). This is the content of the next lemma. 
\begin{lemma} \label{lem:Compare-xiN} 
Assume Hypothesis A holds true. Let $\xi_{N,t}$ be defined as in (\ref{eq:defxiNt}) and $\xi_{2,N,t} = \cU_{2,N} (t;0) \xi_N$ with the unitary evolution $\cU_{2,N}$ defined in (\ref{eq:U2Nts-def}). Here, we assume that the parameters $b \in \bN$ and $C_b > 0$ in (\ref{eq:wLNt}) are large enough, and that $\xi_N \in \cF_{\perp \ph_0}$ is such that $\| \xi_N \| \leq 1$ and (\ref{eq:ass-xiNb}) holds true. Then there exists $C > 0$ such that  
$$
\| \xi_{N,t}- \xi_{2,N,t}\|^2 \le C N^{-\alpha}\exp(C\exp(C|t|)) , 
$$
for all $t \in \bR$, with $\alpha = \min (\beta /2 , (1-\beta)/2)$. 
\end{lemma}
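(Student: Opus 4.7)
The plan is a standard Duhamel--Gronwall comparison of two Fock-space dynamics with nearly identical generators. Writing $D_{N,t} := \xi_{N,t} - \xi_{2,N,t}$, self-adjointness of $\cG_{N,t}$ and $\cG_{2,N,t}$ together with the common initial data $\xi_N$ yields
\[ \frac{d}{dt} \| D_{N,t} \|^2 = 2\, \Im \langle D_{N,t}, \cE_{N,t}\, \xi_{2,N,t} \rangle, \qquad \cE_{N,t} := \cG_{N,t} - \cG_{2,N,t}, \]
so that the claim reduces, by Gronwall, to the a priori estimate $\| \cE_{N,t}\, \xi_{2,N,t} \| \le C N^{-\alpha/2} \exp(C \exp(C|t|))$ with $\alpha = \min(\beta/2,(1-\beta)/2)$.

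The first main step is to unpack $\cE_{N,t}$. Conjugating the modified generator $\w\cL_{N,t}$ in (\ref{eq:wLNt}) by $T_{N,t}$ via the series (\ref{eq:chsh}), the constant and quadratic pieces, combined with $(i\partial_t T_{N,t}) T_{N,t}^*$, reproduce exactly $\cG_{2,N,t}$ from (\ref{eq:cG2N})--(\ref{eq:defG2NtK}). Thus $\cE_{N,t}$ is the sum of (i) cubic and quartic contributions coming from $T_{N,t}\cV_N T_{N,t}^*$ and from the $a^*a^*a\cdot G_b(\cN/N)$ term in (\ref{eq:wLNt}), (ii) Taylor remainders from the substitutions $\sqrt{N-\cN}/\sqrt{N}\leftrightarrow G_b(\cN/N)$ and $\sqrt{(N-\cN)(N-\cN-1)}\leftrightarrow N-\cN$, and (iii) the penalty $C_b e^{C_b|t|}\cN(\cN/N)^{2b}$. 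The crucial structural input -- already built into the choice (\ref{eq:kNt}) of $k_{N,t}$ and, as indicated by the discussion around Prop.\ \ref{lem:cLj}, encoded in the Neumann equation (\ref{eq:Neum}) -- is the cancellation that absorbs the singular part of $\cV_N$ into the contributions obtained from conjugating $\dGamma(-\Delta)$ and the $K_{2,N,t}$ quadratic piece with $T_{N,t}$. What survives in (i) should admit operator bounds of the form (\ref{eq:bndcE}), schematically $N^{-1/2}(\cK+\cN+1)^{1/2}(\cN+1)^{1/2}$, with the $N^{\beta/2}$ factor carried by $\|\nabla k_{N,t}\|_\text{HS}$ always balanced by a kinetic factor so that the combined prefactor is no worse than $N^{-(1-\beta)/2}$; remainders in (ii) come with $(\cN/N)^{b+1}$ and (iii) with $(\cN/N)^{2b}$, both arbitrarily small for $b$ large.

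The second ingredient is the propagation of moments along the quadratic evolution $\cU_{2,N}(t;0)$. Since $\cG_{2,N,t}$ is quadratic with kernels controlled by (\ref{eq:kNt-bd}) and by Prop.\ \ref{lem:wG}, the commutator $[\cG_{2,N,t},(\cN+1)^m]$ is of order $(\cN+1)^m$ and $[\cG_{2,N,t},\cK]$ is bounded by $C(\cK+\cN+1)$; combined with the input assumption (\ref{eq:ass-xiNb}) on $\xi_N$, Gronwall yields $\langle \xi_{2,N,t},(\cK+\cN+1)(\cN+1)^m\,\xi_{2,N,t}\rangle \le C_m \exp(C\exp(C|t|))$ for every fixed $m$. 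Applying Cauchy--Schwarz termwise to $\cE_{N,t}\xi_{2,N,t}$ with these moment bounds delivers the required smallness $\|\cE_{N,t}\xi_{2,N,t}\| \le CN^{-\alpha/2}\exp(C\exp(C|t|))$ and closes the Gronwall argument. The main obstacle will be step (i): the detailed accounting of every cubic and quartic term produced by conjugating $\w\cL_{N,t}$ with $T_{N,t}$, recognising the Neumann cancellation (\ref{eq:Neum}) that removes the singular short-range piece, and verifying that the surviving cross terms and higher-order contributions from the $\cosh/\sinh$ series all fit into bounds of the form (\ref{eq:bndcE}) with the advertised $N$-dependence -- the exponents $\beta/2$ and $(1-\beta)/2$ in $\alpha$ precisely reflecting the competition between the singular factor $N^{\beta/2}$ carried by $\|\nabla k_{N,t}\|_\text{HS}$ and the scale-separation factor $N^{1-\beta}$ between the interaction range $N^{-\beta}$ and the condensate scale.
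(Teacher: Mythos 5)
Your opening reduction is where the argument breaks. From
\[
\frac{d}{dt}\|D_{N,t}\|^2 = 2\,\Im\langle D_{N,t},(\cG_{N,t}-\cG_{2,N,t})\,\xi_{2,N,t}\rangle
\]
you conclude that it suffices to show $\|(\cG_{N,t}-\cG_{2,N,t})\xi_{2,N,t}\|\le CN^{-\alpha/2}\exp(C\exp(C|t|))$. But the difference of generators is $\cG_{N,t}-\cG_{2,N,t}=\cV_N+C_be^{C_b|t|}\cN(\cN/N)^{2b}+\cE_{N,t}$, and it contains the \emph{full, untransformed} quartic interaction $\cV_N$: the Neumann cancellation coming from (\ref{eq:Neum}) renormalizes the quadratic part of the transformed generator (this is what makes $\cG_{2,N,t}$ satisfy (\ref{eq:G2Nt-est})), but it does not remove $\cV_N$ itself, which survives verbatim in Proposition \ref{lem:wG}. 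There is no vector-norm smallness for this term: $\|\cV_N\xi_{2,N,t}\|$ is controlled by $\langle\xi_{2,N,t},\cV_N^2\xi_{2,N,t}\rangle^{1/2}$, which involves $V_N^2\sim N^{3\beta}V_N^{(2)}$ and is not finite, let alone $O(N^{-\alpha/2})$, under the only a priori information available, namely bounds on $\cK+\cN+\cN(\cN/N)^{2b}$ moments (Corollary \ref{lem:xiN}). More generally, all the error estimates you can actually prove here --- (\ref{eq:wcEbounds}) and the elementary inequality $\cV_N\le CN^{\beta-1}\cK\cN$ --- are quadratic-form bounds $\pm D\le F$, and a form bound does not yield a bound on $\|D\psi\|$. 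So your Gronwall loop cannot be closed as stated.

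This is exactly the obstruction the paper's proof is designed to circumvent, and it does so by a different mechanism: instead of Gronwall on $\|D_{N,t}\|^2$, one writes $\|D_{N,t}\|^2=2(1-\Re\langle\xi_{N,t},\xi_{2,N,t}\rangle)$, inserts particle-number cutoffs $\1^{\le m}$ averaged over $m\in[M/2+1,M]$ with $M=N^\alpha$, and treats the two regimes separately. On many-particle sectors Markov plus the moment bounds of Corollary \ref{lem:xiN} give $O(N^{-\alpha})$. On few-particle sectors the identity $(\cG_{N,t}-\cG_{2,N,t})\1^{\le m}=\1^{\le m+2}(\cG_{N,t}-\cG_{2,N,t})\1^{\le m+2}\1^{\le m}$ lets one use $\cV_N\le CN^{\beta-1}\cK\cN$ restricted to $\cN\le m+2\le N^\alpha+2$, turning the troublesome quartic term into $CN^{\beta-1+\alpha}\cK\le CN^{-\alpha}\cK$ (this is where $\alpha\le(1-\beta)/2$ enters), and the abstract inequality $|\langle\phi,D\psi\rangle|\le 6\langle\phi,F\phi\rangle^{1/2}\langle\psi,F\psi\rangle^{1/2}$ for $\pm D\le F$ converts the resulting form bound into the off-diagonal matrix-element estimate needed to integrate in time; the commutator $[\cG_{2,N,t},\1^{\le m}]$ produced by the cutoff is then controlled after averaging over $m$. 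Your second ingredient (propagation of moments along $\cU_{2,N}$) is correct and is indeed used, but without the sector localization the core step of your plan fails.
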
  

Theorem \ref{thm:main1} is a consequence of Lemma \ref{lem:Phi-wPhi} and Lemma \ref{lem:Compare-xiN}, up to the remark that the assumption on the sequence $\xi_N \in \cF_{\perp \ph_0}$ appearing in Theorem \ref{thm:main1} is weaker than the assumption (\ref{eq:ass-xiNb}) appearing in both lemmas. So, to conclude the proof of Theorem \ref{thm:main1}, we need an additional localization argument, which will be explained in Section \ref{sec:thm-proof}. 

To prove Theorem \ref{thm:main2} we will then compare $\xi_{2,N,t}$ with 
$\xi_{2,t} = \cU_{2} (t;0) \xi_N$, where $\cU_2$ is the limiting evolution defined in (\ref{eq:U2infty}), by controlling the difference between the two generators. 

Finally, Theorem \ref{thm:main3} will follow from Theorem \ref{thm:main2}, by proving that, under the assumptions (\ref{eq:ass-gammaN}) and (\ref{eq:ass-ener}), it is possible to write $\psi_{N,0} = U^*_{\ph_0} {\bf 1}^{\leq N} \, T_{N,0}^* \xi_N$ with a sequence $\xi_N \in \cF_{\perp \ph_0}$ satisfying the condition (\ref{eq:ass-xi}).

The rest of the paper is organized as follows. In Section \ref{sec:xi-xi2} we show Lemma \ref{lem:Compare-xiN}. In Section \ref{sec:phiwphi}, we prove Lemma \ref{lem:Phi-wPhi} making use of some energy estimates. Finally, in Section \ref{sec:thm-proof}, we conclude the proof of our three main theorems.

\section{Analysis of Bogoliubov transformed dynamics}
\label{sec:xi-xi2}

In this section, we prove Lemma \ref{lem:Compare-xiN}. To this end, we need to study the properties of the generator $\cG_{N,t}$ defined in (\ref{def:wcG}).
\begin{proposition} \label{lem:wG} 
Assume that Hypothesis A holds true. Then, there exists a constant $C > 0$ and, for every fixed $b \in \bN$, a constant $K_b > 0$ such that the generator $\cG_{N,t}$ in  (\ref{def:wcG}) can be written as   
\bq
\cG_{N,t}= \cG_{2,N,t} + \cV_N + C_b e^{C_b |t|}  \cN(\cN/N)^{2b} + 
\cE_{N,t}
\eq
with the quadratic generator $\cG_{2,N,t}$, defined as in \eqref{eq:cG2N}, satisfying the estimates
\begin{equation}\label{eq:G2Nt-est}
\begin{split} 
\pm (\cG_{2,N,t} - \eta_N (t) - \cK) &\leq C e^{C|t|} (\cN+1) \\
\pm \big[ \cG_{2,N,t} , i \cN \big] &\leq C e^{C|t|} (\cN+1) \\
\pm \partial_t \big( \cG_{2,N,t} - \eta_N (t) \big)  &\leq  C e^{C|t|} (\cN+1) 
\end{split} 
\end{equation}
and the error operator $\cE_{N,t}$ such that, with $\alpha = \min (\beta/2, (1-\beta)/2)$,  
\begin{equation}  \label{eq:wcEbounds}
\begin{split}
\pm \cE_{N,t} \le \; &\delta \cV_N + N^{-\beta/2} \cK + Ce^{C|t|}\max(N^{-\alpha}, \delta^{-1})(\cN+1) \\ &+ K_b e^{Ct}\max(\delta,\delta^{-1})\frac{(\cN+1)^2}{N} \\ &+\Big[ K_b \delta^{-1}e^{C|t|}+\frac{1}{2} C_b e^{C_b |t|} \Big]  (\cN+1)(\cN/N)^{2b}, \\ 
\pm i[\cN,\cE_{N,t}] \le \; &\delta \cV_N + N^{-\beta/2} \cK + Ce^{C|t|}\max(N^{-\alpha}, \delta^{-1})(\cN+1) \\ &+ K_b e^{C|t|}\max(\delta,\delta^{-1})\frac{(\cN+1)^2}{N} + K_b e^{C|t|}(\cN+1) (\cN/N)^{2b}, \\ 
\pm \partial_t \cE_{N,t} \le \; &\delta \cV_N + N^{-\beta/2} \cK + Ce^{Ct}\max(N^{-\alpha}, \delta^{-1})(\cN+1) \\ &+ K_b e^{C|t|}\max(\delta,\delta^{-1})\frac{(\cN+1)^2}{N} + K_b e^{C|t|}(\cN+1) (\cN/N)^{2b}
\end{split}
\end{equation}
for all $\delta > 0$, for all $t \in \bR\backslash \{ 0 \}$ and for all choices of the constant $C_b$ in the definition of $\cG_{N,t}$ (recall that $b \in \bN$ and $C_b$ enter $\cG_{N,t}$ through the definition of $\wt{\cL}_{N,t}$ in (\ref{eq:wLNt})).
\end{proposition}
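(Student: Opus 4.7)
The plan is to expand
\[
\cG_{N,t} = (i\partial_t T_{N,t})T_{N,t}^* + T_{N,t}\,\w\cL_{N,t}\,T_{N,t}^*
\]
line by line, using the Bogoliubov action $T_{N,t} a(f) T_{N,t}^* = a(\text{ch}(f)) + a^*(\text{sh}(\bar f))$ together with the decompositions $\text{ch} = \1 + \text{p}$, $\text{sh} = \text{k} + \text{r}$, and $k_{N,t}(x;y) = -N\omega_N(x-y)\varphi_{N,t}^2((x+y)/2) + \text{v}(x;y)$. For each line of $\w\cL_{N,t}$ in (\ref{eq:wLNt}) I would (i) keep the leading quadratic part, which matches the corresponding piece in the definitions of $\cG_{2,N,t}^{\cV}$, $\cG_{2,N,t}^{\cK}$, $\cG_{2,N,t}^{\lambda_N}$ and the scalar $\eta_N(t)$; (ii) collect constant contributions into $\eta_N(t)$, using $\lambda_N = 3b_0/(8\pi N\ell^3)+O(N^{\beta-2})$ to relate $\mu_N(t)$ to the $N\lambda_N$ piece; and (iii) store the replacements $G_b(\cN/N)\mapsto 1$ and $(N-\cN)/N\mapsto 1$ in $\cE_{N,t}$, which produce the $(\cN+1)(\cN/N)^{2b}$ and $(\cN+1)^2/N$ terms in (\ref{eq:wcEbounds}) via (\ref{eq:Pb-1}).

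\textbf{Scattering-equation cancellation.} The critical step, flagged in the discussion preceding Proposition~\ref{lem:wG}, is a cancellation among three blocks of quadratic terms that individually blow up for $\beta > 1/2$: (a) the $K_{2,N,t}$-block in $\cG_{2,N,t}^{\cV}$, carrying the kernel $\varphi_{N,t}(x)V_N(x-y)\varphi_{N,t}(y)$; (b) the off-diagonal $a^*a^*$ piece from conjugating $d\Gamma(-\Delta)$ by $T_{N,t}$, which on the leading $-N\omega_N\varphi_{N,t}^2$ part of $k_{N,t}$ yields a kernel $-\Delta_x\bigl(N\omega_N(x-y)\varphi_{N,t}^2((x+y)/2)\bigr)$; and (c) the leading cross term from the conjugation of the quartic interaction. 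Summed, (a)+(b)+(c) reconstruct $\bigl(-\Delta + \tfrac{1}{2N}V_N\bigr)(N\omega_N)\cdot \varphi_{N,t}^2$ which, by the Neumann equation (\ref{eq:Neum}), equals $N\lambda_N f_N\varphi_{N,t}^2$ on $|x-y|\le\ell$: this is exactly the kernel of $\cG_{2,N,t}^{\lambda_N}$. The residual pieces arise from commuting $\nabla$ with $Q_{N,t}\otimes Q_{N,t}$ and from derivatives of $\varphi_{N,t}^2((x+y)/2)$; these generate the $N^{-\beta/2}\cK$ and $\delta\cV_N$ contributions in (\ref{eq:wcEbounds}), using $\|\nabla_1 k_{N,t}\|_{\rm HS}\le CN^{\beta/2}$ from (\ref{eq:kNt-bd}) together with Cauchy--Schwarz.

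\textbf{Bounds on $\cG_{2,N,t}$.} The only kinetic contribution in $\cG_{2,N,t}-\eta_N-\cK$ is the diagonal $\int a_x^*(-\Delta_x)a_x\,dx = \cK$ which is subtracted out; the remaining pieces of $\cG_{2,N,t}^{\cK}$ involve $-\Delta\text{p}_x$, $-\Delta\text{r}_x$ or $-\Delta\text{v}_x$ paired with one $a^{(*)}$, controlled via the standard estimate $\|a(f)\xi\|\le\|f\|\,\|\cN^{1/2}\xi\|$ and the $H^1$-Hilbert--Schmidt bounds on $\text{p}$, $\text{r}$, $\text{v}$ inherited from (\ref{eq:kNt-bd}) and the smallness of $\ell$. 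For $\cG_{2,N,t}^{\cV}$ use $\|V_N\ast|\varphi_{N,t}|^2\|_\infty\le C$ and $\|K_{i,N,t}\|_{\rm HS}\le C$ obtained from (\ref{eq:phi-bds}) and $V_N\varphi^2\in L^2$. The commutator $[\cG_{2,N,t},i\cN]$ kills the number-preserving pieces, leaving only pair terms bounded by $\cN+1$ by the same estimates. The time derivative bound follows by differentiating $\varphi_{N,t}$, $k_{N,t}$ and $Q_{N,t}$, using the $H^2$ bound on $\partial_t\varphi_{N,t}$ in (\ref{eq:phi-bds}) (this is where the exclusion $t\neq 0$ becomes cosmetic rather than substantive).

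\textbf{Bounds on $\cE_{N,t}$ and main obstacle.} The error $\cE_{N,t}$ then collects the sub-leading pieces produced by the conjugation. The cubic terms from the third and sixth lines of (\ref{eq:wLNt}) split by Cauchy--Schwarz into a $\delta \cV_N$ piece plus a $\delta^{-1}(\cN+1)^2/N$ piece, producing the $\max(\delta,\delta^{-1})(\cN+1)^2/N$ contribution in (\ref{eq:wcEbounds}). The quartic, after conjugation, yields $\cV_N$ (kept as-is), the cancellation pieces of paragraph two, and genuinely higher-order contributions with two extra $\text{sh}$ or $\text{p}$ factors, bounded by $\delta\cV_N + \delta^{-1}(\cN+1)$ using $\|\text{sh}\|_{\text{op}},\|\text{p}\|_{\text{op}}\le C$. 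The $\cN(\cN/N)^{2b}$ piece added to $\w\cL_{N,t}$ contributes directly (with the factor $\tfrac12 C_b e^{C_b|t|}$ in the first line of (\ref{eq:wcEbounds})). The commutator with $\cN$ and the time derivative are handled identically, picking up $e^{C|t|}$ factors from $\partial_t\varphi_{N,t}$ or $\partial_t k_{N,t}$. The \emph{main obstacle} is the combinatorial bookkeeping in the scattering cancellation of paragraph two: identifying precisely which pieces of three separate conjugations (the $K_2$-block, the kinetic conjugation on $\text{k}$, and the quartic conjugation against $\text{k}$) combine to rebuild the Neumann left-hand side, and showing that all commutators with $Q_{N,t}\otimes Q_{N,t}$ produce only errors of the size permitted by (\ref{eq:wcEbounds}).
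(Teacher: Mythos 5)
Your strategy is, at bottom, the one the paper uses: conjugate $\w\cL_{N,t}$ with $T_{N,t}$, extract the quadratic generator, and exploit the cancellation between the $K_{2,N,t}$-block, the conjugated kinetic energy and the conjugated quartic interaction that reconstructs the Neumann problem (\ref{eq:Neum}); you have correctly identified this cancellation as the crux. The paper, however, organizes the argument differently. It first isolates the simplified generator $\cG^{\rm c}_{N,t}$ of (\ref{def:cG}), whose legs $\cL^{\rm c}_{N,t}$ coincide (up to the projections $Q_{N,t}$) with the generator already analyzed for coherent initial data in \cite{BCS}, and imports the entire conjugation computation and scattering cancellation as Proposition \ref{lem:cLj}, a small modification of \cite[Theorem 3.1]{BCS}. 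The proof of Proposition \ref{lem:wG} proper then reduces to writing $\cE_{N,t}=T_{N,t}\big(\w\cL_{N,t}-\cL^{\rm c}_{N,t}\big)T_{N,t}^*-C_b e^{C_b|t|}\cN(\cN/N)^{2b}+\cE^{\rm c}_{N,t}$ and bounding the seven resulting operators $B_1,\dots,B_7$ (the $\cN(\cN+1)/N$ term, the $G_b(\cN/N)-1$ corrections to the linear and cubic terms, the $\cN/N$-weighted $d\Gamma$ and $K_{2,N,t}$ terms, the $V_N\omega_N$ convolution, and the conjugated cutoff) one at a time, using Cauchy--Schwarz and the moment bound of Proposition \ref{lem:Bog-N}. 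Your plan to redo the full line-by-line conjugation is more self-contained but reproduces exactly the part the paper declines to repeat; what the paper's route buys is that only the differences between $\w\cL_{N,t}$ and the coherent-state generator need fresh estimates. Two cautions on your sketch: (i) the term $C_b e^{C_b|t|}\cN(\cN/N)^{2b}$ is pulled out explicitly in the stated decomposition of $\cG_{N,t}$, so what lands in $\cE_{N,t}$ is only the conjugation difference $T_{N,t}\cN(\cN/N)^{2b}T_{N,t}^*-\cN(\cN/N)^{2b}$; the coefficient $\tfrac12 C_b e^{C_b|t|}$ in (\ref{eq:wcEbounds}) comes from Proposition \ref{lem:Bog-N} with small $\delta$, which is precisely where the smallness of $\ell$ enters. (ii) The sum of your blocks (a)+(b)+(c) is not $(-\Delta+\tfrac{1}{2N}V_N)(N\omega_N)\varphi_{N,t}^2$ but rather $\tfrac12 V_N\varphi_{N,t}^2$ minus that quantity, which by (\ref{eq:Neum}) equals $N\lambda_N f_N\varphi_{N,t}^2$ on $|x-y|\le\ell$; your final identification with the kernel of $\cG^{\lambda_N}_{2,N,t}$ is correct, but the intermediate packaging has a sign slip that would derail the computation if carried out literally.
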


As a simple corollary of Proposition \ref{lem:wG}, we can show that the expectation of the energy and the expectation and certain moments of the  number of particles operator are approximately preserved along the evolution generated by $\cG_{N,t}$; this bound will play an important role in the rest of our analysis (in particular, in the proof of Lemma \ref{lem:wPhi} below). 
\begin{corollary}\label{lem:xiN}
Assume Hypothesis A holds true. Let 
$\xi_N \in \cF_{\perp \ph_0}$ with $\| \xi_N \| \leq 1$ and such that 
\begin{equation}\label{eq:strong-xiN}
\big\langle \xi_N, \left[ \cH_N + \cN + \cN (\cN / N)^{2b} \right] \xi_N \big\rangle \leq C 
\end{equation}
uniformly in $N$ (where $b \in \bN$ is the parameter entering the definition of $\cG_{N,t}$ through (\ref{eq:wLNt})). Let $\xi_{N,t}$ be the solution of (\ref{eq:eq-xiN}) and $\xi_{2,N,t} = \cU_{2,N} (t;0) \xi_N$ with the quadratic dynamics $\cU_{2,N}$ defined in (\ref{eq:U2Nts-def}). Then, for every $b \in \bN$ and for sufficiently large $C_b > 0$, there exists a constant $C > 0$ such that 
\[ \begin{split} \big\langle \xi_{2,N,t} , \left[ \cH_N + \cN + \cN ( \cN / N)^{2b} \right] \xi_{2,N,t} \big\rangle &\leq C \exp ( C \exp (C |t|)) \\ \langle \xi_{N,t} , \left[ \cH_N + \cN + \cN ( \cN / N)^{2b} \right] \xi_{N,t} \rangle &\leq C \exp ( C \exp (C |t|)) \end{split} \]
for all $t \in \bR$.
\end{corollary}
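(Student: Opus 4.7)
The strategy is a Gronwall argument on an energy-like functional built directly from the generator. For the $\xi_{N,t}$-dynamics, I would set
$$\mathcal{F}(t) := \langle \xi_{N,t},\, [\cG_{N,t} - \eta_N(t) + D e^{D|t|}(\cN+1)] \,\xi_{N,t} \rangle$$
for a large constant $D$ to be fixed. By Proposition~\ref{lem:wG},
$$\cG_{N,t} - \eta_N(t) = \cH_N + C_b e^{C_b|t|}\cN(\cN/N)^{2b} + [\cG_{2,N,t} - \eta_N(t) - \cK] + \cE_{N,t},$$
and the bounds (\ref{eq:G2Nt-est})--(\ref{eq:wcEbounds}) let me sandwich $\mathcal{F}(t)$, as a quadratic form on $\cF$, between $c[\cH_N + (\cN+1) + \cN(\cN/N)^{2b}]$ and $C e^{C|t|}[\cH_N + (\cN+1) + \cN(\cN/N)^{2b}]$, provided $b\geq 1$, $C_b$ is taken large relative to $K_b$, and $D$ is then taken large relative to $C_b$. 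The crucial operator inequalities making this absorption possible are
$$\frac{(\cN+1)^2}{N} \leq C\big[(\cN+1) + \cN(\cN/N)^{2b}\big], \qquad (\cN+1)(\cN/N)^{2b} \leq C\,\cN(\cN/N)^{2b} + C,$$
the first by splitting over $\cN \leq N$ and $\cN > N$, the second by Young's inequality $\cN^{2b} \leq \epsilon \cN^{2b+1} + C_\epsilon$.

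Next, I would differentiate $\mathcal{F}$. Using $[\cG_{N,t},\cG_{N,t}] = 0$ I get
$$\frac{d}{dt} \mathcal{F}(t) = \langle \xi_{N,t}, \partial_t[\cG_{N,t}-\eta_N(t)]\,\xi_{N,t}\rangle + iD e^{D|t|}\langle \xi_{N,t}, [\cG_{N,t},\cN]\xi_{N,t}\rangle + D^2 e^{D|t|}\langle \xi_{N,t}, (\cN+1)\xi_{N,t}\rangle.$$
Since $\cV_N$ and $\cN(\cN/N)^{2b}$ commute with $\cN$, the commutator reduces to $[\cG_{2,N,t},\cN] + [\cE_{N,t},\cN]$, controlled by the second lines of (\ref{eq:G2Nt-est}) and (\ref{eq:wcEbounds}); the $\partial_t$ term decomposes analogously into $\partial_t[\cG_{2,N,t}-\eta_N]$, $C_b^2\mathrm{sgn}(t)e^{C_b|t|}\cN(\cN/N)^{2b}$, and $\partial_t\cE_{N,t}$, controlled by the third lines. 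Choosing $\delta$ of order one and invoking the two absorption inequalities above to neutralize the $(\cN+1)^2/N$ and $(\cN+1)(\cN/N)^{2b}$ contributions, every error is pointwise in $t$ bounded by $C_{b,D}e^{C_{b,D}|t|}[\cH_N + (\cN+1) + \cN(\cN/N)^{2b}] \leq C_{b,D}e^{C_{b,D}|t|}\mathcal{F}(t)$. Gronwall then yields $\mathcal{F}(t) \leq \mathcal{F}(0)\exp(C\exp(C|t|))$, and the hypothesis (\ref{eq:strong-xiN}) gives $\mathcal{F}(0)\leq C$. The claimed bound for $\xi_{N,t}$ follows from the lower sandwich bound on $\mathcal{F}$.

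For $\xi_{2,N,t}$ the same scheme applies, with $\cG_{N,t}$ replaced by $\cG_{2,N,t}$ in the definition of the functional, augmented with an explicit stabilizer $E e^{E|t|}\cN(\cN/N)^{2b}$ since the corresponding term is absent from $\cG_{2,N,t}$; all commutator and time-derivative inputs are again supplied by Proposition~\ref{lem:wG}, closing the Gronwall for $\cK + (\cN+1) + \cN(\cN/N)^{2b}$. To extract the missing $\cV_N$-contribution to $\cH_N$, I would separately compute $\frac{d}{dt}\langle \xi_{2,N,t},\cV_N\xi_{2,N,t}\rangle = i\langle \xi_{2,N,t},[\cG_{2,N,t},\cV_N]\xi_{2,N,t}\rangle$: the commutator of the quadratic generator with the quartic interaction is a normal-ordered cubic expression which, via Cauchy--Schwarz in creation/annihilation operators, is bounded by $\epsilon\cV_N + C_\epsilon e^{C|t|}[\cK + (\cN+1)^2/N]$, so that a coupled Gronwall with the previous functional closes.

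The main obstacle I foresee is the error $K_b e^{C|t|}(\cN+1)^2/N$ in Proposition~\ref{lem:wG}: on the untruncated Fock space this behaves like $\cN^2$ and cannot be absorbed by $\cN+1$ alone. The remedy is precisely the presence of the higher-moment stabilizer $\cN(\cN/N)^{2b}$ in $\cG_{N,t}$ (inserted in (\ref{eq:wLNt}) for exactly this purpose) together with the choice $b\geq 1$ and the hierarchy $K_b \ll C_b \ll D$. Once this bookkeeping is arranged, the closure of the Gronwall inequality is essentially automatic.
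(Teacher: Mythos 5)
Your treatment of the first bound (for $\xi_{N,t}$) is essentially identical to the paper's proof: both run Gr\"onwall on the functional $\langle \xi_{N,t}, (\cG_{N,t}-\eta_N(t)+Ce^{C|t|}\cN)\xi_{N,t}\rangle$, using the three families of estimates in Proposition \ref{lem:wG} to sandwich this functional between multiples of $\cH_N+\cN+1+\cN(\cN/N)^{2b}$ (this is exactly (\ref{eq:cGmm}), obtained with $\delta=1/2$ and $C_b$ large) and to dominate $i[\cG_{N,t},\cN]$ and $\partial_t(\cG_{N,t}-\eta_N(t))$ by the functional itself. Your two absorption inequalities are correct (the first needs $b\ge 1$, which is harmless since $b$ is large), and the hierarchy $K_b\ll C_b\ll D$ is precisely how the paper closes the argument.

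For the second bound the paper is terse (``same strategy with $\cG_{2,N,t}$''), and you rightly observe that the naive functional $\langle\xi_{2,N,t},(\cG_{2,N,t}-\eta_N(t)+Ce^{C|t|}\cN)\xi_{2,N,t}\rangle$ only controls $\cK+\cN$, since neither $\cV_N$ nor $\cN(\cN/N)^{2b}$ appears in the quadratic generator. However, your proposed recovery of the $\cV_N$-contribution does not work as written: $[\cG_{2,N,t},\cV_N]$ is not a cubic expression --- the commutator of a quadratic operator with a quartic one is again quartic (parity of the degree is preserved) --- and, worse, it contains $[d\Gamma(-\Delta),\cV_N]$, whose kernel involves $\nabla V_N\cdot\nabla$ with $\|\nabla V_N\|_{L^1}\sim N^{\beta}$; there is no evident way to bound this by $\epsilon\,\cV_N+C_\epsilon e^{C|t|}\big[\cK+(\cN+1)^2/N\big]$, so that sub-step is a genuine gap. (The high-moment part for $\xi_{2,N,t}$ is less problematic: an estimate $\pm i[\cG_{2,N,t},\cN^{2b+1}]\le Ce^{C|t|}(\cN+1)^{2b+1}$ follows from the same bounds underlying the second line of (\ref{eq:G2Nt-est}), after which propagation of $\langle(\cN+1)^{2b+1}\rangle$ closes by Gr\"onwall, without any extra stabilizer in the generator.) Note finally that everywhere Corollary \ref{lem:xiN} is applied to $\xi_{2,N,t}$ (in the proofs of Lemma \ref{lem:Compare-xiN} and Theorem \ref{thm:main2}) only the $\cK+\cN+1$ control is used, so the defect concerns a part of the statement that is stronger than what is needed downstream; still, as a proof of the corollary as stated, this piece is missing.
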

\begin{proof}
{F}rom (\ref{eq:G2Nt-est}) and (\ref{eq:wcEbounds}) with $\delta = 1/2$
we find that, if $C_b > 0$ is large enough, 
\begin{equation}\label{eq:cGmm} \begin{split} \cG_{N,t} &\geq \eta_N (t) + \frac{1}{2} \cH_N - C e^{C|t|}  (\cN+1) + \frac{1}{4} C_b e^{C_b|t|} \cN (\cN/N)^{2b} \\
\cG_{N,t} &\leq \eta_N (t) + 2 \cH_N + C e^{C|t|}  (\cN+1) + 2 C_b e^{C_b|t|} \cN (\cN/N)^{2b} \end{split} \end{equation}
and also 
\begin{equation}\label{eq:commGN} \begin{split} i [ \cG_{N,t} , \cN ] &\leq C e^{C|t|} (\cN+1) + \cH_N + K_b e^{C|t|} \cN (\cN/N)^{2b} \\ &\leq C e^{C|t|} (\cG_{N,t} - \eta_N (t)) + C e^{C|t|} (\cN+1) \\ 
\partial_t (\cG_{N,t} - \eta_N (t)) &\leq C e^{C|t|} (\cN+1) + \cH_N + K_b e^{C|t|} \cN (\cN/N)^{2b}\\  &\leq C e^{C|t|} (\cG_{N,t} - \eta_N (t)) + C e^{C|t|} (\cN+1) \end{split} \end{equation}
We have, for any $t > 0$, 
\[\begin{split}
        &\partial_t  \langle \xi_{N,t}, (\cG_{N,t}-\eta_{N}(t) +C e^{Ct}\cN) \xi_{N,t} \rangle \\
        &\hspace{1cm}= Ce^{Ct}\langle \xi_{N,t}, i[\cG_{N,t},\cN] \xi_{N,t} \rangle + \big\langle \xi_{N,t}, \left(\partial_t \left(\cG_{N,t}-\eta_N(t)\right) +C^2e^{Ct}\cN\right) \xi_{N,t} \big\rangle. 
        \end{split}\]
Thus, from (\ref{eq:commGN}), 
\[ \begin{split} \partial_t  \langle \xi_{N,t}, &(\cG_{N,t}-\eta_N(t) +Ce^{Ct} \cN) \xi_{N,t} \rangle \\ &\leq \wt{C} \exp(\wt{C}t) \langle \xi_{N,t}, (\cG_{N,t}-\eta_N(t) + C e^{Ct}(\cN+1)) \xi_{N,t} \rangle \end{split} \]
for a sufficiently large constant $\wt{C} > 0$. Gr\"onwall's lemma yields
\[ \begin{split} \langle \xi_{N,t}, (\cG_{N,t}&-\eta_N(t) +C e^{Ct} \cN) \xi_{N,t} \rangle \\ &\leq \wt{C} \exp(\wt{C} \exp(\wt{C} t)) \langle \xi_{N}, (\cG_{N,0}-\eta_N(t) +C (\cN+1)) \xi_{N} \rangle. \end{split} \] 
{F}rom (\ref{eq:cGmm}), we conclude that, for a sufficiently large constant $C >0$,  
		\[ \begin{split} 
       \langle \xi_{N,t}, &(\cH_N +\cN+\cN(\cN/N)^{2b}) \xi_{N,t} \rangle \\
       & \le C \exp(C\exp(Ct))\langle \xi_{N}, (\cH_N +\cN+1+\cN(\cN/N)^{2b}) \xi_{N} \rangle.
        \end{split}\]
The case $t < 0$ can be treated analogously. To obtain the estimates for $\xi_{2,N,t}$ we follow exactly the same strategy, with generator $\cG_{N,t}$ replaced by $\cG_{2,N,t}$. 
\end{proof}

An important ingredient in the proof of Proposition \ref{lem:wG} is the following result, whose proof can be found, for example, in \cite{BS}; it controls the growth of moments of the number of particles operator under the action of the Bogoliubov transformation $T_{N,t}$. 
\begin{proposition} \label{lem:Bog-N} 
Assume Hypothesis A holds true and let $T_{N,t}$ denote the Bogoliubov transformation defined in (\ref{eq:Bog-trans}). Then, for every fixed $k \in \bN$ and $\delta > 0$, there exists $C > 0$ such that 
\begin{equation}\label{eq:lemBog}
\pm(T_{N,t} \cN^k T_{N,t}^*-\cN^k) \le \delta \cN^k + C.
\end{equation}
\end{proposition}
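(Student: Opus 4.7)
The plan is to prove this by a Gr\"onwall argument along the one-parameter family of Bogoliubov transformations $T_s := \exp(s A_{N,t})$, where $A_{N,t} = \frac{1}{2}\int k_{N,t}(x,y)\, a_x a_y\,dx\,dy - \mathrm{h.c.}$ is the anti-self-adjoint generator, so that $T_{N,t} = T_1$. First I would differentiate in $s$ to obtain
\[
\frac{d}{ds}\, T_s (\cN+1)^k T_s^* = T_s \bigl[A_{N,t},(\cN+1)^k\bigr] T_s^*,
\]
and compute the commutator via the CCR identities $[\cN, a_x a_y]=-2 a_x a_y$ and $[\cN, a_x^* a_y^*]=2 a_x^* a_y^*$. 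Iterating these with the binomial expansion, $[A_{N,t},(\cN+1)^k]$ reorganizes as a finite sum of expressions of the form $P_j(\cN)\bigl(\int k_{N,t}\, a_x a_y\,dx\,dy\bigr) + \mathrm{h.c.}$ with polynomials $P_j$ of degree at most $k-1$.

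Second, I would apply the standard quadratic-form bound
\[
\pm\Bigl(\int k_{N,t}(x,y)\, a_x a_y\,dx\,dy + \mathrm{h.c.}\Bigr) \le C\,\|k_{N,t}\|_{\mathrm{HS}}\,(\cN+1),
\]
which follows from Cauchy--Schwarz on Fock space. Combining this with the polynomial factor---using that $\int k_{N,t}\, a_x a_y$ shifts $\cN$ by $-2$, so that $P_j(\cN)\bigl(\int k_{N,t} aa\bigr) = \bigl(\int k_{N,t} aa\bigr) P_j(\cN-2)$---one obtains the operator inequality $\pm[A_{N,t},(\cN+1)^k] \le C_k\|k_{N,t}\|_{\mathrm{HS}}(\cN+1)^k$. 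Plugging this into the differential identity above and applying Gr\"onwall's lemma from $s=0$ to $s=1$ yields
\[
T_{N,t}(\cN+1)^k T_{N,t}^* \le \exp\bigl(C_k \|k_{N,t}\|_{\mathrm{HS}}\bigr)(\cN+1)^k.
\]

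Third, to extract the form stated in the proposition I would expand $(\cN+1)^k = \cN^k + \sum_{j<k}\binom{k}{j}\cN^j$ and absorb the lower-order terms via the elementary spectral-calculus inequality $\cN^j \le \delta'\cN^k + C_{k,\delta'}$, valid for all $0\le j < k$ and any $\delta' > 0$ since $x^j = o(x^k)$ as $x\to\infty$. Because Hypothesis A permits $\ell$ to be taken small, the Hilbert--Schmidt norm $\|k_{N,t}\|_{\mathrm{HS}}$ is small enough for $\exp(C_k\|k_{N,t}\|_{\mathrm{HS}})-1$ to fall below any prescribed threshold; together with the freedom in $\delta'$ this gives $T_{N,t}\cN^k T_{N,t}^* - \cN^k \le \delta \cN^k + C$ for the specified $\delta > 0$. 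The matching lower bound follows by applying the same argument to $T_{N,t}^* = \exp(-A_{N,t})$ and sandwiching. The main technical obstacle is the careful commutator bookkeeping in Step 1 together with verifying that the quadratic-form estimate in Step 2 is uniform for all intermediate $T_s$, $s\in[0,1]$---a point which ultimately relies on the bound $T_s \cN^j T_s^* \le C(\cN+1)^j$ for lower powers $j<k$, so the whole argument is naturally organised as an induction on $k$.
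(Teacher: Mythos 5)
Your proposal is correct and is essentially the argument the paper relies on: the paper does not prove Proposition \ref{lem:Bog-N} itself but defers to \cite{BS}, where the bound is obtained by exactly your Gr\"onwall scheme --- differentiate $\langle\xi, T_s(\cN+1)^kT_s^*\xi\rangle$ along $T_s=e^{sA_{N,t}}$, estimate $\pm[A_{N,t},(\cN+1)^k]\le C_k\|k_{N,t}\|_{\mathrm{HS}}(\cN+1)^k$ by distributing powers of $(\cN+1)$ across the pairing, and integrate in $s$. Your observation that the arbitrarily small $\delta$ is bought by the smallness of $\ell$ (through $\|k_{N,t}\|_{\mathrm{HS}}\le C\sqrt{\ell}$, uniformly in $N$ and $t$ by (\ref{eq:kNt-bd}) and (\ref{eq:phi-bds})) is precisely the content of the paper's remark following the proposition.
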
 
Remark that (\ref{eq:lemBog}) requires smallness of the parameter $\ell > 0$ in (\ref{eq:Neum}) (an assumption that is included in Hypothesis A). With no assumption on the size of $\ell > 0$, (\ref{eq:lemBog}) remains true, but only for $\delta >0$ large enough. 

To show Proposition \ref{lem:wG}, we are going to consider first a simplified version of the generator $\cG_{N,t}$, given by  
\bq
\cG^\text{c}_{N,t}= (i\partial_t T_{N,t})T_{N,t}^* + T_{N,t} \cL_{N,t}^{\rm c} T_{N,t}^*. \label{def:cG}
\eq
with  $\cL_{N,t}^{\rm c}$ given by 
\begin{equation}\label{cLj}
\begin{split} 
 \cL_{N,t}^{\rm c}= & \frac{N+1}{2} \product{\varphi_{N,t}}{ [V_N (1-2f_N) \ast|\varphi_{N,t}|^2 ]\varphi_{N,t}}  - \mu_{N}(t)  \\
& + \big[\sqrt{N} a^* (Q_{N,t}[V_N \omega_N \ast|\varphi_{N,t}|^2] \varphi_{N,t}) +{\rm h.c.}\big] \\
 & +  \dGamma\Big(-\Delta + (V_Nf_N)*|\varphi_{N,t}|^2+K_{1,N,t}-\mu_{N,t} \Big) \\
   & + \Big[      \frac{1}{2} \int  \d x \d y \, K_{2,N,t}(x,y)a^*_x a^*_y + {\rm h.c.} \Big] \\
 & + \bigg[\frac{1}{\sqrt{N}}\int dx dy dx'dy'\, (Q_{N,t} \otimes Q_{N,t} V_N Q_{N,t} \otimes 1)(x,y;x',y')\varphi_{N,t} (y') a_{x}^* a_{y}^* a_{x'} \\ &\hspace{3cm} + \text{h.c.}\bigg] \\
& + \frac{1}{2N}\int dx dydx'dy' \, (Q_{N,t}\otimes Q_{N,t}V_N Q_{N,t}\otimes  Q_{N,t}) (x,y; x',y')a^*_x a^*_y a_{x'} a_{y'}.
\end{split} 
\end{equation}

The reason for considering first the generator $\cG_{N,t}^\text{c}$ is the fact that this is essentially the operator generating the fluctuation dynamics studied in \cite{BCS} for approximately coherent initial data. The only difference is the fact that, here, we always project onto the orthogonal complement of $\ph_{N,t}$. The presence of the projection $Q_t$, however, does not substantially affect the analysis of \cite{BCS}. With only small and local modifications of the proof of \cite[Theorem 3.1]{BCS}, we obtain the following proposition. 
\begin{proposition} \label{lem:cLj} 
Assume Hypothesis A holds true. Let $\cG^c_{N,t}$ be as defined in (\ref{def:cG}). Then, we have  
\bq
\cG^\text{c}_{N,t}=  \cG_{2,N,t} + \cV_N + \cE^c_{N,t}
\eq
where the quadratic generator $\cG_{2,N,t}$ is defined in \eqref{eq:cG2N} and satisfies the estimates (\ref{eq:G2Nt-est}) and where there exists a constant $C > 0$ such that the error operator 
$\cE^\text{c}_{N,t}$ satisfies  
\begin{equation}\label{eq:bndcE}
\begin{split}
\pm \cE^c_{N,t} \le \; &\delta \cV_N +N^{-\beta/2}\cK + Ce^{C|t|}\max(N^{-\alpha}, \delta^{-1})(\cN+1) \\ &+ Ce^{C|t|} \max(\delta,\delta^{-1})(\cN+1)^2/N ,\\
\pm i[\cN,\cE^c_{N,t}] \le \; &\delta \cV_N +N^{-\beta/2}\cK + Ce^{C|t|}\max(N^{-\alpha}, \delta^{-1})(\cN+1) \\
&+ Ce^{C|t|} \max(\delta,\delta^{-1})(\cN+1)^2/N ,\\
\pm \partial_t \cE^c_{N,t} \le\; &\delta \cV_N +N^{-\beta/2}\cK + Ce^{C|t|}\max(N^{-\alpha}, \delta^{-1})(\cN+1) \\
&+ Ce^{C|t|}\max(\delta,\delta^{-1})(\cN+1)^2/N 
\end{split}
\end{equation}
for all $\delta > 0$ and $t \in \bR$. 
\end{proposition}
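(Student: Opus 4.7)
The plan is to closely follow the strategy of \cite[Theorem 3.1]{BCS}, adapting it to the present setting in which the kernel $k_{N,t}$ carries the projections $Q_{N,t}\otimes Q_{N,t}$. The argument naturally splits into three steps. First, I would conjugate each summand in $\cL^{\rm c}_{N,t}$ from (\ref{cLj}) by the Bogoliubov transformation $T_{N,t}$, using the action rules displayed just above (\ref{eq:chsh}) together with the decompositions $\text{ch}=\1+\text{p}$ and $\text{sh}=\text{k}+\text{r}$. Each quadratic piece $a^*(f)a(g)$, $a^*(f)a^*(g)$, $a(f)a(g)$ then produces a principal term matching the corresponding block of $\cG_{2,N,t}$ in (\ref{eq:defG2NtVlambda})--(\ref{eq:defG2NtK}), plus remainders of higher order in $k_{N,t}\bar k_{N,t}$. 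The cubic term $\propto N^{-1/2}$ and the quartic interaction $\cV_N$ are treated analogously; the former gives the matching cubic block of $\cG^{\cV}_{2,N,t}$ together with quadratic reductions, while the latter reproduces $\cV_N$ itself and further quadratic pieces which must be combined with the contributions coming from the kinetic energy.

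The second and crucial step is to exploit the algebraic cancellation announced in the excerpt. The off-diagonal quadratic pieces generated by conjugating $\dGamma(-\Delta)$ and the quartic interaction $\cV_N$ with $T_{N,t}$, combined with the $K_{2,N,t}$ off-diagonal term in (\ref{cLj}), effectively apply to $k_{N,t}$ the scattering operator $-2\Delta+V_N$. By the Neumann problem (\ref{eq:Neum}) and the definition (\ref{eq:kNt}), this combination reduces, on the ball $|x-y|\le\ell$, to a contribution proportional to $N\lambda_N f_N(x-y)\varphi_{N,t}^2((x+y)/2)$, producing exactly $\cG^{\lambda_N}_{2,N,t}$, up to projection-tail corrections coming from the two $Q_{N,t}$ factors in $k_{N,t}$ and scalar contributions that are absorbed into the phase $\eta_N(t)$ of (\ref{eq:etaN}). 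The piece $(i\partial_t T_{N,t})T_{N,t}^*$ is handled by the standard identity that expresses it as a quadratic operator with kernel essentially $\tfrac12(i\partial_t k_{N,t})\cosh_{k_{N,t}}$, giving the corresponding summand in (\ref{eq:cG2N}) plus further scalar phase terms that are again collected into $\eta_N(t)$; the bounds (\ref{eq:phi-bds}) and (\ref{eq:kNt-bd}) then yield the three estimates (\ref{eq:G2Nt-est}) on $\cG_{2,N,t}$.

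In the third step, the error operator $\cE^c_{N,t}$ collects remainders falling into three classes: cubic $N^{-1/2}a^*a^*a$ terms with an extra factor of $\text{sh}$ or a kernel restricted to the support of $\omega_N$, which I would bound via Cauchy--Schwarz by $\delta\cV_N+C\delta^{-1}N^{-1}(\cN+1)^2$; quadratic terms whose kernels are built from $V_N$, $\omega_N$, $\varphi_{N,t}$ and projections, which I would bound using the pointwise estimates (\ref{eq:fN-bound}) together with (\ref{eq:phi-bds}) to extract the prefactor $N^{-\alpha}$ with $\alpha=\min(\beta/2,(1-\beta)/2)$; and, most delicately, a residual quadratic kinetic piece where one Laplacian hits $k_{N,t}$, which requires borrowing a fraction $N^{-\beta/2}\cK$ from the leading kinetic energy to absorb the factor $\|\nabla k_{N,t}\|_{\rm HS}\le CN^{\beta/2}$ provided by (\ref{eq:kNt-bd}). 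The bounds for $\pm i[\cN,\cE^c_{N,t}]$ and $\pm\partial_t\cE^c_{N,t}$ follow the same scheme: commutation with $\cN$ alters each creation-annihilation monomial by a bounded multiplicative factor, and the time derivative is reduced, via (\ref{eq:phi-bds}), to time derivatives of $k_{N,t}$ and of $\varphi_{N,t}$ satisfying estimates analogous to (\ref{eq:kNt-bd}).

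The hardest part is bookkeeping: after the Neumann cancellation is localized, dozens of remainder terms must be matched against the rigid bound (\ref{eq:bndcE}), and the sharp rate $N^{-\alpha}$ can only be recovered by pairing each singular kernel with the correct quantity among $\cV_N$, $\cK$ and $(\cN+1)^2/N$. A secondary subtlety, genuinely new with respect to \cite{BCS}, is that the projections $Q_{N,t}$ in (\ref{eq:kNt}) produce extra commutators $[Q_{N,t},-\Delta]$ and additional boundary contributions that partially obstruct the exact cancellation used there; one must check that all such projection tails contribute only factors of order $\|Q_{N,t}\omega_N\|=O(N^{-1})$ and hence never spoil the final rate.
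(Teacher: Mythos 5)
Your proposal follows essentially the same route as the paper, which does not spell out a proof of this proposition at all but instead invokes \cite[Theorem 3.1]{BCS} with ``small and local modifications'', relying on exactly the Neumann-problem cancellation (combining the $K_{2,N,t}$ term with the conjugated kinetic and quartic terms to reconstruct (\ref{eq:Neum})) and on the replacement of the $\cK^2$, $\cN^2$ error controls of \cite{BCS} by $\delta\cV_N$ that you describe. The one quantitatively dubious side remark is your claim that the projection tails are of order $\|Q_{N,t}\omega_N\| = O(N^{-1})$ --- the difference between the projected and unprojected kernels is only $O(1)$ (in fact $O(\ell^2)$) in Hilbert--Schmidt norm, since $\|N\omega_N\|_{L^1}\leq C\ell^2$ --- but these corrections are regular (they involve $\varphi_{N,t}$ smeared against $N\omega_N$, producing bounded, non-singular kernels controlled by $(\cN+1)$), so they do not spoil the stated bounds, which is all the paper asserts.
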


Observe that, in \cite[Theorem 3.1]{BCS}, the operators $\cK^2$ and $\cN^2$ (the square of the kinetic energy and of the number of particles operators) are also used to control the error operator $\cE_{N,t}^c$ (see, in particular, \cite[Eq. (3.3)]{BCS}). In (\ref{eq:bndcE}), these operators do not appear; instead, we make use of the potential energy $\cV_N$ (which will be later bounded, on sectors with small number of particles, by the kinetic energy operator; see (\ref{eq:cVN<=KN})). 

Using Proposition \ref{lem:cLj}, we can proceed with the proof of Proposition \ref{lem:wG}, where we only have to control the contributions to $\cG_{N,t}$ arising from the difference $\w\cL_{N,t} - 
\cL^\text{c}_{N,t}$. 

\begin{proof}[Proof of Proposition \ref{lem:wG}.] From the definitions \eqref{def:wcG} and \eqref{def:cG} we have 
\begin{equation}\label{eq:wcENt}
\cE_{N,t}= T_{N,t}\left(\w\cL_{N,t}-\cL^{\rm c}_{N,t}\right)T_{N,t}^*-C_b e^{C_b |t|}\cN (\N/N)^{2b}+\cE^c_{N,t}
\end{equation}
We already know from Proposition \ref{lem:cLj} that $\cE^c_{N,t}$ satisfies the desired bounds. So, we focus on the first two terms on the r.h.s. 
of (\ref{eq:wcENt}). Comparing (\ref{eq:wLNt}) with (\ref{cLj}), we conclude that 
\[ T_{N,t} \left(\w\cL_{N,t}-\cL^{\rm c}_{N,t}\right) T_{N,t}^*  - C_b e^{C_b |t|} \cN (\cN/N)^{2b} = \sum_{j=1}^7 B_j \]
with
\begin{equation*}
\begin{aligned}
B_1 =  \; &\frac12\product{\varphi_{N,t}}{ [V_N\ast|\varphi_{N,t}|^2]\varphi_{N,t} } T_{N,t}\frac{\cN(\cN+1)}{N}T_{N,t}^* \\
B_2 = \; &T_{N,t} (\cL^{(1)}_{N,t} + \cL^{(3)}_{N,t}) \left(G_b(\cN/N)-1\right) T_{N,t}^* 
+\text{h.c.} \\
B_3 = \; &-T_{N,t} a^* (Q_{N,t}[V_N\ast|\varphi_{N,t}|^2] \varphi_{N,t}) \frac{\cN}{\sqrt{N}} G_b(\cN/N)T_{N,t}^*+{\rm h.c.} \\
B_4 = \; & T_{N,t}\dGamma\Big( Q_{N,t}(V_N \omega_N *|\varphi_{N,t}|^2)Q_{N,t} \Big)T_{N,t}^* \\
B_5 = \; &-  T_{N,t}\dGamma\Big( Q_{N,t}(V_N *|\varphi_{N,t}|^2)Q_{N,t} + K_{1,N,t}\Big)\frac{\cN}{N}T_{N,t}^*\\
B_6 = \; &- \frac12 \, T_{N,t} \int dxdy\; K_{2,N,t}(x,y)a^*_x a^*_y \frac{\cN}{N} T_{N,t}^*+\text{h.c.} \\
B_7 = \; &C_b e^{C_b |t|} \left( T_{N,t} \cN (\cN/N)^{2b} T_{N,t}^* -  \cN (\N/N)^{2b}\right) 
\end{aligned}
\end{equation*}
where we introduced the notation
\[ \begin{split} \cL^{(1)}_{N,t} &= \sqrt{N} a^* (Q_{N,t}[V_N \omega_N \ast|\varphi_{N,t}|^2] \varphi_{N,t})+ \text{h.c.} \\
\cL^{(3)}_{N,t} &= \frac{1}{\sqrt{N}}\int dx dy dx'dy'\, (Q_{N,t} \otimes Q_{N,t} V_N Q_{N,t} \otimes 1)(x,y;x',y')\varphi_{N,t} (y') a_{x}^* a_{y}^* a_{x'}+ \text{h.c.} \end{split} \]
Next, we control the operators $B_1, \dots , B_7$, one after the other. 

\medskip

\noindent {\it Bound for $B_1$:} From Proposition \ref{lem:Bog-N} and (\ref{eq:phi-bds}), we find immediately
$$0 \leq B_1 \leq C (\N+1)^2/N \, . $$

\noindent {\it Bound for $B_2$:} To bound the expectation of $B_2$, we write 
\begin{equation}\label{eq:B22} B_2 = \left[ T_{N,t} \cL^{(1)}_{N,t} T_{N,t}^* + T_{N,t} \cL^{(3)}_{N,t}) T^*_{N,t} \right] T_{N,t} \left(G_b(\cN/N)-1\right) T_{N,t}^*		\end{equation}
The operator in the parenthesis can be computed as in \cite[Section 3]{BCS}. The most singular contribution is the cubic term
\[ \frac{1}{\sqrt{N}} \int dx dy \, V_N (x-y)  a_x^* a_y^* a_x \ph_{N,t} (y) \]
Inserted in (\ref{eq:B22}), it produces an operator, let us denote it by $\wt{B}_2$, whose expectation can be bounded by 
\[ \begin{split} |\langle \xi , \wt{B}_2 \xi \rangle | = \; 
& \Big|\frac{1}{\sqrt{N}} \int dx dy V_N(x-y) \varphi_{N,t}(y)\langle \xi, a^*_x a^*_y a_x T_{N,t} \left(G_b(\cN/N)-1\right)T_{N,t}^* \xi\rangle \Big| \\
\leq \; &\frac{1}{\sqrt{N}}\int dxdy V_N(x-y)|\varphi_{N,t}(y)|\|a_x a_y \xi\|\|a_x T_{N,t} \left(G_b(\cN/N)-1\right)T_{N,t}^* \xi\| \\
\leq \; &\frac{\delta}{2N}\int dxdy V_N(x-y)\|a_x a_y \xi\|^2 \\ &+C\delta^{-1}e^{C|t|}\int dxdy V_N(x-y)\|a_x T_{N,t} \left(G_b(\cN/N)-1\right)T_{N,t}^* \xi\|^2 \\
\leq \; &\delta\langle \xi, \cV_N \xi\rangle + K_b \delta^{-1}e^{C|t|}\, \big\langle \xi, (\cN+1) \left[ ((\cN+1)/N)^2 + \left((\cN+1)/N\right)^{2b} \right] \xi \big\rangle \end{split}\]
for any $\delta > 0$ and for an appropriate constant $K_b$ depending on the choice of $b$. Here we used Proposition \ref{lem:Bog-N}. Other terms contributing to $B_2$ can be bounded in a similar fashion. We conclude that 
\[ \pm B_2 \leq \delta \cV_N + K_b \delta^{-1} e^{C|t|} (\cN+1) \left[ ((\cN+1)/N)^{2} + ((\cN+1)/N)^{2b} \right] \]

\noindent {\it Bound for $B_3$:} Let us now deal with $B_3$. Since $\| Q_{N,t} [V_N * |\ph_{N,t}|^2]\ph_{N,t} \| \leq C \exp (C|t|)$, we obtain, with Cauchy-Schwarz,   
\[ \pm B_3 \leq K_b \delta e^{C|t|} \frac{(\cN+1)^2}{N} + K_b e^{C|t|} \delta^{-1} \cN + K_b e^{C|t|} \delta^{-1} (\cN+1) (\cN/N)^{2b}  \]
for every $\delta > 0$ and for an appropriate constant $K_b > 0$ depending on $b \in \bN$. 

\noindent {\it Bound for $B_4$:} From (\ref{eq:fN-bound}), we have
\[ \| Q_{N,t} (V_N \omega_N * |\ph_{N,t}|^2) Q_{N,t} \|_\infty 
\leq C N^{\beta-1} e^{C|t|} \]
Hence, with Proposition \ref{lem:Bog-N}, we find 
\[ \pm B_4 \leq C N^{\beta-1} \leq C N^{\beta-1} (\cN + 1) \]

\noindent {\it Bound for $B_5$:} Similarly, since $\| K_{1,N,t} \| = \| Q_{N,t} \wt{K}_{1,N,t} Q_{N,t} \| \leq \| \wt{K}_{1,N,t} \|$, 
\[ 
\begin{split}
\|\w K_{1,N,t}\| = \; & \sup_{\|f\|_{L^2}=1}\left|\int \overline{f(x)}\varphi_{N,t}(x) V_N(x-y)\overline{\varphi_{N,t}(y)}f(y)\d x\d y \right| \\
\leq \; &\sup_{\|f\|_{L^2}=1}\|\varphi_{N,t}\|^2_{L^\infty}\int \frac{|f(x)|^2+|f(y)|^2}{2}V_N(x-y)\d x\d y \leq \; C e^{C|t|}
\end{split} 
\]
and $\| Q_{N,t} (V_N * |\ph_{N,t}|^2) Q_{N,t} \|_\infty \leq C \exp (C|t|)$, we obtain with Proposition \ref{lem:Bog-N} that 
\[ \pm B_5 \leq C e^{C|t|} \frac{(\cN+1)^2}{N}. \]

\noindent {\it Bound for $B_6$:} Proceeding as in \cite[Prop. 3.5]{BCS} we find
\begin{equation}\label{eq:B6} \begin{split} 
B_6 = \; &- \frac{1}{2N} \int dxdy\; K_{2,N,t}(x;y)\langle \text{sh}_x, \text{ch}_y\rangle T_{N,t} \cN T_{N,t}^* \\ &+ \frac{1}{2N} \int dxdy\; V_N(x-y)\varphi_{N,t}(x)\varphi_{N,t}(y)a^*_x a^*_y T_{N,t} \cN T_{N,t}^* \\ &+ \frac{1}{N} \, \cE_{6,N} T^*_{N,t} \cN T_{N,t} + \text{h.c.} \end{split} \end{equation}
where the operator $\cE_{6,N}$ is such that 
\begin{equation}\label{eq:E6} \cE_{6,N}^2 \leq C e^{C|t|} (\cN+1)^2. \end{equation}
Since 
\[ \Big|\int dxdy\; K_{2,N,t}(x;y)\langle \text{sh}_x, \text{ch}_y\rangle \Big|\leq Ce^{C|t|} \, , \]
the expectation of the first term on the r.h.s. of (\ref{eq:B6}) is bounded, with Proposition \ref{lem:Bog-N}, by 
\[ \Big| \frac{1}{2N} \int dxdy\; K_{2,N,t}(x;y)\langle \text{sh}_x, \text{ch}_y  \rangle \, \langle \xi , T_{N,t} \cN T_{N,t}^* \xi \rangle \Big| \leq C N^{-1} \langle \xi, (\cN+1) \xi \rangle \]
The expectation of the second term on the r.h.s. of (\ref{eq:B6}) can be controlled by 
\begin{equation*}
\begin{aligned}
\Big|\frac{1}{2N} \int & dxdy\; V_N(x-y)\varphi_{N,t}(x)\varphi_{N,t}(y)\, \langle \xi, a^*_x a^*_y T_{N,t} \cN  T_{N,t}^*\xi\rangle\Big| \\
&\leq \frac{1}{2N}\int dxdy\; V_N(x-y)|\varphi_{N,t}(x)| \, |\varphi_{N,t}(y)| \| a_x a_y \xi\|\, \|T_{N,t}\cN T_{N,t}^*\xi\| \\
& \leq \frac{1}{2N}\int dxdy\; V_N(x-y)\left[ \delta \| a_x a_y \xi\|^2 +\delta^{-1} |\varphi_{N,t}(x)|^2 \, |\varphi_{N,t}(y)|^2 \| \cN T_{N,t}^*\xi\|\right] \\
&\leq \delta \langle \xi, \cV_N \xi\rangle + C\delta^{-1} N^{-1} e^{C|t|}\langle \xi, (\cN + 1)^2 \xi\rangle
\end{aligned}
\end{equation*}
where we used once again Proposition \ref{lem:Bog-N}. As for the last term on the r.h.s. of (\ref{eq:B6}), it can be estimated using (\ref{eq:E6}) and Proposition \ref{lem:Bog-N}. We conclude that 
\[ \pm B_6 \leq \delta \cV_N + C \delta^{-1} e^{C|t|} \frac{(\cN+1)^2}{N} \]
for any $\delta > 0$. 

\noindent {\it Bound for $B_7$:} with Proposition \ref{lem:Bog-N} we find 
\[ \pm B_7 \leq \frac{1}{2} C_b e^{C_b |t|} (\cN+1) ((\cN+1)/N))^{2b} \]
if $\ell > 0$ in (\ref{eq:Neum}) is chosen sufficiently small. 

Combining all these bounds with the bounds (\ref{eq:bndcE}) for the error term $\cE_{N,t}^c$, we obtain the first estimate in (\ref{eq:wcEbounds}) for the error term $\cE_{N,t}$. 

The bound for the commutator $i[\cN , \cE_{N,t}]$ follows from the observation that the commutator with $\cN$ of every monomial $A$ in 
creation and annihilation operators appearing in $\cE_{N,t}$ is 
given by $\lambda A$, where $\lambda \in \{ 0 , \pm 1, \pm 2 , \pm 3\}$. Hence, $[i\cN, \cE_{N,t}]$ can be bounded exactly like we did for $\cE_{N,t}$. 

Similarly, the bound for the time-derivative $\partial_t \cE_{N,t}$ is established by noticing that the time derivative of every monomial $A$ contributing to $\cE_{N,t}$ is the sum of finitely many terms having again the same form of $A$, just with one factor $\ph_{N,t}$ replaced by the time derivative $\partial_t \ph_{N,t}$ (the generator $\cG_{N,t}$ only depends on time through the solution $\ph_{N,t}$ of the nonlinear Hartree equation (\ref{eq:NLSN})). Therefore, to bound $\partial_t \cE_{N,t}$ we proceed exactly as we did for $\cE_{N,t}$, with the only difference that, sometimes, we have to use the bound for $\partial_t \ph_{N,t}$ in (\ref{eq:phi-bds}) rather than the corresponding bound for $\ph_{N,t}$. 
\end{proof}

With Proposition \ref{lem:wG}, we are now ready to prove Lemma \ref{lem:Compare-xiN}. 

\begin{proof}[Proof of Lemma \ref{lem:Compare-xiN}] 
Let $\alpha= \min(\beta,1-\beta)/2$ and $M=N^\alpha$. We have  
$$
\| \w\xi_{N,t}-\xi_{2,N,t}\|^2 = 2 \left[ 1-\Re \langle \xi_{N,t},\xi_{2,N,t}\rangle \right] 
$$
and we decompose, with $M/2\le m\le M$, 
\[ \begin{split} 
\langle \xi_{N,t}, \xi_{2,N,t} \rangle &= \langle \xi_{N,t}, \1^{\le m}\xi_{2,N,t} \rangle +  \langle \xi_{N,t}, \1^{>m} \xi_{2,N,t} \rangle \\ 
&= \frac{2}{M} \sum_{m= M/2+1}^M \left[ \langle \xi_{N,t}, \1^{\le m}\xi_{2,N,t} \rangle +  \langle \xi_{N,t}, \1^{>m} \xi_{2,N,t} \rangle \right] .
\end{split} \]
where we used the notation $\1^{\le m}=\1(\cN\le m)$ and $\1^{>m}=\1-\1^{\le m}$. 

\noindent  {\it Many-particle sectors.} From Cauchy-Schwarz and the bounds in Corollary \ref{lem:xiN}, we find 
\begin{equation*}
\begin{split}
|\langle \xi_{N,t}, \1^{>m} \xi_{2,N,t} \rangle| &\le \|\1^{>m} \xi_{N,t}\|. \|\1^{>m}\xi_{2,N,t}\| \\
&\le \langle \xi_{N,t}, (\cN/m) \xi_{N,t}\rangle^{1/2}  \langle \xi_{2,N,t}, (\cN/m) \xi_{2,N,t}\rangle^{1/2} \\
&\le C M^{-1} \exp(C\exp(C|t|)). 
\end{split} 
\end{equation*}
for a constant $C > 0$ depending on $b$. Averaging over $m\in [M/2+1,M]$, we conclude that 
\begin{align} \label{eq:MP-2a}
\Big| \frac{2}{M} \sum_{m=M/2+1}^M \langle \xi_{N,t}, \1^{>m} \xi_{2,N,t} \rangle \Big| \le C N^{-\alpha} \exp(C\exp(C|t|)).
\end{align}

\noindent  {\it Few-particle sectors.} From the Schr\"odinger equations for $\xi_{N,t}$ and $\xi_{2,N,t}$, we find  
\begin{align*}
\Re \frac{d}{dt} \langle \xi_{N,t}, \1^{\le m}\xi_{2,N,t} \rangle &=\Im \Big\langle \xi_{N,t}, \Big[ (\cG_{N,t}-\cG_{2,N,t}) \1^{\le m} +[\cG_{2,N,t},\1^{\le m}] \Big] \xi_{2,N,t} \Big\rangle .
\end{align*}
Using Proposition \ref{lem:wG}, in particular (\ref{eq:wcEbounds}) with $\delta=N^{\alpha}$, we obtain
\[ \begin{split} 
\pm (&\cG_{N,t} -\cG_{2,N,t}) \\ &\le  \Big[ N^{\alpha} \cV_N + N^{\alpha} (\cN+1)^2/N + (\cN+1)(\cN/N)^{2b}+N^{-\alpha}(\cK+\cN+1)
\Big] C \exp(Ct) 
\end{split} \]
for a constant $C > 0$ depending on $b$. We choose $b \in \bN$ large enough so that $2b(\alpha-1) < -\alpha$ (i.e. $b > \alpha/
(2(1-\alpha))$). Then, using the simple operator estimate
\bq \label{eq:cVN<=KN}
0\le \cV_N \le CN^{\beta-1} \cK \cN 
\eq
which follows by quantization of the two-body estimate $V_N(x-y)\le C N^{\beta} (-\Delta_x-\Delta_y)$, projecting to the sector with 
$\cN\le m+2$ (where $m\le N^\alpha$), and using also the inequality $2\alpha -1 < -\alpha$ (since, by definition, $\alpha < 1/4$) we find 
\bq \label{eq:Gn-G2n-rstM}
\pm \1^{\le m+2}(\cG_{N,t}-\cG_{2,N,t}) \1^{\le m+2} \le C N^{-\alpha} (\cK+\cN+1) \exp(C|t|).
\eq
Since $\cG_{N,t}-\cG_{2,N,t}$ contains terms with at most two creation operators, we have the obvious identity
$$(\cG_{N,t}-\cG_{2,N,t})\1^{\le m}=\1^{\le m+2}(\cG_{N,t}-\cG_{2,N,t})\1^{\le m+2}\1^{\le m}.$$
From \eqref{eq:Gn-G2n-rstM} we find, by Cauchy-Schwarz, 
\begin{equation}
\label{eq:form}
\begin{split} 
&|\langle \xi_{N,t},(\cG_{N,t}-\cG_{2,N,t}) \1^{\le m}  \xi_{2,N,t} \rangle| \\
&=|\langle \xi_{N,t}, \1^{\le m+2} (\cG_{N,t}-\cG_{2,N,t}) \1^{\le m+2} \1^{\le m}  \xi_{2,N,t} \rangle| \\
&\le C N^{-\alpha} \exp(C|t|)\langle \xi_{N,t}, (\cK+\cN+1) \xi_{N,t}\rangle^{1/2} \langle \1^{\le m}  \xi_{2,N,t}, (\cK+\cN+1) \xi_{2,N,t}\rangle^{1/2}.
\end{split}
\end{equation}
Inserting the energy estimates in Corollary \ref{lem:xiN}, we find that
$$
|\langle \xi_{N,t},(\cG_{N,t}-\cG_{2,N,t}) \1^{\le m}  \xi_{2,N,t} \rangle| \le C N^{-\alpha} \exp(\exp(C|t|)).
$$
In (\ref{eq:form}), we used the fact that, if $D$ is a self-adjoint and $F$ a non-negative operator on a Hilbert space $\frak{h}$ with $\pm D \leq F$ then, for every $\phi,\psi \in \frak{h}$, we have (using the fact that $D+F \geq 0$) 
\[ \begin{split} |\langle \phi, D \psi \rangle| &\leq |\langle \phi , (D+F) \psi \rangle| + |\langle \phi ,   F \psi \rangle| \\ &\leq \kappa \langle \phi, (D+F) \phi \rangle + \kappa^{-1} \langle \psi, (D+F) \psi \rangle + \kappa \langle \phi, F \phi \rangle + \kappa^{-1} \langle \psi, F \psi \rangle \\ &\leq 3 \kappa \langle \phi, F \phi \rangle + 3 \kappa^{-1} \langle \psi, F \psi \rangle \end{split} \]
for every $\kappa > 0$. With $\kappa = \langle \psi, F \psi \rangle^{1/2} \langle \phi, F \phi \rangle^{-1/2}$, we find 
\[   |\langle \phi, D \psi \rangle| \leq 6 \langle \phi, F \phi \rangle^{1/2} \langle \psi, F \psi \rangle^{1/2} \]

Next, we turn to the commutator $[\cG_{2,N,t},\1^{\le m}]$. We observe that 
\begin{equation}\label{eq:G1comm}
[\cG_{2,N,t},\1^{\le m}] = \1^{> m} \cG_{2,N,t} \1^{\le m} - \1^{\le m} \cG_{2,N,t} \1^{>m}. 
\end{equation}
Consider the first term on the r.h.s. of (\ref{eq:G1comm}). Only terms in $\cG_{2,N,t}$ with two creation operators give a non-vanishing contribution; hence, 
\[ \begin{split} \langle \xi_1, \1^{> m} &\cG_{2,N,t} \1^{\le m} \xi_2 \rangle \\ &= 
\big\langle \xi_1, \left[ \chi (\cN = m+2) \cG_{2,N,t} \chi (\cN = m) + \chi (\cN = m+1) \cG_{2,N,t} \chi (\cN = m) \right] \xi_2 \big\rangle \end{split} \]
Estimating terms in $\cG_{2,N,t}$ with two creation operators similarly as in Proposition \ref{lem:cLj}, we obtain
\begin{align*}
& \Big|\frac{2}{M} \sum_{m=M/2+1}^M \langle \xi_{N,t},  i[\cG_{2,N,t},\1^{\le m}] \xi_{2,N,t}\rangle \Big| \\
&\le C M^{-1} \exp(C|t|)  \langle \xi_{N,t}, (\cN+1) \xi_{N,t}\rangle^{1/2}\langle \xi_{2,N,t}, (\cN+1) \xi_{2,N,t}\rangle^{1/2}\\
&\le C N^{-\alpha} \exp(C\exp(C|t|)). 
\end{align*}
where we used Corollary \ref{lem:xiN} and the choice $M=N^{\alpha}$. In summary, we have proved that
$$
\Big| \frac{2}{M} \sum_{m=M/2+1}^M \frac{d}{dt} \, \langle\xi_{N,t}, \1^{\le m}\xi_{2,N,t}\rangle \Big| \le C N^{-\alpha}\exp(C\exp(C|t|)).
$$
Consequently,
\begin{align*}
&\Re \frac{2}{M} \sum_{m=M/2+1}^M \langle\xi_{N,t}, \1^{\le m}\xi_{2,N,t}\rangle \\
&\ge \Re \frac{2}{M}  \sum_{m=M/2+1}^M \langle\xi_{N,0}, \1^{\le m}\xi_{2,N,0}\rangle - C N^{-\alpha}\exp(C\exp(C|t|)).
\end{align*}
With the assumption (\ref{eq:ass-xiNb}) on the initial datum $\xi_{N,0}=\xi_{2,N,0}=\xi_N$, we find 
\begin{align*}
\langle \xi_{N,0}, \1^{\le m}\xi_{2,N,0}\rangle &= \|\1^{\le m} \xi_N\|^2 = 1- \|\1^{>m} \xi_N\|^2\\
&\ge 1- \langle \xi_N, (\cN/m) \xi_N\rangle \ge 1- CM^{-1}= 1-CN^{-\alpha}.
\end{align*}
Thus
\begin{align*}
\Re \frac{2}{M}  \sum_{m=M/2+1}^M \langle\xi_{N,t}, \1^{\le m}\xi_{2,N,t}\rangle \ge 1- C N^{-\alpha}\exp(C\exp(C|t|)).
\end{align*}

Combining the latter bound with \eqref{eq:MP-2a},  we arrive at
$$
\Re \langle\xi_{N,t}, \xi_{2,N,t}\rangle \ge 1- C N^{-\alpha}\exp(C\exp(C|t|)).$$
We conclude that 
$$
\| \xi_{N,t} - \xi_{2,N,t}\|^2 \le 2(1-\Re \langle \xi_{N,t}, \xi_{2,N,t}\rangle )\le C N^{-\alpha} \exp(C\exp(C|t|)). 
$$
\end{proof}

The localization argument used in the above proof is similar to that in \cite{NN2,NN3}. The main idea is to employ the operator inequality \eqref{eq:cVN<=KN} in the sector of few particles. This argument will be used again below.

\section{Approximation of fluctuation dynamics}
\label{sec:phiwphi}

In this section, we show Lemma \ref{lem:Phi-wPhi}. To this end, we will make use of the following energy estimates. 

\begin{lemma} \label{lem:Phi} 
Assume Hypothesis A holds true. Let $\xi_N \in \cF_{\perp}$ with $\| \xi_N \| \leq 1$ and 
\begin{equation}\label{eq:lmPhi-xiN} \langle \xi_N, (\cH_N +\cN + \cN^2/N) \xi_N\rangle \leq C, \end{equation}
uniformly in $N$. Let $\Phi_{N,t}$ be as defined in \eqref{eq:Phi}. Then there exists a constant $C > 0$ such that 
\begin{equation}\label{eq:lmPhi} 
\langle \Phi_{N,t}, (\cH_N +\cN) \Phi_{N,t} \rangle \le C N^{\beta} \,  \exp(C\exp(C|t|))
\end{equation}
for all $t \in \bR$. 
\end{lemma}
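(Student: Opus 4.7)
My plan is to apply Gr\"onwall's inequality to the functional $G(t):=\langle\Phi_{N,t},(\cH_N+\cN)\Phi_{N,t}\rangle$. The initial bound $G(0)\le C$ follows directly from hypothesis (\ref{eq:lmPhi-xiN}). The goal is then to establish a differential inequality of the form
\[ \partial_t G(t) \le Ce^{C|t|}\bigl(G(t)+N^\beta\bigr), \]
which, upon integration, yields $G(t)\le CN^\beta\exp(C\exp(C|t|))$ as stated.

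Using the Schr\"odinger equation (\ref{eq:Phi}), $\partial_t G(t) = i\langle\Phi_{N,t},[\cL_{N,t},\cH_N+\cN]\Phi_{N,t}\rangle$, so the task reduces to bounding this commutator. Since $\cH_N$ and $\cN$ both preserve the particle number on Fock space, $[\cL_{N,t},\cN]$ isolates exactly the particle-non-preserving components of $\cL_{N,t}$ in (\ref{eq:cLNt-2}): the linear $\sqrt{N}a^*(\cdots)$ terms, the pairing $K_{2,N,t}a_x^*a_y^*$ term, and the cubic $(Q_{N,t}\otimes Q_{N,t}V_NQ_{N,t}\otimes 1)\varphi_{N,t}a_x^*a_y^*a_{x'}$ term (plus hermitian conjugates). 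Each of these can be estimated by Cauchy--Schwarz using the $L^p$-bounds on $\varphi_{N,t}$ and its derivatives provided by (\ref{eq:phi-bds}).

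For the commutator $[\cL_{N,t},\cH_N]=[\cL_{N,t},\cK]+[\cL_{N,t},\cV_N]$, I would split $\cL_{N,t}$ into its monomial pieces and commute each with $\cK$ and $\cV_N$ separately. The commutator with $\cK$ distributes a Laplacian onto the various kernels, producing terms controlled by norms such as $\|\nabla\varphi_{N,t}\|_\infty$ and $\|\nabla_1 K_{1,N,t}\|$, all bounded via (\ref{eq:phi-bds}). The commutator with $\cV_N$ produces overlapping $V_N$-integrals, and here the crucial tool is the inequality $\cV_N\le CN^{\beta-1}\cK\cN$ from (\ref{eq:cVN<=KN}), which on the truncated space $\cF^{\le N}$ (where $\cN\le N$) yields $\cV_N\le CN^\beta\cK$. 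This is the mechanism by which the $N^\beta$-factor enters the final bound: it quantifies the short-range correlation energy that $\Phi_{N,t}$ still carries, the Bogoliubov dressing $T_{N,t}$ not yet having been applied.

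I expect the main obstacle to be the precise treatment of the cubic and quartic terms of $\cL_{N,t}$, whose commutators with $\cH_N$ produce several overlapping $V_N$-integrals that must be rearranged, via Cauchy--Schwarz, in terms of $\cH_N$, $\cN$, and a remainder of size $N^\beta$. The auxiliary hypothesis on $\cN^2/N$ in (\ref{eq:lmPhi-xiN}) will be needed to absorb intermediate expressions involving $\cN^2/N$ or $\cV_N\cN/N$, and will itself be propagated along the dynamics by a parallel Gr\"onwall estimate for $\langle\Phi_{N,t},(\cN^2/N)\Phi_{N,t}\rangle$. Collecting all contributions one would obtain
\[ \pm i[\cL_{N,t},\cH_N+\cN] \le Ce^{C|t|}(\cH_N+\cN) + Ce^{C|t|}N^\beta \]
on the range of $\Phi_{N,t}$, and Gr\"onwall's lemma then gives (\ref{eq:lmPhi}).
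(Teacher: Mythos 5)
There is a genuine gap at the heart of your plan, namely the claim that one can establish $\pm i[\cL_{N,t},\cH_N]\le Ce^{C|t|}(\cH_N+\cN)+Ce^{C|t|}N^{\beta}$ by commuting the monomials of $\cL_{N,t}$ with $\cK$ and $\cV_N$ separately. The quartic part of $\cL_{N,t}$ is (up to projections) $\cV_N$ itself, and $i[\cK,\cV_N]$ produces terms carrying $\nabla V_N$ and $\Delta V_N$; since $\|\nabla V_N\|_{L^1}\sim N^{\beta}\|V_N\|_{L^1}$, these commutators are \emph{more} singular than $\cV_N$ by powers of $N^{\beta}$ and cannot be absorbed into $\cH_N+\cN+N^{\beta}$ — this is the familiar fact that the exchange rate between kinetic and potential energy is not controlled by the energy itself. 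The paper sidesteps this entirely: it runs Gr\"onwall not on $\langle\Phi_{N,t},(\cH_N+\cN)\Phi_{N,t}\rangle$ but on $\langle\Phi_{N,t},(\cL_{N,t}-C_{N,t}+Ce^{Ct}(\cN+N^{\beta}))\Phi_{N,t}\rangle$. Because $\tfrac{d}{dt}\langle\Phi_{N,t},\cL_{N,t}\Phi_{N,t}\rangle=\langle\Phi_{N,t},(\partial_t\cL_{N,t})\Phi_{N,t}\rangle$ (the self-commutator vanishes) and the projected Hamiltonian part of $\cL_{N,t}$ commutes with $\cN$, one only ever needs $[\cR_{N,t},\cN]$ and $\partial_t\cR_{N,t}$ for the remainder $\cR_{N,t}=\cL_{N,t}-C_{N,t}-\cH_{N,t}$; the dangerous commutator $[\cL_{N,t},\cH_N]$ never appears. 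The equivalence with $\cH_N+\cN$ is then recovered from the \emph{static} bound $\pm\cR_{N,t}\le\delta\cV_N+Ce^{C|t|}(\cN+N^{\beta})$. You should restructure your Lyapunov functional accordingly.

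A second, independent error is the assertion that $G(0)\le C$ ``follows directly'' from (\ref{eq:lmPhi-xiN}). The hypothesis controls $\xi_N$, whereas $\Phi_{N,0}=\1^{\le N}T_{N,0}^*\xi_N$, and the Bogoliubov transformation $T_{N,0}^*$ injects a kinetic energy of order $\|\nabla k_{N,0}\|_{\text{HS}}^2\sim N^{\beta}$ into the state. The correct statement is $G(0)\le CN^{\beta}$, and proving it is a substantial part of the paper's argument (Step 3), requiring the explicit conjugation $T_{N,0}\cH_NT_{N,0}^*$, integration by parts against $N\Delta\omega_N$, and the bound $V_N(x-y)\le CN^{\beta}(-\Delta_x-\Delta_y)$. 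This initial-data contribution, together with the off-diagonal pairing term $\tfrac12\int K_{2,N,t}(x,y)a_x^*a_y^*+\text{h.c.}$ in the generator (whose lower bound costs $\delta\,d\Gamma(1-\Delta)+C\delta^{-1}N^{\beta}e^{C|t|}$), is the actual source of the $N^{\beta}$ in (\ref{eq:lmPhi}) — not the inequality $\cV_N\le CN^{\beta}\cK$ acting on a commutator, as you suggest.
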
 

\begin{proof}
We recall that $\Phi_{N,t}$ solves the Schr\"odinger equation (\ref{eq:Phi}) with the generator (\ref{eq:cLNt}) that can be decomposed into 
$$\cL_{N,t} =C_{N,t}+\cH_{N,t}+\cR_{N,t}$$
with the constant part 
 \bq
C_{N,t}= \frac{N+1}{2} \product{\varphi_{N,t}}{ [V_N (1-2f_N) \ast|\varphi_{N,t}|^2 ]\varphi_{N,t}} - \mu_{N,t}, \label{def:C_Nt}
 \eq 
the projected Hamilton operator
\[ \begin{split} \cH_{N,t} = \; &d\Gamma (-\Delta) \\ &+ \frac{1}{2N}  \int dx dy dx' dy' \, \left[ (Q_{N,t} \otimes Q_{N,t}) V_N (Q_{N,t} \otimes Q_{N,t})\right] (x,y ; x' , y') a_x^* a_y^* a_{x'} a_{y'} \end{split} \]
and the rest $$\cR_{N,t}=\sum_{i=1}^7 \cR_{N,t}^i$$ where 
\begin{equation}\label{eq:RN7}
\begin{split} 
\cR_{N,t}^1 &= \frac12\product{\varphi_{N,t}}{ [V_N\ast|\varphi_{N,t}|^2]\varphi_{N,t} }  \frac{\cN(\cN+1)}{N}  \\
\cR_{N,t}^2 &=  \Big[ a^* (Q_{N,t}[(V_N \omega_N) \ast|\varphi_{N,t}|^2] \varphi_{N,t}) - a^* (Q_{N,t}[V_N\ast|\varphi_{N,t}|^2] \varphi_{N,t}) \frac{\cN}{N}  \Big] \sqrt{N-\cN} \\ &\hspace{.5cm} +{\rm h.c.}  \\
\cR_{N,t}^3 &=   \dGamma\Big( (V_Nf_N)*|\varphi_{N,t}|^2 + K_{1,N,t} -\mu_{N,t} \Big) +\dGamma\Big( Q_{N,t}(V_N \omega_N *|\varphi_{N,t}|^2)Q_{N,t} \Big)\\
\cR_{N,t}^4 &=    - \dGamma\Big( Q_{N,t}(V_N *|\varphi_{N,t}|^2)Q_{N,t} + K_{1,N,t}\Big)\frac{\cN}{N}\\
\cR_{N,t}^5   &=  \frac{1}{2} \int  \d x \d y \, K_{2,N,t}(x,y)a^*_x a^*_y  + {\rm h.c.} \\
\cR_{N,t}^6   &=  \frac{1}{2} \int  \d x \d y \, K_{2,N,t}(x,y)a^*_x a^*_y \left(\frac{\sqrt{(N-\cN)(N-\cN-1)}}{N}-1\right) + {\rm h.c.} \\
\cR_{N,t}^7   &= \frac{1}{\sqrt{N}}\int \d x \d y \d x' \d y' \, (Q_{N,t} \otimes Q_{N,t} V_N Q_{N,t} \otimes 1)(x,y;x',y') \\ &\hspace{6cm} \times a_x^* a_y^* a_{x'} \varphi_{N,t} (y') \sqrt{\frac{N-\cN}{N}} + \text{h.c.}
 \end{split}\end{equation} 
The proof of Lemma \ref{lem:Phi} is divided into three steps. In the first step, we bound the rest operator $\cR_{N,t}$, its commutator with $\cN$ and its time derivative, through the number of particles operator $\cN$ and the Hamiltonian $\cH_N$. In the second step we use these bounds and, with Gr\"onwall's Lemma, we control the expectation on the r.h.s. of (\ref{eq:lmPhi}) in terms of its initial value at time $t=0$. Finally, in the third step, we control the expectation of $\cH_N$ and $\cN$ in the initial state $\Phi_{N,0} = T_{N,0} \xi_N$ through the expectation of the same operators in the state $\xi_N$, making use of the assumption (\ref{eq:lmPhi-xiN}).

\medskip

\noindent \textit{Step 1.} We claim that, for all $\delta > 0$ there exists $C > 0$ with  
\begin{equation} \label{eq:gronwallestimatesPhi}
\begin{aligned}
\pm \cR_{N,t} &\leq \delta \cV_N + C e^{C |t|} (\cN + N^\beta)  \\
\pm i[\cR_{N,t},\cN] &\leq \delta \cV_N + C_{\epsilon}e^{Ct}(\cN +N^{\beta}) \\
\pm \partial_t \cR_{N,t} &\leq \delta \cV_N + C_{\epsilon}e^{Ct}(\cN +N^{\beta}).
\end{aligned}
\end{equation}
as operator inequality on $\cF_{\perp \ph_{N,t}}^{\leq N}$. We will focus on the proof of the bound for $\cR_{N,t}$. The other two estimates in (\ref{eq:gronwallestimatesPhi}) can be shown similarly, since the commutator $i[\cR_{N,t} , \cN]$ and the derivative $\partial_t \cR_{N,t}$ contain the same terms appearing in $\cR_{N,t}$, multiplied by a constant in $\{ 0 , \pm 1, \pm 2 \}$ in the first case 
and with a factor $\ph_{N,t}$ replaced by its derivative $\partial_t \ph_{N,t}$ in the second case. We follow here \cite[Theorem 3]{NN2}, where more details can be found. 

\medskip

\noindent {\it Step 1.1}: Since
$$\product{\varphi_{N,t}}{ [V_N \ast|\varphi_{N,t}|^2 ]\varphi_{N,t}} \leq \|V_N\|_{L^1}\|\varphi_{N,t}\|_{L^4}^4 \leq C $$
and $\cN/N\leq 1$ on the truncated Fock space $\cF_{\perp \ph_{N,t}}^{\leq N}$, we have 
\begin{equation*}
0\leq \cR_{N,t}^1  \leq C \cN. 
\end{equation*}
\medskip

\noindent {\it Step 1.2}: We divide $\cR_{N,t}^2 = \cR_{N,t}^{2,1} + \cR_{N,t}^{2,2}$ with 
\begin{equation}\label{eq:R2122} \begin{split} \cR_{N,t}^{2,1} &=\Big[ a^* (Q_{N,t}[(V_N \omega_N) \ast|\varphi_{N,t}|^2] \varphi_{N,t}) \Big] \sqrt{N-\cN} + \text{h.c.}, \\ \cR_{N,t}^{2,2} &= \sqrt{N}\Big[ a^* (Q_{N,t}[V_N\ast|\varphi_{N,t}|^2] \varphi_{N,t}) \frac{\cN}{N}  \Big] \sqrt{\frac{N-\cN}{N}} + \text{h.c.} \end{split} \end{equation}
Using the Cauchy--Schwarz inequality, we find, for arbitrary $\xi \in \cF^{\leq N}_{\perp \ph_{N,t}}$, 
\begin{equation*}
\begin{aligned}
&\left| \langle \xi, \cR_{N,t}^{2,1} \xi \rangle \right| \leq \sqrt{N}\Big\| Q_{N,t} [(V_N\omega_N)\ast|\varphi_{N,t}|^2] \varphi_{N,t} \Big\|_{2}\|\cN^{1/2}\xi\|  \| \xi \| 
\end{aligned}
\end{equation*}
Since, with (\ref{eq:fN-bound}), 
\begin{align*}
\Big\| Q_{N,t}[(V_N\omega_N)\ast|\varphi_{N,t}|^2] \varphi_{N,t} \Big\|_{L^2}  \le \Big\| [(V_N\omega_N)\ast|\varphi_{N,t}|^2] \varphi_{N,t} \Big\|_{L^2} \leq C N^{\beta-1} e^{C|t|} 
\end{align*}
we conclude that 
$$\pm \cR_{N,t}^{2,1}   \leq Ce^{C|t|}\left(N^{2\beta-1} +\cN\right) \leq C e^{C|t|} (N^\beta + \cN) .$$
As for the second term in (\ref{eq:R2122}), using 
$$\Big\|[V_N\ast|\varphi_{N,t}|^2] \varphi_{N,t} \Big\|_{L^2}  \le C \, , $$
the Cauchy--Schwarz inequality and the fact that, on $\cF^{\leq N}_{\perp \ph_{N,t}}$, $\cN/N\leq 1$, we find hat
\begin{equation*}
 \pm \cR_{N,t}^{2,2}\leq Ce^{C|t|} \cN.
\end{equation*}

\medskip

\noindent {\it Step 1.3}: Recall that for an operator $B$ on $L^2(\R^3)$ we have $\pm d\Gamma (B)\leq \|B\| \cN$. Since
\begin{equation*}
\begin{split}
\|(V_N f_N)*|\varphi_{N,t}|^2 \|_{L^\infty} &\leq \|\varphi_{N,t}\|^2_{L^\infty} \|V_Nf_N\|_{L^1}\leq  Ce^{Ct} \\
|\mu_{N,t}| &=\big|\product{\varphi_{N,t}}{ [(V_N \omega_{N}) \ast|\varphi_{N,t}|^2 ]\varphi_{N,t}}\big| \\ &\leq C N^{\beta-1} e^{C|t|}  \\
\|Q_{N,t}(V_N\omega_N)*|\varphi_{N,t}|^2Q_{N,t}\| &\leq \|(V_N\omega_N)*|\varphi_{N,t}|^2\|_{L^\infty}\\ &\leq C N^{\beta-1}e^{C|t|} 
\end{split}
\end{equation*}
and  
\begin{equation}
\begin{split}
\|K_{1,N,t}\| &= \|Q_{N,t}\w K_{1,N,t}Q_{N,t}\| \leq \|\w K_{1,N,t}
\| \\ &=\sup_{\|f\|_{L^2}=1}\left|\int \overline{f(x)}\varphi_{N,t}(x) V_N(x-y)\overline{\varphi_{N,t}(y)}f(y)\d x\d y \right|\\ &\leq \frac{\|\varphi_{N,t}\|^2_{L^\infty}}{2} 
\sup_{\|f\|_{L^2}=1} \int (|f(x)|^2+|f(y)|^2) V_N(x-y) dx dy \leq Ce^{C|t|}
\end{split}  \label{eq:boundK1}
\end{equation}
we conclude that 
$$\pm \cR_{N,t}^{3}\leq Ce^{C|t|}\cN.$$

\medskip

\noindent
{\it Step 1.4}: Proceeding similarly to Step 3 and using the fact that $d\Gamma (B)$ commutes with $\cN$, we find 
\[ \pm \cR_{N,t}^{4}\leq Ce^{C|t|}\cN. \]

\medskip

\noindent {\it Step 1.5}: To bound the term $\cR^5_{N,t}$ we observe that, for any $\delta > 0$,   
\begin{equation*}\begin{split}
\delta  \dGamma (1-\Delta) \pm \Big[\frac{1}{2} &\int  \d x \d y \, K_{2,N,t}(x,y) a^*_x a^*_y + {\rm h.c.} \Big] \\ &\geq - \frac{1}{2\delta} \left\| (1-\Delta)^{-1/2} K^*_{2,N,t} \right\|_\text{HS}^2 \geq - \frac{1}{2\delta} \left\| (1-\Delta)^{-1/2} \wt{K}^*_{2,N,t} \right\|_\text{HS}^2
\end{split}
\end{equation*}
from \cite[Lemma 9]{NamNapSol-16}. Since $\wt{K}_{2,N,t} (x;y) = V_N (x-y) \ph_{N,t} (x) \ph_{N,t} (y)$, we find
\[ \begin{split} 
\Big\| (1- &\Delta)^{-1/2} \wt{K}^*_{2,N,t} \Big\|_\text{HS}^2 \\ &= \tr \; \w{K}_{2,N,t} (1-\Delta)^{-1} \w{K}^*_{2,N,t} \\ &= C\int dx dy dz \, V_N (x-y)  \, \frac{e^{-|y-z|}}{|y-z|} \, V_N (z-x) \, |\ph_{N,t} (x)|^2 \ph_{N,t} (y) \ph_{N,t} (z) \\ &\leq \| \ph_{N,t} \|^2_\infty \| \ph_{N,t} \|_2^2 \int V_N (z) \left[ V_N * \frac{1}{|.|} \right] (z) dz \\ &\leq C e^{C|t|} \int \frac{|\widehat{V}_N (p)|^2}{p^2} dp =  C e^{C|t|} \int \frac{| \widehat{V} (p/N^\beta)|^2}{p^2} dp\leq C N^\beta e^{C|t|} \int \frac{|\widehat{V} (p)|^2}{p^2} dp \\ &\leq C N^\beta e^{C|t|}  \end{split} \]
We obtain that, for any $\delta > 0$, 
\[ \pm \cR_{N,t}^5 = \pm \Big[ \frac{1}{2} \int  dx dy \, K_{2,N,t}(x,y) a^*_x a^*_y + {\rm h.c.} \Big] \leq \delta d\Gamma (1-\Delta) + C \delta^{-1} N^\beta e^{C|t|} \] 

\medskip

\noindent
{\it Step 1.6}: To bound $\cR_{N,t}^6$, we observe that, by Cauchy-Schwarz, we have 
\[ \begin{split} | \langle \xi , \cR_{N,t}^6 \xi \rangle | & \leq C \int dx dy \, V_N (x-y) |\ph_{N,t} (x)| | \ph_{N,t} (y)| \| a_x a_y \xi \| \\ &\hspace{6cm} \times \left\| \left( \frac{\sqrt{N-\cN)(N-\cN-1)}}{N} - 1 \right) \xi \right\| 
\\ &\leq \frac{C}{\sqrt{N}} \int dx dy \, V_N (x-y) |\ph_{N,t} (x)| | \ph_{N,t} (y)| \| a_x a_y \xi \| \, \| (\cN+1)^{1/2} \xi\|
\\ &\leq \delta \langle \xi, \cV_N \xi \rangle  + C \delta^{-1} \| (\cN+1)^{1/2} \xi \|^2
\end{split} \]
which implies that
\[ \pm \cR_{N,t}^6 \leq \delta \cV_N + C \delta^{-1} (\cN + 1) \]

\medskip

\noindent
{\it Step 1.7}: For $\xi \in \cF^{\leq N}_{\perp \ph_{N,t}}$, we have, using Cauchy-Schwarz inequality, 
\[ \begin{split}
\Big|\langle \xi, \cR_{N,t}^7 \xi \rangle \Big| &\leq \frac{1}{\sqrt{N}} \int V_N (x-y) |\ph_{N,t} (y)| \, \| a_x a_y \xi \| \, \left\| a_x \sqrt{\frac{N-\cN}{N}} \xi \right\| dx dy \\  &\leq \delta \langle \xi, \cV_N \xi \rangle + C \| \ph_{N,t} \|_\infty^2 \langle \xi , \cN \xi \rangle 
\end{split} \]
and therefore
\[ \pm \cR_{N,t}^7 \leq \delta \cV_N + C \delta^{-1} e^{C|t|} \cN \]

Combining the results of Step 1.1 - Step 1.7, we obtain (\ref{eq:gronwallestimatesPhi}). 

\medskip

\noindent \textit{Step 2.} There exists a constant $C > 0$ such that 
\begin{equation}\label{eq:step2-cl} \langle \Phi_{N,t}, (\cH_N +\cN) \Phi_{N,t} \rangle \le C \exp(C \exp(C|t|)) \langle \Phi_{N,0}, (\cH_N +\cN+N^{\beta}) \Phi_{N,0} \rangle \end{equation}
for all $t \in \bR$.

We focus on $t > 0$ (the case $t < 0$ can be handled similarly). 
We have  
\begin{equation*}
\begin{aligned}
\partial_t \big\langle &\Phi_{N,t},\big( \cL_{N,t} - C_{N,t}+ C e^{Ct}(\cN+N^{\beta})\big)\Phi_{N,t}\big\rangle \\
& =Ce^{Ct} \langle \Phi_{N,t}, i[\cR_{N,t},\cN]\Phi_{N,t}\rangle +  
\langle \Phi_{N,t},(\partial_t  \cR_{N,t}+C^2e^{Ct}(\cN+N^{\beta}))\Phi_{N,t}\rangle. 
\end{aligned}
\end{equation*}
The second and third bound in \eqref{eq:gronwallestimatesPhi} 
imply that there exists a constant $\wt{C}$ such that 
\begin{equation*}
\begin{split}
\partial_t \langle \Phi_{N,t},& ( \cL_{N,t}-C_{N,t}+Ce^{Ct}(\cN+N^{\beta}))\Phi_{N,t}\rangle \\
& \leq \wt{C} e^{\wt{C} t} \langle \Phi_{N,t},( \cL_{N,t}-C_{N,t}+Ce^{Ct}(\cN+N^{\beta}))\Phi_{N,t}\rangle.
\end{split}
\end{equation*}
Gr\"onwall's Lemma gives 
\begin{equation*}
\begin{split}
\langle \Phi_{N,t}, ( \cL_{N,t}&-C_{N,t}+Ce^{Ct}(\cN+N^{\beta}))\Phi_{N,t}\rangle \\
& \leq \wt{C}\exp(\wt{C}\exp(\wt{C}t)) \langle \Phi_{N,0},( \cL_{N,0}-C_{N,0}+\cN+N^{\beta})\Phi_{N,0}\rangle
\end{split}
\end{equation*}
The first inequality in \eqref{eq:gronwallestimatesPhi} implies (\ref{eq:step2-cl}).

\medskip

\noindent \textit{Step 3.} To finish the proof we need to show that, with the assumption \begin{equation}\label{eq:ass-step3} \langle \xi_N, (\cH_N +\cN + \cN^2 / N) \xi_N\rangle \leq C \, ,  \end{equation} we have  
\begin{equation}\label{eq:step3-cl}\langle \Phi_{N,0}, (\cH_N +\cN) \Phi_{N,0} \rangle \leq C N^{\beta} .
\end{equation}
To reach this goal, we observe, first of all, that 
\begin{equation}\label{eq:phi0-1}
\begin{split} \langle \Phi_{N,0}, (\cH_N +\cN) \Phi_{N,0} \rangle &= \langle \1^{\le N} T_{N,0}^* \xi_N, (\cH_N +\cN) \1^{\le N} T_{N,0}^* \xi_N \rangle  \\ &\leq \langle \xi_N, T_{N,0} \cH_N T^*_{N,0} \xi_N \rangle +C
\end{split} 
\end{equation}
by Proposition \ref{lem:Bog-N} and (\ref{eq:ass-step3}). To bound the remaining expectation on the r.h.s. of (\ref{eq:phi0-1}), we compute (see \cite[Section 3, in particular Prop. 3.3 and Prop. 3.11]{BCS}) 
\begin{equation}\label{eq:THT-bd} \begin{split} 
T_{N,0}& \cH_N T_{N,0}^* \\ = \; &\cH_N + \|\nabla_2 \sinh_{k_{N,0}} \|^2  + N \int dxdy\;\big[\Delta \omega_N (x-y) \varphi^2_0 ((x+y)/2) \, a^*_x a^*_y+\text{h.c.} \big]\\
& + \frac{1}{2N}\int dx dy\; V_N(x-y)|\langle \text{sh}_x - \ph_0 (x) \text{sh}_{k_{N,0}} (\ph_0), \text{ch}_y - \ph_0 (y) \text{ch}_{k_{N,0}} (\ph_0) \rangle |^2 \\
        &+ \frac{1}{2}\int dx dy\; V_N(x-y) \big[-\omega_N(x-y)\varphi^2_{0}((x+y)/2)a_x^*a_y^*+\text{h.c.}\big] + \delta_N.
\end{split} \end{equation}
where we used the notation $\text{sh}_x$ to indicate the function $\text{sh}_x (z) = \sinh_{k_{N,0}} (x;z)$ and similarly for $\text{ch}_x$ (in this case, a distribution) and where the operator $\delta_N$ is such that 
\[ \pm \delta_N \leq \cH_N + C (\cN+\cN^2/N + 1)  \]
(in fact, the constant in front of $\cH_N$ could be chosen arbitrarily small, but we are not going to use this fact here). With (\ref{eq:kNt-bd}), we find
\[ \|\nabla_2 \sinh_{k_{N,0}} \|^2 \leq C N^{\beta} \]
Furthermore, integrating by parts, using \eqref{eq:fN-bound}, the assumption $\ph_0 \in H^4 (\bR^3)$ and (\ref{eq:ass-step3}), we obtain 
\begin{equation*}
\begin{aligned}
\Big|N \int dxdy\; \Delta & \omega_N (x-y) \varphi^2_{0}((x+y)/2)\langle \xi_N, a^*_xa^*_y \xi_N \rangle\Big| 
\\ &\leq \int dx \, \|a_x\xi_N\| \, \|a^* (N \nabla \omega_N (x-\cdot) \nabla_x \ph_0^2 ((x+\cdot)/2)) \xi_N\| \\
&\hspace{.3cm} + \int dx \, \|\nabla_x a_x \xi_N \| \, \|a^*( N \nabla \omega_N (x-\cdot) \varphi^2_{0} ((x+\cdot)/2)) \xi_N\| \\ 
&\leq \| (\cN+1)^{1/2} \xi_N \| \int dx \, \| a_x \xi_N \| \| N \nabla \omega_N (x-\cdot) \nabla_x \ph_0^2 ((x+\cdot)/2)) \|_2  \\ &\hspace{.3cm} + \| (\cN+1)^{1/2} \xi_N \| \int dx \, \| \nabla_x a_x \| \| N \nabla \omega_N (x-\cdot) \ph_0^2 ((x+\cdot)/2)) \|_2  \\
&\leq C N^\beta \| (\cN + \cK + 1)^{1/2} \xi_N \|^2 \leq C N^\beta.
\end{aligned}
\end{equation*}

Let us now consider the fourth term on the r.h.s. of (\ref{eq:THT-bd}). The most singular contribution is bounded by 
\[ \begin{split} \frac{1}{2N} \int &dx dy \, V_N (x-y) |\langle \text{sh}_x , \text{ch}_y \rangle |^2 \\ \leq \; & \frac{1}{2N} \int dx dy \, V_N (x-y) |\text{sh}_{k_{N,0}} (x;y)|^2 \\ &+ \frac{1}{2N} \int dx dy \, V_N (x-y) \Big| \int dz \, \text{sh}_{k_{N,0}} (x;z) \text{p} (y;z) \Big|^2 \\ \leq \; &N^{\beta-1} \big( \| \nabla_1 \text{sh}_{k_{N,0}} \|^2 + \| \nabla_2 \text{sh}_{k_{N,0}} \|^2 \big) + \frac{1}{2N} \int dx dy \, V_N (x-y) \| \text{sh}_x \|^2 \| \text{p}_y \|^2  \\ \leq \; &C N^{2\beta -1}  \end{split} \]
where we used Cauchy-Schwarz and the operator inequality 
$$V_N(x-y)\leq CN^{\beta}(-\Delta_x-\Delta_y)\, .$$

Finally, let us consider the fifth term on the r.h.s. of (\ref{eq:THT-bd}). Using Cauchy-Schwarz, \eqref{eq:fN-bound} and (\ref{eq:phi-bds}), we find 
\begin{equation*}
\begin{aligned}
 &\left|\int dx dy\; V_N(x-y) \, \omega_N(x-y)\varphi^2_0 ((x+y)/2)\langle \xi, a_x^*a_y^* \xi \rangle\right|  \\ & \qquad \leq  C \langle \xi, \cV_N \xi\rangle
   +C \int dx dy\; V_N(x-y) N |\omega_N(x-y)|^2 \, |\varphi_0 ((x+y)/2)|^4 
   \\ & \qquad \leq  C \delta \langle \xi, \cV_N \xi\rangle + CN^{2\beta-1} 
\end{aligned}
\end{equation*}

From (\ref{eq:THT-bd}), we conclude with (\ref{eq:ass-step3}) that
\[ \langle \xi, T_{N,0} \cH_N T_{N,0}^* \xi \rangle \leq C N^\beta \]
Together with (\ref{eq:phi0-1}), this implies (\ref{eq:step3-cl}). 
\end{proof}

A bound similar to the one in Lemma \ref{lem:Phi} also holds for the modified evolution $\w\Phi_{N,t}$ introduced in (\ref{eq:wPhi}).
\begin{lemma} \label{lem:wPhi} 
Assume Hypothesis A holds true. Let $\xi_N \in \cF_{\perp \ph_0}$ with $\| \xi_N \| \leq 1$ and 
\[ \langle \xi_N, (\cH_N +\cN + \cN (\cN/N)^{2b} ) \xi_N\rangle \leq C, \] 
uniformly in $N$. Let $\w\Phi_{N,t}$ be as defined in \eqref{eq:wPhi}. We assume here that the parameter $C_b > 0$ in (\ref{eq:wLNt}) is large enough. Then there exists a constant $C > 0$ such that 
\begin{equation}\label{eq:lemwphi}
\langle \w\Phi_{N,t}, (\cH_N +\cN+\cN(\cN/N)^{2b}) \w\Phi_{N,t}  \rangle \le  C N^{\beta} \exp(C\exp(C|t|)).
\end{equation}
for all $t \in \bR$. 
\end{lemma}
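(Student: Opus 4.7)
The plan is to mimic the three-step argument of Lemma \ref{lem:Phi}, adapted to track the extra moment $\cN(\cN/N)^{2b}$ and to exploit the fact that $\w\cL_{N,t}$ contains the damping term $C_b e^{C_b|t|}\cN(\cN/N)^{2b}$. First, I decompose
\[
\w\cL_{N,t} = C_{N,t} + \cH_{N,t} + \w\cR_{N,t} + C_b e^{C_b|t|}\cN(\cN/N)^{2b},
\]
where $C_{N,t}$ and $\cH_{N,t}$ are exactly as in the proof of Lemma \ref{lem:Phi}, and $\w\cR_{N,t}$ is the sum of seven terms $\w\cR_{N,t}^i$ obtained from the $\cR_{N,t}^i$ of (\ref{eq:RN7}) by replacing $\sqrt{N-\cN}$ with $\sqrt{N}\,G_b(\cN/N)$ in $\cR_{N,t}^2$ and $\cR_{N,t}^7$, and $\sqrt{(N-\cN)(N-\cN-1)}/N$ by $(N-\cN)/N$ in $\cR_{N,t}^6$. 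Note that $\w\cR_{N,t}$ is defined on all of $\cF_{\perp\ph_{N,t}}$ (no truncation at $\cN\leq N$).

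The main step (Step~1) is to establish, in analogy with (\ref{eq:gronwallestimatesPhi}), the operator bounds
\[
\begin{split}
\pm \w\cR_{N,t} &\le \delta \cV_N + C e^{C|t|}(\cN+N^\beta) + K_b e^{C|t|}(\cN+1)(\cN/N)^{2b}, \\
\pm i[\w\cR_{N,t},\cN] &\le \delta \cV_N + C e^{C|t|}(\cN+N^\beta) + K_b e^{C|t|}(\cN+1)(\cN/N)^{2b}, \\
\pm \partial_t \w\cR_{N,t} &\le \delta \cV_N + C e^{C|t|}(\cN+N^\beta) + K_b e^{C|t|}(\cN+1)(\cN/N)^{2b},
\end{split}
\]
with analogous bounds for $i[\w\cR_{N,t},\cN(\cN/N)^{2b}]$. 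The terms $\w\cR_{N,t}^1$, $\w\cR_{N,t}^3$, $\w\cR_{N,t}^4$, $\w\cR_{N,t}^5$ coincide with their counterparts in Lemma \ref{lem:Phi} and are handled as there. For $\w\cR_{N,t}^2$, $\w\cR_{N,t}^6$ and $\w\cR_{N,t}^7$, I will apply the same Cauchy--Schwarz estimates as in Lemma \ref{lem:Phi}, but replacing the control by $\sqrt{N-\cN}\le\sqrt{N}$ by the analogous bound $\sqrt{N}\,|G_b(\cN/N)|\le \sqrt N$ (which produces an additional error of order $(\cN/N)^{b+1/2}$ by (\ref{eq:Pb-1})); these remainder terms give rise precisely to the contributions proportional to $(\cN+1)(\cN/N)^{2b}$ above. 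Once these are proven, the commutator $[\w\cR_{N,t},\cN(\cN/N)^{2b}]$ is controlled identically, since each monomial in $\w\cR_{N,t}$ shifts $\cN$ by at most three.

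For Step~2, I define the Gr\"onwall quantity
\[
F_{N,t} := \w\cL_{N,t} - C_{N,t} - C_b e^{C_b|t|}\cN(\cN/N)^{2b} + C e^{Ct}\big(\cN + N^\beta + \cN(\cN/N)^{2b}\big),
\]
so that $F_{N,t}\ge \tfrac12\cH_N + \cN + \cN(\cN/N)^{2b}$ provided $C$ is large enough and $C_b$ is chosen large enough to dominate the constants $K_b$ appearing in Step~1. Taking $\partial_t\langle\w\Phi_{N,t},F_{N,t}\w\Phi_{N,t}\rangle$ using (\ref{eq:wPhi}), the commutator of $F_{N,t}$ with $\w\cL_{N,t}$ produces only commutators with $\w\cR_{N,t}$ and the $(\cN/N)^{2b}$-term; all these are bounded by $F_{N,t}$ up to the factor $\wt C e^{\wt C t}$, using the Step~1 estimates and absorbing the $(\cN+1)(\cN/N)^{2b}$ piece into the damping $C_b e^{C_b|t|}\cN(\cN/N)^{2b}$. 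Gr\"onwall's lemma then yields
\[
\langle\w\Phi_{N,t},(\cH_N+\cN+\cN(\cN/N)^{2b})\w\Phi_{N,t}\rangle \le C e^{C e^{C|t|}}\langle\w\Phi_{N,0}, F_{N,0}\w\Phi_{N,0}\rangle.
\]

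Finally, Step~3 controls the initial datum $\w\Phi_{N,0}=T_{N,0}^*\xi_N$. The contribution of $\cH_N$ is estimated exactly as in Step~3 of Lemma \ref{lem:Phi} via the identity (\ref{eq:THT-bd}), giving $\langle\w\Phi_{N,0},\cH_N\w\Phi_{N,0}\rangle\le C N^\beta$. For the moment $\cN(\cN/N)^{2b}$, I invoke Proposition \ref{lem:Bog-N} with $k=2b+1$ to obtain
\[
\langle\w\Phi_{N,0},\cN(\cN/N)^{2b}\w\Phi_{N,0}\rangle \le C\langle\xi_N,\cN(\cN/N)^{2b}\xi_N\rangle + C N^{-2b}\le C,
\]
using the hypothesis on $\xi_N$. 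Combining these with (\ref{eq:step2-cl}) produces (\ref{eq:lemwphi}). The main obstacle I anticipate is the careful bookkeeping in Step~1 when controlling the cubic term $\w\cR_{N,t}^7$ and the commutator with $\cN(\cN/N)^{2b}$: one has to check that the remainder coming from replacing $\sqrt{(N-\cN)/N}$ by $G_b(\cN/N)$ is genuinely of order $(\cN/N)^{2b}$ after the Cauchy--Schwarz splitting against $\cV_N$, and that the constants $K_b$ depend only on $b$ and not on $C_b$, so that $C_b$ can indeed be chosen large enough to close the argument.
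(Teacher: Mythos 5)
Your proposal takes a genuinely different route from the paper, and the route you chose contains a gap. The paper's proof of Lemma \ref{lem:wPhi} is a short reduction: writing $\w\Phi_{N,t}=T_{N,t}^*\xi_{N,t}$ with $\xi_{N,t}$ as in (\ref{eq:defxiNt}), it expresses the left-hand side of (\ref{eq:lemwphi}) as $\langle \xi_{N,t}, T_{N,t}(\cH_N+\cN+\cN(\cN/N)^{2b})T_{N,t}^*\xi_{N,t}\rangle$, bounds the conjugated observable by $CN^\beta(\cH_N+\cN+\cN(\cN/N)^{2b}+1)$ using the computation of $T\cH_N T^*$ from Step 3 of Lemma \ref{lem:Phi} together with Proposition \ref{lem:Bog-N}, and then invokes the a priori bound of Corollary \ref{lem:xiN} on $\xi_{N,t}$, which was already established through Proposition \ref{lem:wG}. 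All the Gr\"onwall work is thus done once, in the Bogoliubov-conjugated picture where the generator has been renormalized; you propose instead to redo a Gr\"onwall argument directly on $\w\Phi_{N,t}$ with the unconjugated generator $\w\cL_{N,t}$.

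The gap is in your treatment of the commutator $i[\w\cR_{N,t},\cN(\cN/N)^{2b}]$, which your choice of functional forces you to control: since $F_{N,t}$ subtracts the damping term and re-adds $Ce^{Ct}\cN(\cN/N)^{2b}$ by hand, $\partial_t\langle\w\Phi_{N,t},F_{N,t}\w\Phi_{N,t}\rangle$ contains $[\w\cL_{N,t},\cN(\cN/N)^{2b}]=[\w\cR_{N,t},\cN(\cN/N)^{2b}]$, and the claim that this is ``controlled identically'' to $[\w\cR_{N,t},\cN]$ fails on the untruncated Fock space. For the cubic term $\w\cR_{N,t}^7$, the commutator has the form $N^{-1/2}\int(\cdots)\,a_x^*a_y^*a_{x'}\,G_b(\cN/N)P(\cN)+\text{h.c.}$ with $|P(\cN)|\le C_b((\cN+1)/N)^{2b}$; after the Cauchy--Schwarz splitting against $\cV_N$, the full weight $G_b(\cN/N)P(\cN)$ lands squared on one factor, producing moments of order $\cN(\cN/N)^{4b}$ and $\cN(\cN/N)^{6b}$, which on $\cF_{\perp\ph_{N,t}}$ (where $\cN/N$ is unbounded) are not dominated by $(\cN+1)(\cN/N)^{2b}$, so the Gr\"onwall does not close. (Relatedly, the bound $\sqrt{N}|G_b(\cN/N)|\le\sqrt{N}$ you invoke for $\w\cR_{N,t}^2$ is false off the sector $\cN\le N$, though there the resulting errors are at least of the advertised order $(\cN+1)(\cN/N)^{2b}$.) The moment escalation can be avoided by letting $\cN(\cN/N)^{2b}$ enter the Gr\"onwall functional only through the generator itself, e.g. $F_{N,t}=\w\cL_{N,t}-C_{N,t}+Ce^{Ct}(\cN+N^\beta)$, using the damping term with $C_b$ large compared with the $b$-dependent constants of your Step 1 to absorb the $(\cN+1)(\cN/N)^{2b}$ errors coming from $[\w\cR_{N,t},\cN]$ and $\partial_t\w\cR_{N,t}$ --- which is in effect what Corollary \ref{lem:xiN} does in the conjugated picture. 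Your Step 3 (control of the initial datum $T_{N,0}^*\xi_N$) is fine.
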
 

\begin{proof}
Consider the Bogoliubov transformed dynamics $\xi_{N,t}= T_{N,t}\w\Phi_{N,t}$ as defined in \eqref{eq:defxiNt}. Then 
\[ \begin{split} \langle \wt{\Phi}_{N,t} , (\cH_N +\cN+\cN(\cN/N)^{2b}) \w\Phi_{N,t} \rangle &= \langle \xi_{N,t}, T_{N,t} (\cH_N +\cN+\cN(\cN/N)^{2b}) T_{N,t}^* \xi_{N,t} \rangle \\ &\leq C N^\beta \langle \xi_{N,t}, (\cH_N  + \cN + \cN (\cN/N)^{2b}) \xi_{N,t} \rangle  \end{split} \]
where we proceeded exactly as in Step 3 in the proof of Lemma \ref{lem:Phi} to bound the expectation of $T_{N,t} \cH_N T_{N,t}^*$ and we applied Proposition \ref{lem:Bog-N} to bound the other terms. Now we apply Corollary \ref{lem:xiN} to conclude that, if $\ell > 0$ is small enough in (\ref{eq:Neum}) and if $C_b>0$ is large enough in (\ref{eq:wLNt}), there exists a constant $C > 0$ such that 
\[ \langle \wt{\Phi}_{N,t} , (\cH_N +\cN+\cN(\cN/N)^{2b}) \w\Phi_{N,t} \rangle \leq C N^\beta \exp(C\exp(C|t|)) \]
for all $t \in \bR$. 
\end{proof}

Remark that Corollary \ref{lem:xiN} and Proposition \ref{lem:Bog-N} actually imply the stronger (compared with (\ref{eq:lemwphi})) estimate $\langle \w\Phi_{N,t} , \cN \w\Phi_{N,t} \rangle \leq C \exp (C \exp (C|t|))$ for the expectation of $\cN$. 

Using Lemma \ref{lem:Phi} and Lemma \ref{lem:wPhi} we are now ready to prove Lemma \ref{lem:Phi-wPhi}.
\begin{proof}[Proof of Lemma \ref{lem:Phi-wPhi}] 
Note that
$$
\| \Phi_{N,t}-\w\Phi_{N,t}\|^2 = 2 \big( 1-\Re \langle \Phi_{N,t},\w\Phi_{N,t}\rangle \big).
$$

With the notation $\1^{\le m}=\1(\cN\le m)$ and $\1^{>m}=\1-\1^{\le m}$, we can decompose 
\bq
\langle \Phi_{N,t},  \widetilde\Phi_{N,t} \rangle = \langle \Phi_{N,t}, \1^{\le m} \widetilde\Phi_{N,t} \rangle +  \langle \Phi_{N,t}, \1^{> m} \widetilde\Phi_{N,t} \rangle .
\eq
Instead of fixing $m$, we take the average over $m\in [M/2+1,M]$ with an even number $1\ll M \ll N$. This gives
\bq \label{eq:loc-dec}
\langle \Phi_{N,t},  \widetilde\Phi_{N,t} \rangle = \frac{2}{M}\sum_{m=M/2+1}^M \Big( \langle \Phi_{N,t}, \1^{\le m} \widetilde\Phi_{N,t} \rangle +  \langle \Phi_{N,t}, \1^{> m} \widetilde\Phi_{N,t} \rangle \Big).
\eq
We are going to choose $M= N^{1-\eps}$ with $\eps>0$ a sufficiently  small that will be specified later. Next, we estimate the two terms on the r.h.s. of \eqref{eq:loc-dec}. 

\bigskip

\noindent
{\it  Many-particle sectors.} With $\1^{> m} \le \cN/m$ 
and Lemma \ref{lem:wPhi}, we have
\[ \begin{split} 
|\langle \Phi_{N,t}, \1^{> m} \widetilde\Phi_{N,t} \rangle| &\le \| \Phi_{N,t}\| \|\1^{> m} \widetilde\Phi_{N,t}\| \\ &\le \langle \widetilde\Phi_{N,t}, (\cN/m) \widetilde\Phi_{N,t} \rangle^{1/2} \le C \sqrt{\frac{N^{\beta}}{M}}\exp(C\exp(C|t|)).
\end{split} \]
Thus 
\bq \label{eq:many-p}
\frac{2}{M} \sum_{m=M/2+1}^M |\langle \Phi_{N,t}, \1^{> m} \widetilde\Phi_{N,t} \rangle| \le C \sqrt{\frac{N^{\beta}}{M}}\exp(C\exp(C|t|)).
\eq

\noindent
{\it Few-particle sectors.} From the Schr\"odinger equations (\ref{eq:Phi}) and (\ref{eq:wPhi}) for $\Phi_{N,t}$ and $\widetilde\Phi_{N,t}$, we obtain  
\begin{align*}
\frac{d}{dt} \Re \langle \Phi_{N,t}, \1^{\le m} \widetilde\Phi_{N,t} \rangle = \Im \langle \Phi_{N,t}, (\cL_{N,t} \1^{\le m} - \1^{\le m} \widetilde\cL_{N,t} ) \widetilde\Phi_{N,t} \rangle 
\end{align*}
We can write
\begin{align*}
\cL_{N,t} \1^{\le m} - \1^{\le m} \widetilde\cL_{N,t} &= (\cL_{N,t}-\w\cL_{N,t})\1^{\le m}+ [ \w\cL_{N,t}, \1^{\le m}].   
\end{align*}

\noindent 
{\it Bound for $(\cL_{N,t} -\w\cL_{N,t}) \1^{\le m}$}. We have
\begin{equation}\label{eq:LwLm}
\begin{split}
 (\cL_{N,t} -\w\cL_{N,t})\1^{\le m} = \; & A_1 \Big[ \sqrt{1-\cN/N} - G_b(\cN/N)\Big] \1^{\le m}+{\rm h.c.} \\
&+ A_2 \frac{\sqrt{(N-\cN)(N-\cN-1)}-(N-\cN)}{N} \1^{\le m}+ {\rm h.c.} \\
&- C_b e^{C_bt} \cN (\cN/N)^{2b} \1^{\le m}
\end{split}
\end{equation}
with the two operators 
\begin{equation}\label{eq:A1A2}
\begin{split} 
A_1 = \; &\sqrt{N} \big[ a^* (Q_{N,t}[(V_N \omega_N) \ast|\varphi_{N,t}|^2] \varphi_{N,t}) - a^* (Q_{N,t}[V_N\ast|\varphi_{N,t}|^2] \varphi_{N,t}) (\cN/N) \big]  \\
&+ \frac{1}{\sqrt{N}}\int dx dy dx' dy'  \, (Q_{N,t} \otimes Q_{N,t} V_N Q_{N,t} \otimes 1)(x,y; x' , y') a_x^* a_y^* a_{x'} \varphi_{N,t} (y') 
\\
A_2 = \; &\frac{1}{2} \int dx dy \, K_{2,N,t}(x;y) a^*_x a^*_y . 
\end{split}
\end{equation}
To bound the r.h.s. of (\ref{eq:LwLm}) we are going to use the following proposition. 
\begin{proposition}\label{lem:A1} 
Assume the interaction potential $V$ to be smooth, spherically symmetric, compactly supported and non-negative. Then, for all 
vectors $\xi_1,\xi_2 \in \cF_{\bot \varphi_{N,t}}$, we have the bounds 
$$
|\langle \xi_1, A_{1} \xi_2\rangle| \le C \exp(C|t|) \big\langle \xi_1, \big(N^{2\beta-1} + (\cN/N)^2 + \cV_N \big) \xi_1\big\rangle^{1/2} \, \langle \xi_2, (\cN+1) \xi_2\rangle^{1/2} 
$$
and
$$
|\langle \xi_1, A_{2} \xi_2\rangle| \le C \sqrt{N} \exp(C|t|) \langle \xi_1,\cV_N \xi_1 \rangle^{1/2} \|\xi_2\|.
$$
\end{proposition}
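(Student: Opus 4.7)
\textbf{Proof plan for Proposition \ref{lem:A1}.} The strategy is to test both operators against pairs of vectors $\xi_1,\xi_2\in\mathcal{F}_{\perp\varphi_{N,t}}$ and control the resulting integrals by standard Cauchy--Schwarz estimates: using weights $V_N(x-y)$ wherever a cubic term appears, so as to reconstruct the potential energy $\cV_N$; and using $L^2$-bounds together with the moment estimate $\|a(f)\xi\|\leq\|f\|_2\|\cN^{1/2}\xi\|$ wherever only linear terms are present. Throughout, the bounds (\ref{eq:fN-bound}) on $\omega_N$ and (\ref{eq:phi-bds}) on $\varphi_{N,t}$ supply the $t$-dependent prefactors. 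For the easier operator $A_2$, note that $K_{2,N,t}=(Q_{N,t}\otimes Q_{N,t})\widetilde K_{2,N,t}$ with $\widetilde K_{2,N,t}(x,y)=\varphi_{N,t}(x)V_N(x-y)\varphi_{N,t}(y)$, so writing $\langle\xi_1,A_2\xi_2\rangle=\tfrac12\int K_{2,N,t}(x,y)\langle a_xa_y\xi_1,\xi_2\rangle\,dxdy$, splitting each $Q$-factor as $1-|\varphi_{N,t}\rangle\langle\varphi_{N,t}|$ and applying Cauchy--Schwarz with weight $V_N(x-y)$ yields
\[ |\langle\xi_1,A_2\xi_2\rangle|\leq C\Big(\int V_N(x-y)\|a_xa_y\xi_1\|^2\,dxdy\Big)^{1/2}\Big(\int V_N(x-y)|\varphi_{N,t}(x)|^2|\varphi_{N,t}(y)|^2\,dxdy\Big)^{1/2}\|\xi_2\|. \]
The first factor is $\sqrt{2N}\langle\xi_1,\cV_N\xi_1\rangle^{1/2}$ by definition of $\cV_N$, and the second is bounded by $\|\varphi_{N,t}\|_\infty^2\|V_N\|_1^{1/2}\|\varphi_{N,t}\|_2\leq Ce^{C|t|}$ thanks to (\ref{eq:phi-bds}), producing the claimed estimate.

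For $A_1$, split into three summands $A_1=A_1^{\rm lin}-A_1^{\cN}+A_1^{\rm cub}$ corresponding to the three structural pieces visible in (\ref{eq:A1A2}). For the purely linear piece $A_1^{\rm lin}=\sqrt N\,a^*(Q_{N,t}[(V_N\omega_N)*|\varphi_{N,t}|^2]\varphi_{N,t})$, the bound (\ref{eq:fN-bound}) gives $\omega_N\leq CN^{\beta-1}$ on $\operatorname{supp}V_N$, hence $\|V_N\omega_N\|_1\leq CN^{\beta-1}$ and
\[ \|Q_{N,t}[(V_N\omega_N)*|\varphi_{N,t}|^2]\varphi_{N,t}\|_2\leq \|(V_N\omega_N)*|\varphi_{N,t}|^2\|_\infty\|\varphi_{N,t}\|_2\leq CN^{\beta-1}e^{C|t|}; \]
the estimate $|\langle\xi_1,\sqrt N a^*(f)\xi_2\rangle|\leq \sqrt N\|f\|_2\|\xi_1\|\|(\cN+1)^{1/2}\xi_2\|$ then delivers the factor $\sqrt{N^{2\beta-1}}$. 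For the cubic term $A_1^{\rm cub}$, distributing the $Q_{N,t}$-projections yields the leading kernel $N^{-1/2}V_N(x-y)\varphi_{N,t}(y)a_x^*a_y^*a_{x'}$ together with $Q$-corrections; Cauchy--Schwarz with $V_N$-weight gives
\[ |\langle\xi_1,A_1^{\rm cub}\xi_2\rangle|\leq \tfrac{1}{\sqrt N}\int V_N(x-y)|\varphi_{N,t}(y)|\|a_xa_y\xi_1\|\|a_{x'}\xi_2\|\,dx\,dy\,dx'\leq Ce^{C|t|}\langle\xi_1,\cV_N\xi_1\rangle^{1/2}\|(\cN+1)^{1/2}\xi_2\|, \]
where the $x'$-integral is controlled by $\|a_{x'}\xi_2\|^2$ integrated against the kernel coming from the remaining $Q$-block, which is bounded in Hilbert--Schmidt norm uniformly in $N$. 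Finally, for $A_1^{\cN}$ use the pull-through $\sqrt N\,a^*(g)(\cN/N)=(\cN-1)a^*(g)/\sqrt N$ with $g=Q_{N,t}[V_N*|\varphi_{N,t}|^2]\varphi_{N,t}$, observe $\|g\|_2\leq Ce^{C|t|}$ by Young's inequality, and apply Cauchy--Schwarz to move $\cN/\sqrt N$ onto $\xi_1$ and $a(g)$ onto $\xi_2$, producing the contribution controlled by the $\cN$-weight on the $\xi_1$ side.

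The principal obstacle is the careful handling of the $Q_{N,t}$-projections in $A_1^{\rm cub}$: writing $1=Q_{N,t}+|\varphi_{N,t}\rangle\langle\varphi_{N,t}|$ in the unprojected fourth slot splits $A_1^{\rm cub}$ into a fully-projected quartic-looking piece and a residual piece whose $y'$-integration produces precisely $-\sqrt N\,a^*(Q_{N,t}[V_N*|\varphi_{N,t}|^2]\varphi_{N,t})(\cN/N)$-like corrections that must be combined cleanly with $A_1^{\cN}$ to avoid spurious $\sqrt N$ factors. This combination relies on the identity $Q_{N,t}\varphi_{N,t}=0$, which eliminates the would-be leading contribution and leaves only a bounded remainder amenable to the three weight classes $N^{2\beta-1}$, $(\cN/N)^2$, and $\cV_N$ appearing in the statement. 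Once this rearrangement is performed, the remaining estimates are routine applications of Cauchy--Schwarz and (\ref{eq:phi-bds}).
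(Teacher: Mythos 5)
Your estimates for the three pieces of $A_1$ and for $A_2$ are, at their core, the same as the paper's: $L^2$-bounds plus $\|a^*(f)\xi\|\le\|f\|_2\|(\cN+1)^{1/2}\xi\|$ for the linear term (with $\|(V_N\omega_N)*|\varphi_{N,t}|^2\varphi_{N,t}\|_2\le CN^{\beta-1}e^{C|t|}$ supplying the $N^{2\beta-1}$ weight), a pull-through of $\cN/N$ onto $\xi_1$ for the second term, and Cauchy--Schwarz with weight $V_N(x-y)$ for the cubic term and for $A_2$ (the latter producing the $\sqrt N$ from $\int V_N\|a_xa_y\xi_1\|^2 = 2N\langle\xi_1,\cV_N\xi_1\rangle$). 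All final bounds match the statement.

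However, your closing paragraph on the ``principal obstacle'' describes a step that does not exist and, as written, would not work. The fourth slot of the cubic kernel is already the identity contracted against $\varphi_{N,t}(y')$; if you insert $1=Q_{N,t}+|\varphi_{N,t}\rangle\langle\varphi_{N,t}|$ there, the $Q_{N,t}$-piece vanishes identically (precisely because $Q_{N,t}\varphi_{N,t}=0$) and the rank-one piece reproduces the original term, so no $-\sqrt N\,a^*(\cdot)(\cN/N)$-type corrections are generated and no cancellation with $A_1^{\cN}$ is needed: the $\cN/N$ summand is a separate term already present in the definition \eqref{eq:A1A2}, and the paper bounds it on its own using $\|[V_N*|\varphi_{N,t}|^2]\varphi_{N,t}\|_2\le Ce^{C|t|}$. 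The simplification you actually need is the one in the paper's footnote: since $\xi_1,\xi_2\in\cF_{\perp\varphi_{N,t}}$, the projections $Q_{N,t}$ sitting on the $a_x^*,a_y^*,a_{x'}$ legs act as the identity, so the kernel collapses (via the $\delta(x-x')$ hidden in the multiplication operator $V_N$) to $\tfrac{1}{\sqrt N}V_N(x-y)\varphi_{N,t}(y)\,a_x^*a_y^*a_x$, and the $x'$-integral you carry separately disappears. Relatedly, your claim that the ``remaining $Q$-block'' is ``bounded in Hilbert--Schmidt norm uniformly in $N$'' is false ($Q_{N,t}$ is an infinite-rank projection, hence not Hilbert--Schmidt); it is bounded in operator norm, which is what one uses, and after setting $x'=x$ the $\xi_2$-factor is simply $\bigl(\int V_N(x-y)\|a_x\xi_2\|^2\,dx\,dy\bigr)^{1/2}\le \|V\|_1^{1/2}\langle\xi_2,\cN\xi_2\rangle^{1/2}$. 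Deleting the spurious cancellation argument and replacing it with these two observations makes your proof coincide with the paper's.
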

\begin{proof} First we consider $A_1$.  Using 
$$a^*(g)a(g)\le a(g)a^*(g)\le (\cN+1) \|g\|_{L^2}^2$$
and 
\begin{align*}
\|Q_{N,t}[(V_N \omega_N) \ast|\varphi_{N,t}|^2]\varphi_{N,t}\|_{L^2} \le\; & \|[(V_N \omega_N) \ast|\varphi_{N,t}|^2]\varphi_{N,t}\|_{L^2} \\
\le \; & \|V_N\|_{L^1} \|\omega_N\|_{L^\infty} \|\varphi_{N,t}\|_{L^\infty}^2 \|\varphi_{N,t}\|_{L^2} \\ \le\; &  C N^{\beta-1} \exp(C|t|),
\end{align*}
we have
\begin{align*}
&|\langle \xi_1, \sqrt{N}a^* (Q_{N,t}[(V_N \omega_N) \ast|\varphi_{N,t}|^2] \varphi_{N,t}) \xi_2 \rangle| \\
&\hspace{2cm} \le C N^{\beta-1/2} \exp(C|t|) \|\xi_1\| \langle \xi_2, (\cN+1) \xi_2\rangle^{1/2} 
\end{align*}
and
\begin{align*}
|\langle \xi_1, a^* (Q_{N,t}[V_N\ast|\varphi_{N,t}|^2] \varphi_{N,t}) \frac{\cN}{N}  \xi_2 \rangle| \le C \exp(C|t|) \langle \xi_1, (\cN/N)^2 \xi_1 \rangle^{1/2} \langle \xi_2, \cN \xi_2\rangle^{1/2}. 
\end{align*}
Moreover\footnote{Note that the projection $Q_{N,t}$ has no effect in the excited Fock space $\cF_{\bot \varphi_{N,t}}$}
\begin{align*}
&\Big| \Big\langle \xi_1,  \frac{1}{\sqrt{N}}\int \d x \d y \d x' \d y' \, (Q_{N,t} \otimes Q_{N,t} V_N Q_{N,t} \otimes 1)(x,y; x',y') a_x^* a_y^* a_{x'} \varphi_{N,t} (y') \xi_2 \Big \rangle \Big| \\
&=  \Big| \frac{1}{\sqrt{N}} \int \d x \d y \, V_N(x-y) \varphi_{N,t} (y) \langle a_x a_y \xi_1, a_x \xi_2 \rangle  \Big| \\
&\le \frac{1}{\sqrt{N}} \int \d x \d y \, V_N(x-y) |\varphi_{N,t} (y)| \|a_x a_y \xi_1\| \|a_x \xi_2\| \\
&\le \|\varphi_{N,t}\|_{L^\infty} \Big( \frac{1}{N} \int \d x \d y \, V_N(x-y) \|a_x a_y \xi_1\|^2 \Big)^{1/2} \Big( \int \d x \d y V_N(x-y)  \|a_x \xi_2\|^2\Big)^{1/2}\\
&\le C\exp(C|t|) \langle \xi_1, \cV_N \xi_1 \rangle^{1/2} \langle \xi_2, \cN \xi_2 \rangle^{1/2}.
\end{align*}

To prove the bound for $A_2$, we estimate
\begin{align*}
|\langle \xi_1, A_2 \xi_2 \rangle| &= \Big| \int \d x \d y V_N(x-y) \varphi_{N,t}(x) \varphi_{N,t}(y)  \langle a_x a_y \xi_1, \xi_2 \rangle  \Big| \\
&\le \|\varphi_{N,t}\|_{L^\infty} \Big( \int \d x \d y \, V_N(x-y) \|a_x a_y \xi_1\|^2 \Big)^{1/2} \\
&\hspace{3cm} \times \Big( \int \d x \d y V_N(x-y) |\varphi_{N,t}(x)|^2  \|\xi_2\|^2\Big)^{1/2}\\
&\le C \sqrt{N} \exp(C|t|)\langle \xi_1, \cV_N \xi_1\rangle^{1/2} \|\xi_2\|.
\end{align*}
This ends the proof of the proposition.
\end{proof}

We control now the operators on the r.h.s. of (\ref{eq:LwLm}).  Obviously,
$$
\cN(\cN/N)^{2b}\1^{\le m} \le CM(M/N)^{2b}.
$$
and, therefore,
$$
|\langle \Phi_{N,t}, \cN(\cN/N)^{2b}\1^{\le m} \w\Phi_{N,t} \rangle | \le CM(M/N)^{2b}.
$$

Using Proposition \ref{lem:A1} with 
$$\xi_1=\Phi_{N,t}, \quad \xi_2= \big[ \sqrt{1-\cN/N}-G_b(\cN/N) \big] \1^{\le m} \, \w\Phi_{N,t},$$
combined with the simple bound 
$$
|\sqrt{1-\cN/N}-G_b(\cN/N)|\1^{\le m} \le C(M/N)^{b+1}
$$
that follows from \eqref{eq:Pb-1} and with the estimates in Lemma \ref{lem:Phi} and Lemma \ref{lem:wPhi}, we obtain
\begin{align*}
\Big| \langle \Phi_{N,t}, A_1 (\sqrt{1-\cN/N}-G_b(\cN/N))\1^{\le m} \w\Phi_{N,t}\rangle\Big| \le C (M/N)^{b+1} N^{\beta} \exp(C\exp(C|t|)).
\end{align*}

Using again Proposition \ref{lem:A1} with
$$\xi_1=\Phi_{N,t}, \quad \xi_2= \big[ \sqrt{(N-\cN)(N-\cN-1)} - N-\cN \big] \1^{\le m} \w\Phi_{N,t} \, , $$
the simple bound  
$$
|\sqrt{(N-\cN)(N-\cN-1)} - N-\cN| \le 1,
$$ 
and the bounds in Lemma \ref{lem:Phi} and Lemma \ref{lem:wPhi}, 
we also obtain
\begin{align*}
\Big| \big\langle \Phi_{N,t}, A_2 \frac{\sqrt{(N-\cN)(N-\cN-1)} - N-\cN}{N} \1^{\le m} \w\Phi_{N,t}\big\rangle\Big| \le C N^{\frac{\beta-1}{2}} \exp(C\exp(C|t|)).
\end{align*}

The hermitian conjugated terms can be controlled analogously (Proposition \ref{lem:A1} provides bounds for $A_1^*, A_2^*$, as well, switching $\xi_1$ and $\xi_2$). In summary, we have shown that 
\begin{align*}
&\Big| \langle \Phi_{N,t},  (\cL_{N,t} -\w\cL_{N,t})\1^{\le m} \w\Phi_{N,t}\rangle\Big| \\
&\hspace{2cm} \le C \Big[ N^{\beta-1}+ M(M/N)^{2b} + (M/N)^{b+1} N^{\beta}\Big] \exp(C\exp(C|t|)).
\end{align*}

\noindent{\it Bound for $[ \w\cL_{N,t}, \1^{\le m}]$.} We can decompose
\bq
[\w\cL_{N,t},\1^{\le m}] = \1^{\le m} \widetilde\cL_{N,t} \1^{>m} - \1^{> m} \w\cL_{N,t} \1^{\le m}. 
\eq
Let us focus on $ \1^{> m} \w\cL_{N,t} \1^{\le m}$; the other term can be treated similarly. With the operators $A_1, A_2$ defined in (\ref{eq:A1A2}), we have 
\begin{equation}\label{eq:1L1}
\begin{split}
\1^{> m} \w\cL_{N,t} \1^{\le m} &= \1^{> m} \Big(A_1 G_p(\cN/N) + A_2\frac{N-\cN}{N} \Big) \1^{\le m} \\
&= A_1 G_p(\cN/N) \1(\cN=m)+ A_2\frac{N-\cN}{N} \1(m-1\le \cN \le m).
\end{split} 
\end{equation}
Here we used the fact that $A_1$ creates exactly one particle while  $A_2$ creates exactly two particles. All other terms in $\w\cL_{N,t}$ leave the number of particles invariant, and therefore do not contribute to (\ref{eq:1L1}). Thus
\begin{align*}
\sum_{m=M/2+1}^M \1^{> m} \w\cL_{N,t} \1^{\le m} = \; &A_1 G_p(\cN/N) \1(M/2<\cN \le M ) \\
& + A_2\frac{N-\cN}{N} \Big[ \1(M/2< \cN \le M) + \1(M/2\le \cN<M) \Big].
\end{align*}

Using Proposition \ref{lem:A1} with
$$\xi_1=\Phi_{N,t}, \quad \xi_2 = G_p (\cN/N) \1(M/2<\cN \le M ) \w\Phi_{N,t},$$
combined with the simple estimate (recall that we will choose $M \ll N$) 
$$
|G_p(\cN/N)| \1(M/2<\cN \le M) \le C
$$ 
and with the bounds in Lemma \ref{lem:Phi} and in Lemma 
\ref{lem:wPhi}, we obtain
\begin{align*}
\langle \Phi_{N,t},  A_1 G_p(\cN/N) \1(M/2<\cN \le M ) \w\Phi_{N,t}\rangle \le C N^{\beta} \exp(C\exp(C|t|)).
\end{align*}

Similarly, using again Proposition \ref{lem:A1} and Lemma \ref{lem:wPhi}, we find 
\begin{align*}
\langle \Phi_{N,t},  A_2 (1-\cN/N)\Big[ \1(M/2< \cN \le M) + \1(M/2 &\le \cN<M) \Big] \w\Phi_{N,t}\rangle \\
&\le C N^{\frac{\beta+1}{2}} \exp(C\exp(C|t|)).
\end{align*}
Thus, we conclude that 
\bq
\frac{2}{M} \Big| \sum_{m=M/2+1}^M \langle \Phi_{N,t},  [\w\cL_{N,t},\1^{\le m}]  \w\Phi_{N,t}\rangle \Big| \le C \frac{N^{\frac{\beta+1}{2}}}{M} \exp(C\exp(C|t|)). 
\eq
In summary, we have proved that 
\begin{align*}
&\Big| \Re \frac{2}{M} \sum_{m=M/2+1}^M \frac{d}{dt}  \langle \Phi_{N,t}, \1^{\le m} \widetilde\Phi_{N,t} \rangle\Big|\\
&\hspace{1cm} \le  C \Big[ N^{\frac{\beta-1}{2}}+ M \Big(\frac{M}{N}\Big)^{2b} + N^{\beta} \Big(\frac{M}{N}\Big)^{b+1} +  
\frac{N^{\frac{\beta+1}{2}}}{M}\Big] \exp(C\exp(C|t|)).
\end{align*}

\noindent {\it Conclusion of the proof.} For every $\alpha<(1-\beta)/2$, we can choose $M=N^{1-\eps}$ with a sufficiently small $\eps>0$, and then $b$ sufficiently large to obtain 
\begin{align*}
\Big| \Re \frac{2}{M} \sum_{m=M/2+1}^M \frac{d}{dt}  \langle \Phi_{N,t}, \1^{\le m} \widetilde\Phi_{N,t} \rangle\Big| \le  C N^{-\alpha} \exp(C\exp(C|t|)).
\end{align*}
Integrating over $t$, we find 
\[ \begin{split}
\Re \frac{2}{M} \sum_{m=M/2+1}^M  &\langle \Phi_{N,t}, \1^{\le m} \widetilde\Phi_{N,t} \rangle \\ & \ge \Re \frac{2}{M} \sum_{m=M/2+1}^M  \langle \Phi_{N,0}, \1^{\le m} \widetilde\Phi_{N,0} \rangle - C N^{-\alpha} \exp(C\exp(C|t|)).
\end{split}\]
On the other hand, using the assumption $\Phi_{N,0} =\1^{\le N}T^*_{N,0}\xi_N$, $\w\Phi_{N,0}=T^*_{N,0}\xi_N$ we have the lower bound 
\begin{align*}
 \langle \Phi_{N,0}, \1^{\le m} \widetilde\Phi_{N,0} \rangle &= \| \1^{\le m}T_{N,0}^*\xi_{N}\|^2 = 1 - \| \1^{> m}T^*_{N,0}\xi_{N}\|^2 \\
 &\ge 1- \langle T^*_{N,0}\xi_N, (\cN/m) T^*_{N,0}\xi_N\rangle \\
 & \ge 1 - C \langle \xi_N, (\cN/m) \xi_N\rangle\ge 1-C/M.
\end{align*}
Here we have used Proposition \ref{lem:Bog-N} in the second last estimate and the assumption on $\xi_N$ for the last inequality. Thus
\begin{align*}
\Re \frac{2}{M} \sum_{m=M/2+1}^M  \langle \Phi_{N,t}, \1^{\le m} \widetilde\Phi_{N,t} \rangle \ge 1 - C N^{-\alpha} \exp(C\exp(C|t|))- C M^{-1}
\end{align*}
Combining with \eqref{eq:many-p} and using the choice $M = N^{1-\eps}$ for a sufficiently small $\eps > 0$, we obtain 
\begin{align*}
\Re \langle \Phi_{N,t},  \widetilde\Phi_{N,t} \rangle  \ge 1 - C N^{-\alpha} \exp(C\exp(C|t|)).
\end{align*}
Consequently,
\begin{align*}
\|\Phi_{N,t}-  \widetilde\Phi_{N,t} \|^2 \le 2(1-\Re \langle \Phi_{N,t},  \widetilde\Phi_{N,t}\rangle) \le C N^{-\alpha} \exp(C\exp(C|t|)).
\end{align*}
\end{proof}

\section{Proof of main Results}
\label{sec:thm-proof}

Combining Lemma \ref{lem:Phi-wPhi} and Lemma \ref{lem:Compare-xiN}, we can prove our first main theorem. 

\begin{proof}[Proof of Theorem \ref{thm:main1}]
Fix $\alpha < \min (\beta/2,(1-\beta)/2)$. To begin with, let us choose a sequence $\xi_N \in \cF_{\perp \ph_0}$ with $\| \xi_N \| \leq 1$ and with 
\begin{equation}\label{eq:ass-xi-str} \langle \xi_N , (\cH_N + \cN + \cN (\cN/N)^{2b}) \xi_N \rangle \leq C \end{equation}
uniformly in $N$. This assumption is stronger than the assumption (\ref{eq:ass-xi}) in the theorem; at the end, we will show how 
to relax it.

Assuming (\ref{eq:ass-xi-str}), we consider the many-body evolution 
\[ 
\Psi_{N,t} = e^{-itH_N} U_{\ph_0}^* \1^{\leq N} T_{N,0}^* \xi_N
\]
and we factor out the condensate, defining, as in (\ref{eq:phiNt}), 
$\Phi_{N,t} = U_{\ph_{N,t}} \Psi_{N,t}$. To prove Theorem \ref{thm:main1}, we have to compare $\Phi_{N,t}$ with the (Bogoliubov transformed) effective evolution $T^*_{N,t} \xi_{2,N,t} = T^*_{N,t} \cU_{2,N} (t;0) \xi_N$. To this end, we recall the definition (\ref{eq:wPhi}) of the modified fluctuation dynamics $\wt{\Phi}_{N,t}$, and we bound
\[ \left\| \Phi_{N,t} - T^*_{N,t} \xi_{2,N,t} \right\| \leq \| \Phi_{N,t} - \wt{\Phi}_{N,t} \| + \| \wt{\Phi}_{N,t} - T^*_{N,t} \xi_{2,N,t} \| \leq \| \Phi_{N,t} - \wt{\Phi}_{N,t} \| + \| \xi_{N,t} -  \xi_{2,N,t} \| \]
where, as in (\ref{eq:defxiNt}), we set $\xi_{N,t} = T_{N,t} \wt{\Phi}_{N,t}$ and we used the unitarity of $T_{N,t}$. Combining Lemma \ref{lem:Phi-wPhi} and Lemma \ref{lem:Compare-xiN} (which can be used, because of the additional assumption (\ref{eq:ass-xi-str})), we conclude that there exists a constant $C > 0$ such that 
\begin{equation}\label{eq:res-add} \left\| \Phi_{N,t} - T^*_{N,t} \xi_{2,N,t} \right\|_\cF \leq C N^{-\alpha} \exp (C \exp (C|t|)) \end{equation}
for all $t \in \bR$ and all $N$ large enough. This proves Theorem \ref{thm:main1} under the additional assumption (\ref{eq:ass-xi-str}). 

Now, let us assume that the sequence $\xi_N \in \cF_{\perp \ph_0}$ is normalized $\| \xi_N \| =1$, but, instead of (\ref{eq:ass-xi-str}), that it only satisfies the weaker bound 
\begin{equation}\label{eq:ass-thm} \langle \xi_N , (\cK + \cN) \xi_N \rangle \leq C \, , \end{equation}
uniformly in $N$. We choose $M = N^{2\alpha}$ and we decompose
\[ \xi_N = \1^{\leq M} \xi_N + \1^{>M} \xi_N \]
Then, using unitarity of the maps $U_{\ph_{N,t}}$, $T_{N,t}$, $e^{iH_N t}$ and $\cU_{2,N} (t;0)$, we obtain 
\begin{equation}\label{eq:fin} 
\begin{split}\| \Phi_{N,t} - T^*_{N,t} \xi_{2,N,t} \| = \; &\| U_{\ph_{N,t}} e^{-itH_N} U_{\ph_0}^* \1^{\leq N} T_{N,0}^* \xi_N - T^*_{N,t} \cU_{2,N} (t;0) \xi_N \| \\ \leq \; &\| U_{\ph_{N,t}} e^{-itH_N} U_{\ph_0}^* \1^{\leq N} T_{N,0}^* \1^{\leq M} \xi_N - T^*_{N,t} \cU_{2,N} (t;0) \1^{\leq M} \xi_N \| \\&+ 2 \| \1^{>M} \xi_N \| \end{split} \end{equation} 
On the one hand, using Markov's inequality and (\ref{eq:ass-thm}), we have
\[ \| \1^{> M} \xi_N \|^2 = \langle \xi_N, \1^{>M} \xi_N \rangle \leq M^{-1} \langle \xi_N, \cN \xi_N \rangle \leq C N^{-2\alpha} \]
On the other hand, the sequence $\wt{\xi}_N = \1^{\leq M} \xi_N$ is such that $\| \wt{\xi}_N \| \leq \| \xi_N \| = 1$ and 
\begin{equation}\label{eq:ass-thm2} \langle \wt{\xi}_N, (\cH_N + \cN + \cN (\cN/N)^{2b}) \wt{\xi}_N \rangle \leq \langle \xi_N, (\cK + \cN +1) \xi_N \rangle \leq C \end{equation}
by (\ref{eq:ass-thm}). Here we used the bound $\cV_N \leq C N^{\beta- 1} (\cK+1)(\cN +1)$ for the potential energy, which implies, by the choice of $M=N^{2\alpha}$ and of $\alpha \leq (1-\beta)/2$, that $\cV_N \1^{\leq M} \leq C (\cK+1)$. Because of (\ref{eq:ass-thm2}), we can apply the convergence (\ref{eq:res-add}), established under the additional assumption (\ref{eq:ass-xi-str}), to estimate the  first term on the r.h.s. of (\ref{eq:fin}). We obtain that (this time only under the assumption (\ref{eq:ass-thm}))   
\[ \| \Phi_{N,t} - T^*_{N,t} \xi_{2,N,t} \|\leq 
C N^{-\alpha} \exp (C \exp (C|t|)) \] 
This concludes the proof of Theorem \ref{thm:main1}.
\end{proof}

To show Theorem \ref{thm:main2}, we compare the difference between the generators of the quadratic evolutions $\cU_{2,N}$ and $\cU_2$ defined in (\ref{eq:U2Nts-def}) and, respectively, in (\ref{eq:U2infty}).
\begin{proposition}\label{lm:compare-xiinfty}
Assume Hypothesis A holds true. Let $\cG_{2,N,t}$ and $\cG_{2,t}$ be as defined in (\ref{eq:cG2N}) and in (\ref{eq:G2t}) (and $\eta_N (t)$ as in (\ref{eq:etaN})). Then there exists $C > 0$ such that, with $\alpha = \min (\beta/2 , (1-\beta)/2)$, 
\[ \begin{split} | \langle \xi_1, (\cG_{2,N,t} - \eta_N (t) &- \cG_{2,t}) \xi_2 \rangle | \\ &\leq C N^{-\alpha} \exp (C \exp (C |t|))  \| (\cK + \cN+1)^{1/2} \xi_1 \| \| (\cN + 1)^{1/2} \xi_2 \| \end{split} \]
for all $\xi_1, \xi_2 \in \cF_{\perp \ph_{N,t}}$ and all $t \in \bR$. 
\end{proposition}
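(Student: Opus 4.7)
My plan is to expand $\cG_{2,N,t}-\eta_N(t)$ and $\cG_{2,t}$ term by term, using the definitions (\ref{eq:cG2N})--(\ref{eq:defG2NtK}) together with the explicit substitution rule for the limiting generator laid out in the paragraph preceding (\ref{eq:G2t}). After this expansion, the difference decomposes into a finite sum of sesquilinear forms in $\xi_1, \xi_2$ (typically of the type $a^\sharp(f) a^\sharp(g)$ with one or two creation operators), in which at least one of the functions $f,g$ is ``small'', being built out of one of the differences $\ph_{N,t}-\ph_t$, $V_N - b_0\delta$, $N\omega_N - \omega_\infty$, $N\lambda_N - 3b_0/(8\pi\ell^3)$, $f_N - 1$, or $k_{N,t}-k_t$. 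Standard bounds of the form $|\langle \xi_1, a^\sharp(f) a^\sharp(g)\xi_2\rangle| \le \|f\|_2 \|g\|_2 \|(\cN+1)^{1/2}\xi_1\|\|(\cN+1)^{1/2}\xi_2\|$ then produce the claimed estimate, with the $\cK$-factor in the norm of $\xi_1$ arising only from monomials in $\cG^\cK_{2,N,t}$ in which a derivative is paired with a creation or annihilation operator.

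The necessary quantitative inputs I would establish first are: (i) a Gr\"onwall-type estimate $\|\ph_{N,t}-\ph_t\|_{H^1} + \|\partial_t(\ph_{N,t}-\ph_t)\|_{L^2} \le C N^{-\alpha} e^{C|t|}$, obtained by subtracting (\ref{eq:NLSN}) from (\ref{eq:NLS0}) and controlling $\|(V_N f_N - b_0\delta) * |\ph_t|^2\|_\infty \le C N^{-\alpha}$ (an $N^{-\beta/2}$ contribution from the scaling $V_N \to b_0\delta$ tested against $\ph_t \in H^2$, and an $N^{\beta-1}$ contribution from $f_N - 1 = -\omega_N$ via (\ref{eq:fN-bound})); (ii) the Hilbert-Schmidt bound $\|k_{N,t}-k_t\|_{\rm HS} \le C N^{-\alpha} e^{C|t|}$, which via the absolutely convergent series (\ref{eq:chsh}) propagates to $\sinh_{k_{N,t}}-\sinh_{k_t}$ and $\cosh_{k_{N,t}}-\cosh_{k_t}$ in the same norm; (iii) the eigenvalue estimate $|N\lambda_N-3b_0/(8\pi\ell^3)|\le C N^{\beta-1}$ from \cite[Lemma 2.1]{BCS}, together with the pointwise comparison between $N\omega_N$ and $\omega_\infty$ inferred from (\ref{eq:fN-bound}) and (\ref{eq:winfty}); and (iv) the operator bound $V_N(x-y) \le C N^\beta (-\Delta_x - \Delta_y)$, which, via Cauchy-Schwarz, gains a factor $N^{-\beta/2}$ whenever $V_N$ appears paired with a term controllable by the kinetic energy.

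The main obstacle is the treatment of the most singular terms, namely those in $\cG^\cV_{2,N,t}$ and $\cG^\cK_{2,N,t}$ in which $V_N$ or a derivative is paired with $\text{sh}$ or $\text{ch}$: here $\|\nabla_1 k_{N,t}\|_{\rm HS}$ grows as $N^{\beta/2}$ by (\ref{eq:kNt-bd}), and the corresponding limiting kernel $\nabla_1 k_t$ is not even Hilbert-Schmidt (since $\omega_\infty$ has a $|x|^{-1}$ singularity). To handle this, I would exploit the decomposition $\text{sh}=\text{k}+\text{r}$ and $\text{ch}=1+\text{p}$ already built into (\ref{eq:defG2NtK}): the explicit singular kernel $\text{k}=-N\omega_N(\ph_{N,t}^2)$ combines with $N\lambda_N$ and with the quartic $V_N$-term in exactly the way dictated by the Neumann equation (\ref{eq:Neum}), producing the cancellation alluded to in the discussion of Prop. \ref{lem:cLj}. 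The same cancellation holds in the $N\to\infty$ limit (with $\omega_\infty$ solving the corresponding free boundary value problem on $|x|<\ell$ and $3b_0/(8\pi\ell^3)$ replacing $N\lambda_N$), so that subtracting the two identities reduces the remaining expressions to contributions involving only the smoother pieces $\text{r},\text{p}$ and the ``small'' differences estimated by (i)--(iv). The remaining routine monomials (including the time-derivative terms $(i\partial_t T_{N,t}) T_{N,t}^* - (i\partial_t T_t) T_t^*$, which need the $\partial_t$-part of (i)) are then bounded by the standard commutation estimates, yielding the claimed rate.
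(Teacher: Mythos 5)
Your proposal is correct and follows essentially the same route as the paper, which gives no independent argument here but defers entirely to \cite[Lemmas 5.1--5.4]{BCS}; those lemmas carry out precisely the term-by-term comparison you describe, resting on the quantitative inputs (i)--(iv). One small remark: the cancellation via the Neumann problem (\ref{eq:Neum}) does not need to be redone at this stage, since $\cG_{2,N,t}$ and $\cG_{2,t}$ are already written in their post-cancellation form (with the Laplacian acting on $\varphi^2$ rather than on $\omega_N$ in (\ref{eq:defG2NtK})), so the comparison reduces directly to estimating differences of the regularized kernels and of the scalar quantities $N\lambda_N$, $N\omega_N$, $V_N\ast(\cdot)$ and $\varphi_{N,t}$ against their limits.
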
 

The proof of Proposition \ref{lm:compare-xiinfty} can be found in \cite[Lemmas 5.1, 5.2, 5.3, 5.4]{BCS}, up to very minor modifications. 

\begin{proof}[Proof of Theorem \ref{thm:main2}]
As in the proof of Theorem \ref{thm:main1}, we first assume that 
\begin{equation}\label{eq:ass-str-thm2} \langle \xi_N, (\cH_N + \cN + \cN (\cN/N)^{2b}) \xi_N \rangle \leq C \end{equation}
uniformly in $N$. With $ \theta_N (t):=-\int_0^t d\tau\;\eta_N (\tau)$ we find 
		\[\begin{split} &\frac{d}{dt} \big\| \xi_{2,N,t}  - e^{i\theta_N (t)} \xi_{2,t}  \big\|^2 = 2 \im \big\langle  \xi_{2,N,t}, \big[\cG_{2,N,t}-\eta_N(t)-\cG_{2,t}\big] e^{i\theta_N (t)} \xi_{2,t} \big\rangle 
        \end{split}\]
Proposition \ref{lm:compare-xiinfty} above implies that 
		\[\begin{split}
        \frac{d}{dt} \big\| \xi_{2,N,t} &- e^{i\theta_N (t)} \xi_{2,t}  \big\|^2 \\ &\leq CN^{-\alpha}\exp(C\exp(C|t|)) \langle \xi_{2,N,t} , (\cK+\cN+1) \xi_{2,N,t} \rangle^{1/2} \langle \xi_{2,t}, (\cN+1) \xi_{2,t} \rangle^{1/2} \\
        &\leq CN^{-\alpha} \exp(C\exp(C|t|))
        \end{split}\]
Here we used Corollary \ref{lem:xiN} (with the additional assumption (\ref{eq:ass-str-thm2})) and the analogous bound
\begin{equation}\label{eq:xi2-bd} \langle \xi_{2,t}, (\cN+1) \xi_{2,t} \rangle \leq C \exp (C \exp (C|t|)) \end{equation}
for the limiting dynamics $\xi_{2,t}$. Eq. (\ref{eq:xi2-bd}) can be proven similarly to the bound for $\xi_{2,N,t}$ in Corollary \ref{lem:xiN} (with estimates for the generator $\cG_{2,t}$  
analogous to (\ref{eq:G2Nt-est})). Integrating in time, we conclude that
\[ \big\| \xi_{2,N,t}  - e^{i\theta_N (t)} \xi_{2,t}  \big\|^2 \leq C N^{-\alpha} \exp ( C \exp (C |t|)) \]
for all $t \in \bR$. Combining the last bound with Theorem \ref{thm:main1}, we obtain 
\[ \begin{split}  \| U_{\ph_{N,t}} \Psi_{N,t} - e^{-i\theta_N (t)} T^*_{N,t} \xi_{2,t} \| &\leq \| U_{\ph_{N,t}} \Psi_{N,t} - T^*_{N,t} \xi_{2,N,t} \| +  \| \xi_{2,N,t} -  e^{-i\theta_N (t)} \xi_{2,t} \| 
\\ &\leq C N^{-\alpha/2} \exp (C \exp (C |t|)) \end{split} \]  
This proves Theorem \ref{thm:main2} under the additional assumption (\ref{eq:ass-str-thm2}). To relax this condition, we proceed exactly as in the proof of Theorem \ref{thm:main1}. We omit the details. 
\end{proof}

Finally, Theorem \ref{thm:main3} follows immediately combining Theorem \ref{thm:main2} with the following proposition, which is a modification of the analysis in \cite[Section 6]{BS}. 
\begin{proposition}\label{lm:init} 
Assume Hypothesis A holds true. Let $\psi_{N} \in L^2_s (\bR^{3N})$ with reduced one-particle density $\gamma_{N}$ such that
\begin{equation}\label{eq:aN} a_N := \tr \, \left| \gamma_{N} - |\ph_0 \rangle \langle \ph_0| \right| \leq C N^{-1} \end{equation}
and 
\begin{equation}\label{eq:bN} b_N := \left| \frac{1}{N} \langle \psi_{N} , H_N \psi_{N} \rangle - \big[ \| \nabla \ph_0 \|_2^2 + \frac{1}{2} \langle \ph_0, [ V_N f_N * |\ph_0|^2] \ph_0 \rangle \big] \right| \leq C N^{-1}   \end{equation}
Set $\xi_N = T_{N,0} U_{\ph_0} \psi_{N}$ with the Bogoliubov transformation $T_{N,0}$ defined in (\ref{eq:Bog-trans}). Then, we have $\psi_{N} = U_{\ph_0}^* {\bf 1}^{\leq N} T^*_{N,0} \xi_N$ and 
\[ \langle \xi_N, \left[ \cK + \cN \right] \xi_N \rangle \leq C \]
uniformly in $N$.
\end{proposition}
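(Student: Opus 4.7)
The identity $\psi_N = U_{\ph_0}^* \mathbf{1}^{\leq N} T_{N,0}^* \xi_N$ is purely a matter of unitarity. Writing $\w\xi_N := U_{\ph_0}\psi_N$, the fact that $U_{\ph_0}$ is unitary from $L^2_s(\bR^{3N})$ onto $\cF^{\leq N}_{\perp \ph_0}$ forces $\w\xi_N \in \cF^{\leq N}_{\perp \ph_0}$, so $T_{N,0}^* \xi_N = \w\xi_N$ satisfies $\mathbf{1}^{\leq N} T_{N,0}^* \xi_N = \w\xi_N$, and applying $U_{\ph_0}^*$ gives the claim. For the bound on $\langle \xi_N, \cN \xi_N\rangle$, Proposition~\ref{lem:Bog-N} implies $T_{N,0}^* \cN T_{N,0} \leq 2 \cN + C$; combined with the identity $U_{\ph_0} a^*(\ph_0) a(\ph_0) U_{\ph_0}^* = N - \cN$ from (\ref{eq:U-rules}) and the trace-norm assumption (\ref{eq:aN}), this gives
\[ \langle \xi_N, \cN \xi_N\rangle \leq 2 \langle \w\xi_N, \cN \w\xi_N\rangle + C = 2 N(1 - \langle \ph_0, \gamma_N \ph_0\rangle) + C \leq 2 N a_N + C \leq C. \]

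The hard step is to bound $\langle \xi_N, \cK \xi_N\rangle$. The natural object to start from is the excitation Hamiltonian $\cL_N^{(0)} := U_{\ph_0} H_N U_{\ph_0}^*$, defined on $\cF^{\leq N}_{\perp \ph_0}$ and satisfying $\langle \psi_N, H_N \psi_N\rangle = \langle \w\xi_N, \cL_N^{(0)} \w\xi_N\rangle$, so that (\ref{eq:bN}) yields
\[ \big\langle \w\xi_N, \big( \cL_N^{(0)} - N E_{\ph_0} \big) \w\xi_N \big\rangle \leq N b_N \leq C, \qquad E_{\ph_0} := \|\nabla \ph_0\|^2 + \tfrac12 \langle \ph_0, [V_N f_N * |\ph_0|^2] \ph_0 \rangle. \]
The plan, following \cite[Sect.~6]{BS}, is to establish an operator inequality of coercivity type,
\[ \cK \leq C \, T_{N,0} \big( \cL_N^{(0)} - N E_{\ph_0} \big) T_{N,0}^* + C(\cN + 1), \]
valid on $T_{N,0} \cF^{\leq N}_{\perp \ph_0}$. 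Sandwiching it with $\xi_N = T_{N,0} \w\xi_N$ and using the bound on $\langle \xi_N, \cN \xi_N\rangle$ already proved then delivers $\langle \xi_N, \cK \xi_N\rangle \leq C$ at once.

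The main obstacle is proving this coercivity estimate. The conjugation $T_{N,0} \cL_N^{(0)} T_{N,0}^*$ is computed by a projected version of (\ref{eq:THT-bd}). The dangerous contributions of order $N^\beta$ arising from the conjugation of the kinetic energy, namely $\|\nabla_2 \sinh_{k_{N,0}}\|^2$, the cubic $N\Delta \omega_N$-term, and the quartic piece involving $V_N \omega_N$, are cancelled exactly by the $N$-sized ``correlation-energy'' contribution $\tfrac{N}{2}\langle \ph_0, [V_N(1-f_N) * |\ph_0|^2]\ph_0\rangle$ entering through $N E_{\ph_0}$. This cancellation is encoded in the scattering equation (\ref{eq:Neum}) defining $\omega_N$ and mirrors the mechanism underlying Proposition~\ref{lem:cLj}. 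What survives on the right-hand side is $\cK + \cV_N$ plus terms bounded by $\cN + 1$, which produces the desired coercivity. The argument is carried out in \cite[Sect.~6]{BS} for the Gross--Pitaevskii regime $\beta = 1$; for $\beta \in (0,1)$ the scaling errors of the form $N^{\beta - 1}$ are weaker and only cosmetic modifications are required.
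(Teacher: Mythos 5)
Your proposal is correct and follows essentially the same route as the paper: both arguments obtain the $\cN$-bound from Proposition \ref{lem:Bog-N} together with \eqref{eq:aN}, and both reduce the kinetic-energy bound to a coercivity estimate for the Bogoliubov-conjugated excitation Hamiltonian minus $N$ times the Hartree energy, with the crucial scattering-equation cancellation deferred to \cite[Section 6]{BS}. The only organizational difference is that the paper routes the energy bound through the generator $\cG_{N,0}$ built from the \emph{modified} generator $\wt{\cL}_{N,0}$ (so as to reuse the lower bound \eqref{eq:cGmm} from Proposition \ref{lem:wG}), which forces it to additionally estimate the difference $\wt{\cL}_{N,0}-\cL_{N,0}$ (the operators $D_1,\dots,D_4$) and to close the argument by absorbing a term $\tfrac14\langle \xi_N,\cH_N\xi_N\rangle$, whereas you work directly with $U_{\ph_0}H_N U_{\ph_0}^*$, avoiding the $D_j$ terms at the price of packaging the entire cancellation into one asserted operator inequality imported from \cite{BS}.
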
 

\begin{proof}  
First of all, we remark that, with Proposition \ref{lem:Bog-N} and  (\ref{eq:U-rules}), 
\[ \begin{split} \langle \xi_N, \cN \xi_N \rangle &= \langle T_{N,0} U_{\ph_0} \psi_N, \cN  T_{N,0} U_{\ph_0} \psi_N \rangle \\ &\leq C \langle  U_{\ph_0} \psi_N, (\cN+1) U_{\ph_0} \psi_N \rangle \\ &= C \left[ N- \langle \psi_N, a^* (\ph_0) a(\ph_0) \psi_N \rangle \right] + C \\ &= C N \left[ 1 - \langle \ph_0, \gamma_N \ph_0 \rangle \right] + C  \leq C N a_N + C.  \end{split} \]
To bound $\langle \xi_N, \cK \xi_N \rangle$, we use $\cK \leq \cH_N$ and the first bound in (\ref{eq:cGmm}), which implies that  
\[ \begin{split} \langle \xi_N, \cH_N \xi_N \rangle &\leq 2 \langle \xi_N, (\cG_{N,0} - \eta_N (0))\xi_N \rangle + C \langle \xi_N , (\cN+1) \xi_N \rangle \\ &\leq 2 \langle \xi_N, (\cG_{N,0} - \eta_N (0))\xi_N \rangle + C N a_N +C. \end{split} \]
Hence, the proposition follows from (\ref{eq:aN}) and (\ref{eq:bN}) if we can show that 
\begin{equation}\label{eq:iniest} 
\langle \xi_N, \big[\cG_{N,0}-\eta_{N}(0) \big] \xi_N \rangle \leq \frac14\langle \xi_N, \cH_N \xi_N \rangle  + CN (a_N+b_N)+ C.  \end{equation}

To prove (\ref{eq:iniest}) we observe that, from the definition (\ref{def:wcG}) of $\cG_{N,0}$ and since $\xi_N =  T_{N,0} U_{\ph_0} \psi_{N}$, 
\begin{equation}\label{eq:G-eta1} \begin{split} \langle \xi_N, \big[\cG_{N,0}-\eta_N(0)\big]\xi_N \rangle  =& \langle U_{\ph_0} \psi_N, \big[ T_{N,0}^*(i\partial_t T_{N,t})_{|t=0}  +{\cL}_{N,0} -\eta_N(0) \big] U_{\ph_0}\psi_N\rangle \\
        &+  \langle U_{\ph_0}\psi_N, \big[\widetilde{\cL}_{N,0}-{\cL}_{N,0}\big]  U_{\ph_0}\psi_N\rangle. 
        \end{split}\end{equation}
{F}rom the proof of Lemma 6.2 and of Theorem 1.1 in \cite[Section 6]{BS}, we find 
        \begin{equation}\label{eq:inibnd1}\begin{split}
        \langle U_{\ph_0}\psi_N, \big[T_{N,0}^*(i\partial_t T_{N,t})_{|t=0}  +{\cL}_{N,0} -\eta_N(0)\big]U_{\ph_0}\psi_N\rangle \leq CN (a_N+b_N) + C. 
        \end{split}\end{equation}
Therefore, it is enough to consider the second term on the r.h.s. of (\ref{eq:G-eta1}). From the definitions (\ref{eq:wLNt}) of $ \widetilde{\cL}_{N,0}$ and (\ref{eq:cLNt-2}) of $\cL_{N,0}$, we have (see also (\ref{eq:LwLm})) \[ \wt{\cL}_{N,0} - \cL_{N,0} = \sum_{j=1}^4 D_j,\] with the operators
\[ \begin{split} 
D_1 = \; &\sqrt{N} \Big[a^*\big(Q_{N,0}\big[(V_N \omega_N)\ast|\varphi_0|^2\big]\varphi_0 \big) - a^* (Q_{N,0} [V_N *|\ph_{0}|^2] \ph_{0}) (\cN/N)  \Big] \\ &\times (G_b (\cN/N)-\sqrt{1-\cN/N}) + \text{h.c.} \\
D_2 = \; &\frac{1}{2} \int dxdy K_{2,N,0} (x;y) a_x^* a_y^* \frac{(N-\cN) - \sqrt{(N-\cN)(N-1-\cN)}}{N} + \text{h.c.} \\
D_3 = \; &\frac{1}{\sqrt{N}} \int dx dy (Q_{N,0} \otimes Q_{N,0} V_N Q_{N,0} \otimes 1) (x,y ; x' ,y') a_x^* a_y^* a_{x'} \ph_0 (y')  \\ &\times (G_b (\cN/N)-\sqrt{1-\cN/N}) + \text{h.c.} \\
D_4 = \; &C_b \cN ( \cN / N)^{2b}. \end{split} \]
Using $|\sqrt{1-z} - G_b (z)| \leq C z^{b+1}$ for all $z > 0$, we easily arrive at 
\begin{equation}\label{eq:D1-init} \big| \langle U_{\ph_0} \psi_N, D_1 U_{\ph_0} \psi_N \rangle \big| \leq C \langle U_{\ph_0} \psi_N, \cN U_{\ph_0} \psi_N \rangle \leq CNa_N + C. \end{equation} 
Since, for $z \in (0,1)$,   
\[ \big| (1-z) - \sqrt{(1-z)(1-z-1/N)} \big| \leq C/N \]
we obtain that, for any $\delta > 0$ (recall that $ Q_{N,0}$ has no effect on states in $\cF^{\leq N}_{\bot \varphi_0}$), 
\[ \begin{split} |\langle U_{\ph_0} &\psi_N , D_2 U_{\ph_0} \psi_N \rangle | \\  &\leq  \int dxdy\; N^{3\beta-1} V(N^\beta(x-y))\big(\delta^{-1}N |\varphi_0(x)|^2|\varphi_0(y)|^2 +  \delta N^{-1}\|a_xa_y U_{\ph_0}\psi_N\|^2\big)\\
        &\leq \delta N^{-1}\langle U_{\ph_0}\psi_N, \cV_N U_{\ph_0}\psi_N\rangle + C.
        \end{split}\]
As in Step 3 of the proof of Lemma \ref{lem:Phi}, we can estimate 
        \begin{equation}\label{eq:inibnd2}\begin{split}
        \delta N^{-1}\langle U_{\ph_0}\psi_N, \cV_N U_{\ph_0}\psi_N\rangle = &\; \delta N^{-1}\langle \xi_N, T_{N,0}\cH_N T_{N,0}^* \xi_N\rangle\\
        \leq&\; \delta \langle \xi_N, \cH_N  \xi_N\rangle + CN a_N + C.
        \end{split}\end{equation}
Choosing, for example, $\delta = 1/8$, we conclude that
\begin{equation}\label{eq:D2-init} |\langle U_{\ph_0} \psi_N , D_2 U_{\ph_0} \psi_N \rangle |  \leq  \frac{1}{8} \langle \xi_N, \cH_N  \xi_N\rangle + CN a_N + C. \end{equation}

As for the expectation of $D_3$, we proceed similarly as in the proof of Proposition \ref{lem:wG} (in particular, in the bound for the operator $B_2$). Using again the bound $|\sqrt{1-z}-G_b(z)| \leq C z^{b+1}$ for all $z \in (0;1)$,  we find that, for every $\delta > 0$ there exists $C > 0$ such that 
\begin{equation*}
        \begin{split}
        \big| &\langle U_{\ph_0} \psi_N, D_3 U_{\ph_0} \psi_N \rangle \big| \\
        &= \frac{1}{\sqrt N} \bigg|\int dx dy\; V_N(x-y)\varphi_0 (y) \langle\xi_N, T_{N,0} \, a_x^*a_y^*a_x\big(G_b(\cN/N)-\sqrt{1-\cN/N}\big)T_{N,0}^* \xi_N\rangle \bigg|\\
        &\leq \delta \langle \xi_N, \cV_N  \xi_N\rangle +  C \langle \xi_N, (\cN+1) \xi_N\rangle. 
        \end{split}        
        \end{equation*}
Choosing $\delta = 1/8$, we obtain
\begin{equation}\label{eq:inibnd3} \big| \langle U_{\ph_0} \psi_N, D_3 U_{\ph_0} \psi_N \rangle \big| \leq \frac{1}{8} \langle \xi_N, \cH_N \xi_N \rangle + C N a_N + C. \end{equation}

Finally, since $U_{\ph_0} \psi_N$ has at most $N$ particles, 
we easily find that
\[ 0 \leq \langle  U_{\ph_0} \psi_N, D_4 U_{\ph_0} \psi_N \rangle 
\leq C N a_N + C. \]

Combining the last bound with (\ref{eq:D1-init}), (\ref{eq:D2-init}) and (\ref{eq:inibnd3}), we conclude that
\[ | \langle  U_{\ph_0} \psi_N, \big[ \wt{\cL}_{N,0} - \cL_{N,0} \big]  U_{\ph_0} \psi_N \rangle | \leq \frac{1}{4} \langle \xi_N, \cH_N \xi_N \rangle + C N a_N + C \]
Together with (\ref{eq:inibnd1}) and (\ref{eq:G-eta1}), we obtain (\ref{eq:iniest}).
\end{proof}

\end{document}